\documentclass[11pt,a4paper]{article}
\usepackage[margin=2.2cm]{geometry}
\usepackage{amsmath, amssymb, amsthm}
\usepackage{booktabs, array}
\usepackage{graphicx}
\usepackage{float}
\usepackage[hidelinks]{hyperref}
\usepackage[T1]{fontenc}
\usepackage{microtype}
\usepackage{enumitem}
\usepackage{tikz}
\usetikzlibrary{calc}

\newtheorem{theorem}{Theorem}
\newtheorem{proposition}[theorem]{Proposition}
\newtheorem{lemma}[theorem]{Lemma}
\newtheorem{corollary}[theorem]{Corollary}
\theoremstyle{definition}

\theoremstyle{remark}
\newtheorem{remark}[theorem]{Remark}

\newcommand{\dd}{\,d}

\newif\ifblind\blindfalse

\title{Closed forms and open obstructions:\\
the sample variance of three observations}
\ifblind
\author{}
\else
\author{%
  Remus Osan\thanks{Gilead Sciences, Foster City, CA 94404, USA.
    \texttt{remus.osan@gilead.com}. Corresponding author.}
  \and Kevin T. Chu\thanks{Velexi Corporation, Burlingame, CA 94010, USA.
    \texttt{kevin@velexi.com}.}
  \and Ron Yu\thanks{Gilead Sciences, Foster City, CA 94404, USA.
    \texttt{ron.yu@gilead.com}.}%
}
\fi
\date{}

\ifblind
\else
\fi
\begin{document}
\maketitle

\begin{abstract}
The exact distribution of the sample variance for $n=3$ is known since Rietz
(1931) for the uniform parent, and Royen (2007, 2008) gave a Fourier series for
any bounded continuous parent.  Neither settles which parents admit a finite
closed form, nor which special functions it forces.  In coordinates aligned with the cube diagonal the variance constraint becomes a
cylinder and the cube cross-section a polygon with $S_3$ symmetry, hexagonal over the central
band of the diagonal and triangular near either corner; the CDF is the
volume of their intersection, the parent density entering as a weight.  A closure hierarchy is
then organized by the minimal function class containing the parent density: polynomial parents on any bounded interval always close in
elementary terms, a theorem for the whole class; in the negative direction a single explicit
parent already suffices, and for a rational one we \emph{prove} the CDF is not elementary, the
obstruction being an irreducible dilogarithmic part.  Beyond those two theorems the hierarchy is
a set of example-specific obstructions rather than a classification: for an algebraic parent the
radial first-kind differential is shown non-elementary on a genus-two curve, and the
exponential row is a conjecture supported by the Bessel structure of its radial integral.  For
the uniform parent we obtain a two-piece formula bifurcating at $Y=1/4$, where the variance
disk first circumscribes the hexagonal cross-section at the cube center.  For the singular
arcsine parent we derive both endpoint laws in closed form and give a six-term approximation accurate to about $10^{-3}$, an accuracy Royen's universal series reaches at about a hundred terms.
\end{abstract}

\medskip
\noindent\textbf{Keywords:} sample variance; exact distribution theory;
bounded support; Beta distribution; closed form; polylogarithm;
hyperelliptic integral; content uniformity.

\smallskip
\noindent\textbf{MSC2020:} Primary 62E15; Secondary 60E05, 33E20, 62P30.

%=====================================================================
\section{Introduction}\label{sec:intro}
%=====================================================================

\paragraph{Motivation.}
Content uniformity (CU) testing in pharmaceutical manufacturing
assesses whether individual dosage units in a batch have
acceptably low variability.  Regulatory guidelines (USP $\langle
905\rangle$~\cite{usp905}, Ph.~Eur.~2.9.40~\cite{pheur2940}) base acceptance on
the sample variance of small samples ($n = 10$ or $n = 30$) drawn
from the batch~\cite{williamsHauck2002}.  Standard practice assumes a Gaussian
parent~\cite{bergum1990}, but empirical evidence shows that the
true within-batch density of drug content per dosage unit is
\emph{bounded} (content cannot be negative or exceed a physical
maximum) and typically \emph{positively skewed}, especially for
potent low-dose formulations~\cite{orr1978}.  First-principles
models based on statistical mechanics of granular mixtures predict
a parent density that is bounded and unimodal, with shape
controlled by particle-size ratios and drug
loading~\cite{rane2012}.  The Beta family on $[0,1]$---densities
$f(x) = x^{a-1}(1-x)^{b-1}/\mathrm{B}(a,b)$ with $a, b > 0$, where
$\mathrm{B}(a,b) = \int_0^1 t^{a-1}(1-t)^{b-1}\dd{t} = \Gamma(a)\Gamma(b)/\Gamma(a+b)$,
which includes symmetric shapes Beta$(a,a)$, positively skewed shapes
Beta$(a,b)$ with $a < b$, and negatively skewed shapes with
$a > b$---provides a natural parametric
class for modeling these within-batch densities after
normalization to the label-claim interval.  The worked examples below use
Beta$(2,1)$ and Beta$(3,2)$, which are negatively skewed; the variance CDF is unchanged under
the reflection $x\mapsto 1-x$, which swaps $a$ and $b$, so each of them stands equally for its
positively skewed mirror Beta$(1,2)$ and Beta$(2,3)$.

The exact distribution of the sample variance under such
non-Gaussian parents underpins the probability of passing the CU
test, the operating characteristic (OC) curve, and scientifically
justified acceptance limits.  Evaluating those at general $n$ requires
the joint law of $(\bar{X}, s)$, which the companion
engine~\cite{enginecompanion} computes numerically; what closed forms
at $n=3$ add is the exact reference those numerics are validated
against, together with the tail laws and closure obstructions that no
numerical inversion supplies.
With the advent of near-infrared (NIR) spectroscopy for real-time
release testing~\cite{peinado2014, karner2023}, individual-unit
assay data is now available at the scale of hundreds or thousands
of tablets per batch, making it feasible to fit the parent density
directly and apply the closed-form CDFs derived here.

The same mathematical problem---the exact distribution of $s^2$
from a small sample drawn from a bounded, non-Gaussian
parent---arises in several other regulatory and industrial
settings.  Process
capability indices ($C_{pk}$, $P_{pk}$) in semiconductor and
precision manufacturing place $s$ directly in the denominator of
a threshold formula computed from $n = 20$--$50$ measurements of
a bounded quality characteristic, yet standard confidence
intervals assume Gaussian
tails~\cite{cpk2025}.  Reference-scaled average bioequivalence
(RSABE) for highly variable drugs scales the acceptance limits by
$s_{WR}^2$ estimated from $n = 12$--$24$ subjects in a crossover
study, where pharmacokinetic parameters are lognormal and bounded
below~\cite{rsabe2012}.  In each case, the combination of small
$n$, a bounded or asymmetric parent, and a regulatory formula
that involves $s$ or $s^2$ makes the exact distribution of the
sample variance practically consequential.

\paragraph{Why exactness matters: the normal-theory approximation fails at small $n$.}
The default in every setting above is the normal-theory sampling distribution: if the data
were Gaussian with variance $\sigma^2$, then $(n-1)s^2/\sigma^2 \sim \chi^2_{n-1}$, which at
$n = 3$ is $2 s^2/\sigma^2 \sim \chi^2_2$, i.e.\ $F_{\chi^2}(Y) = 1 - e^{-Y/\sigma^2}$.  For a
bounded parent on $[0,1]$ this is not merely inaccurate---it is \emph{qualitatively wrong}.
The exact sample variance is supported on the bounded interval $[0,\tfrac13]$: the maximum
$s^2 = \tfrac13$ is attained at the configuration $\{0,0,1\}$ and is the \emph{same hard
ceiling for every parent} on $[0,1]$, whereas the $\chi^2$ law has unbounded support.
Figure~\ref{fig:chi2-fail} and Table~\ref{tab:chi2-fail} quantify the gap at $n = 3$ (and we
grant the approximation the true $\sigma^2$, its most favorable case):
\begin{itemize}[topsep=2pt,itemsep=1pt]
  \item It assigns positive probability to \emph{impossible} variances:
    $\Pr_{\chi^2}(s^2 > \tfrac13)$ is $1.8\%$ for the uniform parent and $6.9\%$ for the
    arcsine.
  \item Its upper quantiles exceed the maximum attainable variance: the $\chi^2$ $99$th
    percentile of $s^2$ is $0.384$ for the uniform parent---above the ceiling $\tfrac13$---and
    for the arcsine even the $95$th percentile ($0.374$) is unattainable.  The exact $99$th
    percentiles are $0.253$ and $0.312$, so the approximation overstates them by $52\%$ and
    $85\%$.
  \item The Kolmogorov distance $\sup_Y|F - F_{\chi^2}|$---the largest gap between the two
    CDFs at any threshold---reaches $0.064$ (uniform) and $0.103$ (arcsine).  This is the
    metric an acceptance rule cares about: such a rule reports a CDF evaluated at a limit,
    so whatever limit a specification names, the error incurred by using the normal-theory
    law is bounded by this number and, at the worst limit, equals it.  It is also invariant
    under monotone reparametrization, so the same bound holds whether the limit is written
    on $s^2$ or on $s$.  An arcsine parent can therefore put an acceptance probability ten
    percentage points from its true value.
\end{itemize}
At $n = 3$ there is no central-limit rescue, and the error grows with departure from
normality---most severe for the U-shaped arcsine.  Exactness is therefore a requirement, not a
refinement: acceptance probabilities, OC curves, and tolerance limits read off the $\chi^2$ law
can be wrong by tens of percent and can assign risk to variances that cannot physically occur.

\begin{figure}[t]
\centering
\includegraphics[width=\linewidth]{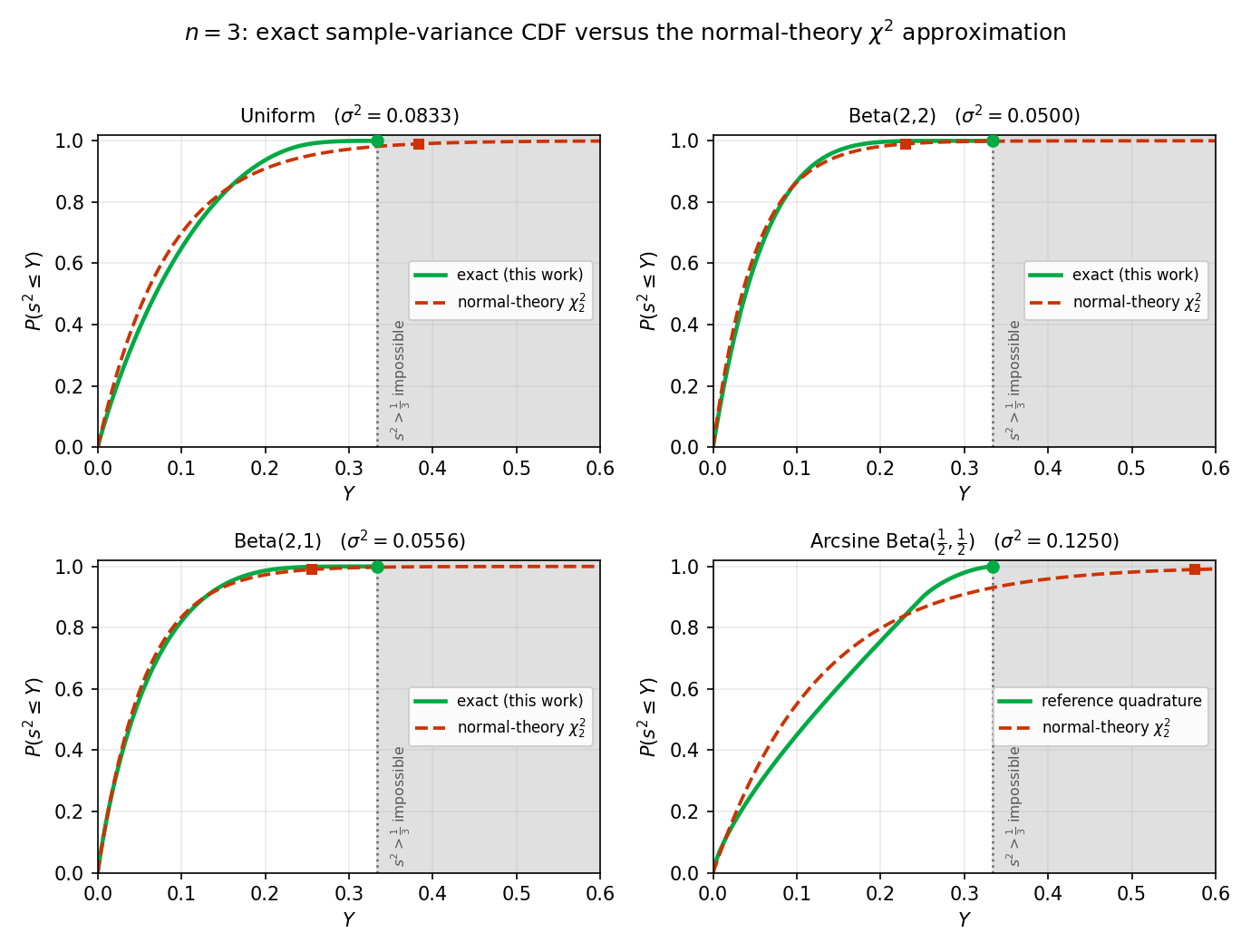}
\caption{Exact $n=3$ sample-variance CDF (green) versus the normal-theory
$\chi^2_2$ approximation (red dashed), for four bounded parents.  \emph{Provenance:} the
uniform, Beta$(2,2)$ and Beta$(2,1)$ curves are the closed forms of this paper,
Eqs.~\eqref{eq:F-uniform-full}, \eqref{eq:beta22-R12}--\eqref{eq:beta22-R3}
and~\eqref{eq:beta21} respectively; the arcsine parent has no closed form, and its curve is the
deterministic quadrature reference of Appendix~\ref{app:arcsine-ref}, labeled accordingly in
the panel.  No curve in this figure is simulated.  The shaded region
$Y > \tfrac13$ is impossible---no sample of three points in $[0,1]$ has $s^2 > \tfrac13$---yet
the $\chi^2$ law rises through it.  The exact curve reaches $1$ at the ceiling
$(\tfrac13,1)$, marked by the green dot, and is constant at $1$ beyond it; only its plotted
support ends there.  The red square marks the $\chi^2$ $99$th percentile; for the uniform
and arcsine parents it lies in the impossible region.  The mismatch worsens with
departure from normality (bottom right, arcsine).}
\label{fig:chi2-fail}
\end{figure}

\begin{table}[h]
\centering
\renewcommand{\arraystretch}{1.2}
\begin{tabular}{lccccc}
\toprule
Parent & $\sigma^2$ & $\sup_Y|F-F_{\chi^2}|$ & exact $q_{.99}$ & $\chi^2$ $q_{.99}$ &
  $\Pr_{\chi^2}(s^2>\tfrac13)$ \\
\midrule
Uniform & $1/12$ & $0.064$ & $0.253$ & $0.384$ & $1.8\%$ \\
Beta$(2,2)$ & $1/20$ & $0.041$ & $0.180$ & $0.230$ & $0.1\%$ \\
Beta$(2,1)$ & $1/18$ & $0.026$ & $0.208$ & $0.256$ & $0.2\%$ \\
Arcsine, Beta$(\tfrac12,\tfrac12)$ & $1/8$ & $0.103$ & $0.312$ & $0.576$ & $6.9\%$ \\
\bottomrule
\end{tabular}
\caption{Failure of the normal-theory ($\chi^2_2$) approximation for the $n=3$ sample variance
across bounded parents.  Exact values are high-accuracy (the uniform row matches the closed
form of Section~\ref{sec:cdf-uniform} to four digits); the $\chi^2$ approximation is granted the
true $\sigma^2$.  Its $99$th percentiles for the uniform and arcsine parents ($0.384$, $0.576$)
exceed the maximum attainable variance $\tfrac13$.}
\label{tab:chi2-fail}
\end{table}

\paragraph{The problem.}
Let $X_1, \ldots, X_n$ be i.i.d.\ from a continuous parent density
$f$ on $[0,1]$, the unit interval being taken without loss of
generality: any bounded interval reduces to it by the affine
rescaling of Remark~\ref{rem:affine}.  The unbiased sample variance
\begin{equation}
  s^2 \;=\; \frac{1}{n-1}\sum_{i=1}^{n}(X_i - \bar{X})^2
  \;=\; \frac{1}{n(n-1)}\sum_{i < j}(X_i - X_j)^2
  \label{eq:s2-def}
\end{equation}
is the fundamental dispersion statistic.  Its exact distribution is
available in closed form for the Gaussian parent at every $n$, as a scaled
chi-squared, and for isolated bounded parents at small $n$, Rietz's uniform $n=3$ result being
the classical instance; no general formula covers arbitrary parents and sample sizes.
For non-Gaussian parents, the constraint surface $\{s^2 = Y\}$ is a
quadric that intersects the support polytope $[0,1]^n$ in a piecewise
geometry whose complexity grows rapidly with $n$.  Even at $n = 3$,
deriving the CDF $F(Y) = \Pr(s^2 \le Y)$ requires resolving this
intersection explicitly.

\paragraph{Prior work.}
Rietz~\cite{rietz1931} gave the closed-form density of the sample
standard deviation for $n = 3$, $f \equiv 1$ (uniform), as a two-piece
piecewise-elementary expression (a parabola joined to an $\arccos$
branch).\footnote{This $n=3$ result
originated as a problem in solid geometry---the area of a diagonal
cylinder's surface lying inside a unit cube---posed by P.~R. Rider
(Problem~3413, \textit{Amer.\ Math.\ Monthly} \textbf{37} (1930), 157)
and solved by Rietz (\textit{ibid.}\ \textbf{38} (1931), 292--295);
Rietz reinterpreted the same integral as a probability in~\cite{rietz1931}.}
The exact sampling distribution of $s^2$ for small $n$ is also treated in
the classical references of Cram\'er~\cite{cramer1946} and Kendall and
Stuart~\cite{kendallStuart1977}; Cram\'er's \S29.3 reproduces the $n = 3$
uniform standard-deviation density.
Ali and Richards~\cite{aliRichards1975} recast the uniform-parent
problem geometrically as the common content of the unit $n$-cube and
a diagonal cylinder, $\Pr(S \le r/\sqrt{n}) = V_n(r,L)/L^n$, and
established the small-radius polynomial form of this volume.
Royen~\cite{royen2007a, royen2007c, royen2007b} obtained exact representations of
the CDF of $s^2$ for gamma and uniform parents at general $n$ via
Laplace transforms and characteristic functions, and
Royen~\cite{royen2008} extended this to any bounded continuous parent
on $[0,1]$: the CDF $F_Q(x) = \Pr(Q \le x)$, where $Q = (n-1)s^2$,
is represented as an infinite Fourier sine series
\begin{equation}
  F_Q(x) = \frac{2}{\pi} \sum_{k=1}^{\infty}
  \frac{\operatorname{Im}\!\bigl(\hat{f}_Q(t_k)\bigr)}{k}
  \bigl(1 - \cos(t_k x)\bigr),
  \qquad t_k = \frac{k\pi}{\sup Q},
  \label{eq:royen-series}
\end{equation}
valid for all $x \in [0, \sup Q]$ and all $n \ge 2$.  Each Fourier
coefficient $\hat{f}_Q(t_k)$ is a closed-form expression (involving
error functions, exponentials, and powers for polynomial parents), but
evaluating $F_Q$ at any particular $x$ requires summing the series to a
tolerance and bounding the truncation error.  This is an exact
\emph{infinite-dimensional} representation; it does not produce finite
piecewise-elementary CDFs.

\paragraph{Contribution.}
We supply the exact $n=3$ distribution that the normal-theory approximation fails to deliver,
for any bounded parent.  The central result is a \emph{closure hierarchy}
(Table~\ref{tab:hierarchy}), organized by the \emph{minimal} function class containing the
parent density---polynomial, rational-but-not-polynomial, algebraic-but-not-rational,
transcendental---so that the rows partition the parents rather than nest.  Its one class-wide
theorem is that polynomial parents on any bounded interval always close in elementary
functions (Theorem~\ref{thm:poly-closure}, stated on $[0,1]$ because the interval is a
normalization---Remark~\ref{rem:affine})---so the entire Beta-polynomial family used to model bounded
data has a finite, exactly computable CDF.  Beyond that, the hierarchy records what is proved
for explicit parents---non-elementarity of the assembled CDF for a rational parent, and of the
radial obstruction integral for the arcsine parent---and what is identified but conjectural
(the exponential and Gaussian rows).  The uniform parent yields a two-piece CDF with a single bifurcation at $Y = 1/4$.
The approach is classical---change coordinates, exploit symmetry, evaluate integrals---and needs
no combinatorial decomposition or symbolic-computation engine.

\paragraph{Outline.}
Section~\ref{sec:coords} introduces the diagonal-orthogonal coordinate
system.  Section~\ref{sec:geometry} describes the hexagonal geometry.
Section~\ref{sec:sextant} builds the sextant chart and the CDF integral,
and records the passage from the variance to the standard deviation.
Section~\ref{sec:cdf-uniform} derives the uniform CDF and the
corresponding law of $S$.
Section~\ref{sec:polynomial} proves the polynomial closure theorem.
Section~\ref{sec:hierarchy} presents the closure hierarchy for
non-polynomial parents.
Section~\ref{sec:singular} treats the singular parents by singularity
isolation, computing them in a handful of terms.
Section~\ref{sec:discussion} collects the discussion: what the
coordinate system buys, what lies beyond the variance marginal, the
extension to $n \ge 4$ and its limits, the acceptance-testing
applications, and the division of labour between exact formulas and
numerical evaluation.

%=====================================================================
\section{The diagonal-orthogonal coordinate system}\label{sec:coords}
%=====================================================================

\subsection{Orthonormal basis}

The main diagonal of $[0,1]^3$ is the direction $\mathbf{d} =
(1,1,1)/\sqrt{3}$.  We choose the orthonormal basis
\begin{equation}
  \mathbf{e}_1 = \frac{1}{\sqrt{2}}(1, -1, 0), \qquad
  \mathbf{e}_2 = \frac{1}{\sqrt{6}}(1, 1, -2), \qquad
  \mathbf{e}_3 = \frac{1}{\sqrt{3}}(1, 1, 1),
  \label{eq:basis}
\end{equation}
and define diagonal-orthogonal coordinates
$(\xi_1, \xi_2, \xi_3) = (X \cdot \mathbf{e}_1,\,
X \cdot \mathbf{e}_2,\, X \cdot \mathbf{e}_3)$:
\begin{equation}
  \xi_1 = \frac{X_1 - X_2}{\sqrt{2}}, \qquad
  \xi_2 = \frac{X_1 + X_2 - 2X_3}{\sqrt{6}}, \qquad
  \xi_3 = \frac{X_1 + X_2 + X_3}{\sqrt{3}} = \sqrt{3}\,\bar{X}.
  \label{eq:xi}
\end{equation}
The inverse transformation is
\begin{equation}
  \begin{pmatrix} X_1 \\ X_2 \\ X_3 \end{pmatrix}
  = \frac{\xi_3}{\sqrt{3}} \begin{pmatrix} 1 \\ 1 \\ 1 \end{pmatrix}
  + \frac{\xi_1}{\sqrt{2}} \begin{pmatrix} 1 \\ -1 \\ 0 \end{pmatrix}
  + \frac{\xi_2}{\sqrt{6}} \begin{pmatrix} 1 \\ 1 \\ -2 \end{pmatrix}.
  \label{eq:inverse}
\end{equation}
The change of variables carries a Jacobian---the determinant of the matrix
of partial derivatives, the factor by which it scales volume elements, so that
$\dd{X_1}\dd{X_2}\dd{X_3} = |J| \dd{\xi_1}\dd{\xi_2}\dd{\xi_3}$.  Here
$|J| = 1$, the map being an orthonormal rotation: volumes, and hence the CDF
integrals below, may be computed in either coordinate system without a correction
factor.

\subsection{Variance as a circle}\label{sec:variance-circle}

Expanding~\eqref{eq:s2-def} in the new coordinates:
\begin{equation}
  \sum_{i=1}^{3}(X_i - \bar{X})^2
  = \xi_1^2 + \xi_2^2,
  \label{eq:parseval}
\end{equation}
by Parseval's theorem (the perpendicular coordinates capture exactly
the deviations from the mean).  Hence
\begin{equation}
  s^2 = \frac{1}{2}(\xi_1^2 + \xi_2^2).
  \label{eq:s2-circle}
\end{equation}
The constraint $s^2 \le Y$ is a \textbf{disk of radius}
$R = \sqrt{2Y}$ in the $(\xi_1, \xi_2)$ plane, independent of
$\xi_3$.  Geometrically, $\{s^2 \le Y\}$ is a \textbf{cylinder of
radius $\sqrt{2Y}$} around the main diagonal, with the
diagonal as its axis.

%=====================================================================
\section{Hexagonal geometry of the cube cross-section}%
\label{sec:geometry}
%=====================================================================

\subsection{Cross-sections perpendicular to the diagonal}

At fixed $\xi_3$ (fixed position along the diagonal), the six cube-face
constraints $0 \le X_i \le 1$ define a convex polygon in the
$(\xi_1, \xi_2)$ plane.  Substituting the
inverse~\eqref{eq:inverse}, the six constraints become:
\begin{alignat}{2}
  X_1 \ge 0 &:\quad \xi_1/\sqrt{2} + \xi_2/\sqrt{6}
    \ge -\xi_3/\sqrt{3}, &\qquad
  X_1 \le 1 &:\quad \xi_1/\sqrt{2} + \xi_2/\sqrt{6}
    \le 1 - \xi_3/\sqrt{3}, \notag\\
  X_2 \ge 0 &:\quad -\xi_1/\sqrt{2} + \xi_2/\sqrt{6}
    \ge -\xi_3/\sqrt{3}, &\qquad
  X_2 \le 1 &:\quad -\xi_1/\sqrt{2} + \xi_2/\sqrt{6}
    \le 1 - \xi_3/\sqrt{3}, \notag\\
  X_3 \ge 0 &:\quad -2\xi_2/\sqrt{6}
    \ge -\xi_3/\sqrt{3}, &\qquad
  X_3 \le 1 &:\quad -2\xi_2/\sqrt{6}
    \le 1 - \xi_3/\sqrt{3}.
  \label{eq:six-faces}
\end{alignat}
Each pair $(X_i \ge 0, X_i \le 1)$ defines a slab of width $1$
perpendicular to one of three directions spaced $120^\circ$ apart in
the $(\xi_1, \xi_2)$ plane.  Their intersection is a convex polygon whose vertex count depends
on the height.  Writing the slice as $X_1+X_2+X_3=3a$, the section is a triangle for
$0<a<\tfrac13$, a hexagon for $\tfrac13<a<\tfrac23$, a triangle again for $\tfrac23<a<1$, and
degenerate at $a\in\{0,\tfrac13,\tfrac23,1\}$: in the outer bands only one constraint of each
pair is active.  The section is a \emph{regular} hexagon only at the midpoint $a=\tfrac12$.  We
write ``polygonal section'' for the general case and reserve ``hexagon'' for the central band,
where the relevant geometry of Sections~\ref{sec:sextant}--\ref{sec:cdf-uniform} lives.  The
head-on projection of the whole cube along the diagonal is a different object, a regular hexagon
at every height, and should not be confused with the sections themselves.

\subsection{Edge distances and symmetry}\label{sec:edge-distances}

Reparametrize the diagonal coordinate as $a = \xi_3/\sqrt{3} \in
[0,1]$, so $a$ is the fractional position along the diagonal ($a = 0$
at vertex $(0,0,0)$, $a = 1$ at $(1,1,1)$).  The six
constraints~\eqref{eq:six-faces} define edges at perpendicular
distances from the origin:
\begin{itemize}[topsep=2pt]
  \item \textbf{Three ``lower'' edges} (from $X_i = 0$): each at
    distance $d_{\rm lo}(a) = a\sqrt{3/2}$ from the origin, with
    outward normals at angles $90^\circ$, $210^\circ$, $330^\circ$.
  \item \textbf{Three ``upper'' edges} (from $X_i = 1$): each at
    distance $d_{\rm hi}(a) = (1-a)\sqrt{3/2}$, normals at
    $30^\circ$, $150^\circ$, $270^\circ$.
\end{itemize}
For $a \le 1/2$: $d_{\rm lo} \le d_{\rm hi}$ (lower edges are closer).
For $a = 1/2$ (cube center): $d_{\rm lo} = d_{\rm hi} = \sqrt{3/8}$
and the hexagon is regular, with
\begin{equation}
  \text{apothem (perpendicular distance from center to edge)}
    = \sqrt{\frac{3}{8}}, \qquad
  \text{circumradius} = \frac{1}{\sqrt{2}}.
  \label{eq:hex-dimensions}
\end{equation}
The larger regular hexagon seen in Figure~\ref{fig:diag-view}b, with
apothem $1/\sqrt{2}$ and circumradius $\sqrt{2/3}$, is the
\emph{shadow} of the whole cube---the union of all cross-sections over
$a$---not any single cross-section.
The hexagon has $3$-fold rotational symmetry (permuting $X_1, X_2, X_3$)
at every $a$, and the cube has an additional $a \leftrightarrow 1 - a$
reflection symmetry ($X_i \to 1 - X_i$).

\subsection{Critical radii and the single bifurcation}%
\label{sec:critical-radii}

The disk of radius $R = \sqrt{2Y}$ is centered at the origin.  As $Y$
increases, $R$ meets two features of the family of cross-sections.  It
first reaches the \emph{edges}: the apothem $d_{\rm lo}(a)$ is largest
at $a = 1/2$, so the disk becomes tangent to the nearest edge there
when $R = \sqrt{3/8}$, i.e.\ $Y = 3/16$; Figure~\ref{fig:xsection} draws
the apothem and the circumradius side by side.  This tangency creates
\emph{no} new regime---the shaving integral of \S\ref{sec:R12} absorbs
it smoothly.  The bifurcation comes instead from the \emph{vertices}:
the cross-section circumradius $\sqrt{2(1 - 3a + 3a^2)}$ is minimized
at $a = 1/2$, where it equals $1/\sqrt{2}$, so
\begin{equation}
  R = \frac{1}{\sqrt{2}} \qquad
  \Longleftrightarrow \qquad Y = \frac{1}{4}
  \label{eq:Y-crit}
\end{equation}
is the first $Y$ at which the disk crosses a cross-section vertex---the
green circle of Figure~\ref{fig:diag-view}b.
This is the \textbf{only geometric bifurcation}:
\begin{itemize}[topsep=2pt]
  \item $Y < 1/4$: at every angle $\theta$ there is a band of heights
    $z$ on which the variance disk is the binding constraint, and the
    shaving produced by the lower and upper facets proceeds
    independently (Figure~\ref{fig:three-regions}, left).
  \item $Y > 1/4$: near the vertex direction the variance band pinches
    off---the lower- and upper-facet regions meet directly---and the
    disk crosses the hexagon vertices around the cube center.  The CDF
    integral acquires new geometric terms
    (Figure~\ref{fig:three-regions}, right).
\end{itemize}

\begin{figure}[H]
\centering
\includegraphics[width=\linewidth]{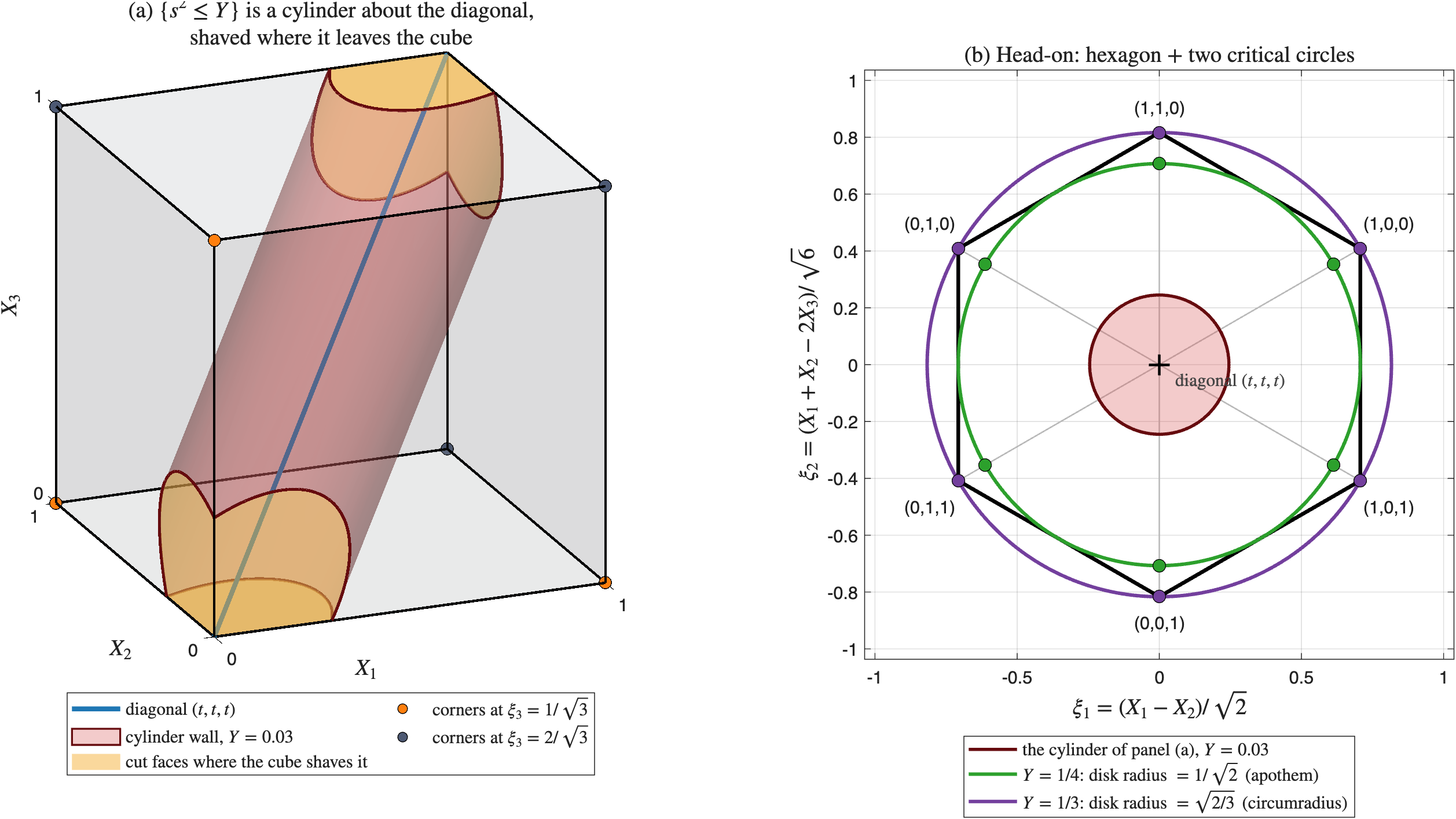}
\caption{The same geometry in three dimensions and seen down the diagonal.
\emph{(a)} In the cube, $\{s^2 \le Y\}$ is a solid cylinder of radius
$\sqrt{2Y}$ about the diagonal $(t,t,t)$, drawn here at $Y = 0.03$
with a transparent wall so the interior is visible.  Near the two end
corners the cylinder runs out of the cube and is \emph{shaved} by the
facets; the amber regions are the resulting cut faces, bounded by the
exact facet--cylinder intersection arcs (dark red).  This is the
decomposition used in \S\ref{sec:cyl-minus-shave}.  The six remaining
corners sit at \emph{two} different heights along the diagonal,
$\xi_3 = 1/\sqrt{3}$ (orange) and $2/\sqrt{3}$ (slate).
\emph{(b)} Looking down the diagonal, the cube collapses to a
regular hexagon (apothem $1/\sqrt{2}$, circumradius $\sqrt{2/3}$)
and the cylinder to a disk; the two rings of corners in (a) collapse
onto the six hexagon vertices, labeled by their cube-corner origin.
The cylinder of panel~(a) reappears here as the small filled disk, in
the same colors.  Two further circles mark the critical radii:
$Y = 1/4$ (green, disk reaches the shadow-hexagon apothem,
equivalently the nearest cross-section vertices---the \textbf{single
bifurcation}), $Y = 1/3$ (violet, disk reaches the shadow-hexagon
vertices and $F = 1$).}
\label{fig:diag-view}
\end{figure}

\begin{figure}[t]
\centering
\includegraphics[width=\linewidth]{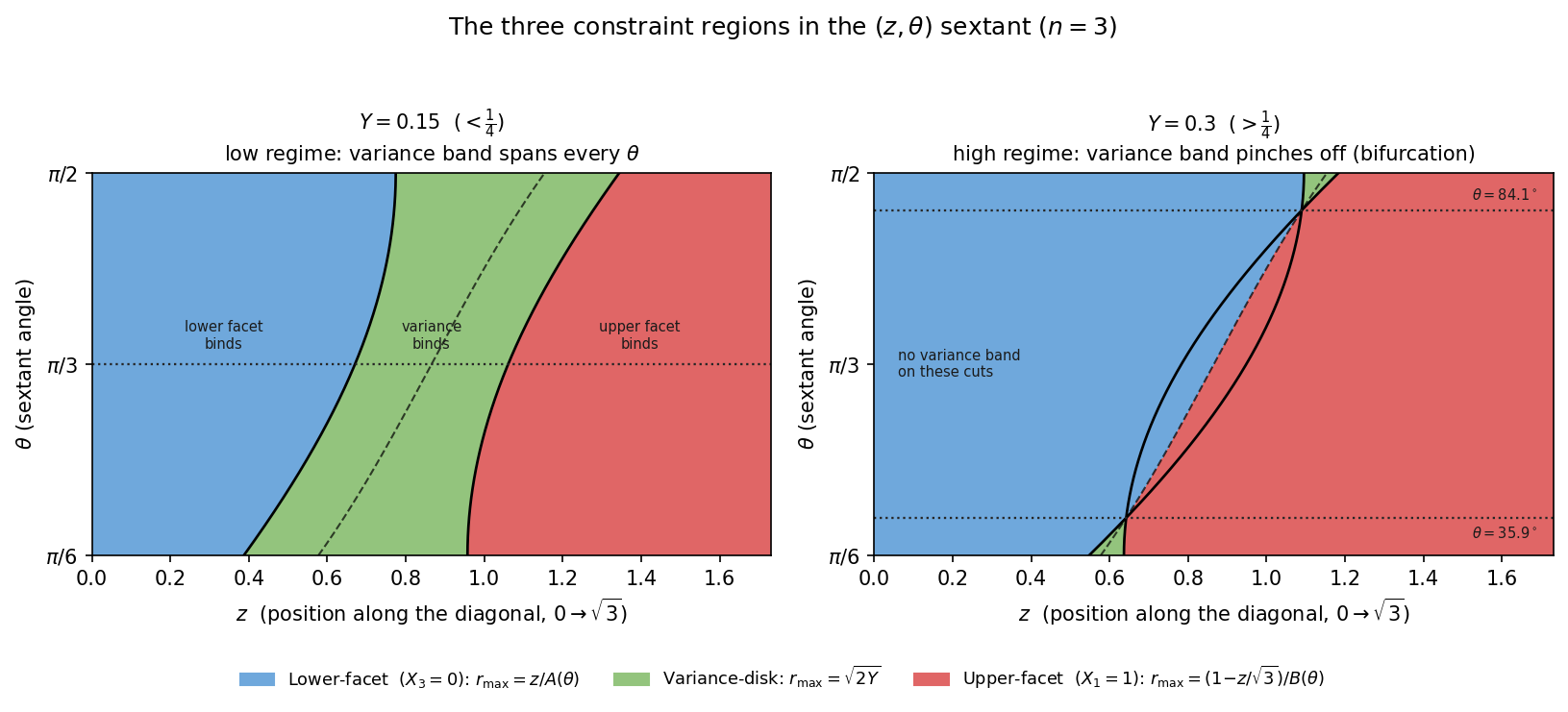}
\caption{The three constraint regions of~\eqref{eq:rmax} over the sextant
$(z,\theta)\in[0,\sqrt3]\times[\tfrac\pi6,\tfrac\pi2]$, colored by which
constraint sets $r_{\max}$.  \emph{How to read it:} this is parameter space, not
physical space.  Fix an angle $\theta$---a \emph{horizontal} cut---and sweep the height
$z$ from $0$ to $\sqrt3$: the binding constraint changes blue $\to$ green $\to$ red,
which is the sequence of cross-sections drawn in Figure~\ref{fig:xsection}, one panel
per color.  \emph{Left} ($Y=0.15<\tfrac14$): the
variance-disk band (green, $r_{\max}=\sqrt{2Y}$) separates the lower-facet
band (blue, $r_{\max}=z/A(\theta)$) from the upper-facet band (red,
$r_{\max}=(1-z/\sqrt3)/B(\theta)$), with the facet coefficients $A, B$
of~\eqref{eq:AB}, at \emph{every} $\theta$; the dotted cut at
$\theta=\pi/3$ is labeled with the three constraints in the order they bind.
\emph{Right} ($Y=0.30>\tfrac14$): the variance band pinches off over an
\emph{interior} range of angles straddling the vertex direction $\theta=\pi/3$---here
$35.9^\circ$ to $84.1^\circ$, marked by the two dotted cuts---on which no height $z$
leaves the disk inside the cross-section, so the green band is absent from those cuts
entirely.  It survives at both ends of the sextant, which is why a green sliver remains
near $\theta=\pi/6$.  There the disk protrudes past both facets and the lower- and
upper-facet regions meet directly along $z_{\rm cross}$ (dashed), $z_{\rm cross}(\theta)
= A(\theta)/\!\bigl(B(\theta)+A(\theta)/\sqrt3\bigr)$.  This vanishing of the
variance band is the geometric bifurcation at $Y=\tfrac14$.  Solid curves are
the region boundaries $z=A(\theta)\sqrt{2Y}$ and $z=\sqrt3(1-B(\theta)\sqrt{2Y})$.}
\label{fig:three-regions}
\end{figure}

The maximum variance is $Y_{\max} = 1/3$, at which the disk radius
$R = \sqrt{2/3}$ equals the distance from the axis to the six
off-diagonal cube corners (the shadow-hexagon circumradius of
Figure~\ref{fig:diag-view}b): the cylinder swallows the entire cube,
and $F(1/3) = 1$.

%=====================================================================
\section{The sextant chart and the CDF integral}\label{sec:sextant}
%=====================================================================

The hexagon's $3$-fold rotational symmetry, combined with the $S_3$
permutation symmetry of i.i.d.\ sampling, suggests working in a single
$60^\circ$ wedge.  We introduce cylindrical coordinates
$(r, \theta, z)$ around the diagonal:
\begin{equation}
  z = \xi_3 = \frac{X_1 + X_2 + X_3}{\sqrt{3}}, \qquad
  r = \sqrt{\xi_1^2 + \xi_2^2} = \sqrt{2\,s^2}, \qquad
  \theta = \operatorname{atan2}(\xi_2, \xi_1).
  \label{eq:cyl-coords}
\end{equation}
Here the Jacobian is $|J| = r$, so the volume element is
$r\dd{r}\dd{\theta}\dd{z}$.  That factor of $r$ is what makes the inner
integral elementary: $\int_0^{r_{\max}} r \dd{r} = \tfrac12 r_{\max}^2$, which is
why a uniform parent needs no radial antiderivative at all.

\subsection{The ordered sextant}

Restricting to the ordered sextant $X_1 \ge X_2 \ge X_3$ confines
$\theta$ to the $60^\circ$ wedge $\theta \in [\pi/6, \pi/2]$
(Figure~\ref{fig:sextant}).  In
this wedge, only \textbf{two cube facets are active}:
Write
\begin{equation}
  A(\theta) = \sqrt{2}\sin\theta, \qquad
  B(\theta) = \frac{\cos\theta}{\sqrt{2}} + \frac{\sin\theta}{\sqrt{6}}
             = \tfrac{2}{\sqrt{6}}\sin(\theta + \pi/3)
  \label{eq:AB}
\end{equation}
for the two facet coefficients, used throughout.  Then:
\begin{itemize}[topsep=2pt]
  \item $X_3 = 0$ (lower facet), at distance
    $r_{\rm lo}(z, \theta) = z / A(\theta)$,
  \item $X_1 = 1$ (upper facet), at distance
    $r_{\rm hi}(z, \theta) = (1 - z/\sqrt{3}) / B(\theta)$.
\end{itemize}

\begin{figure}[H]
\centering
\includegraphics[width=\linewidth]{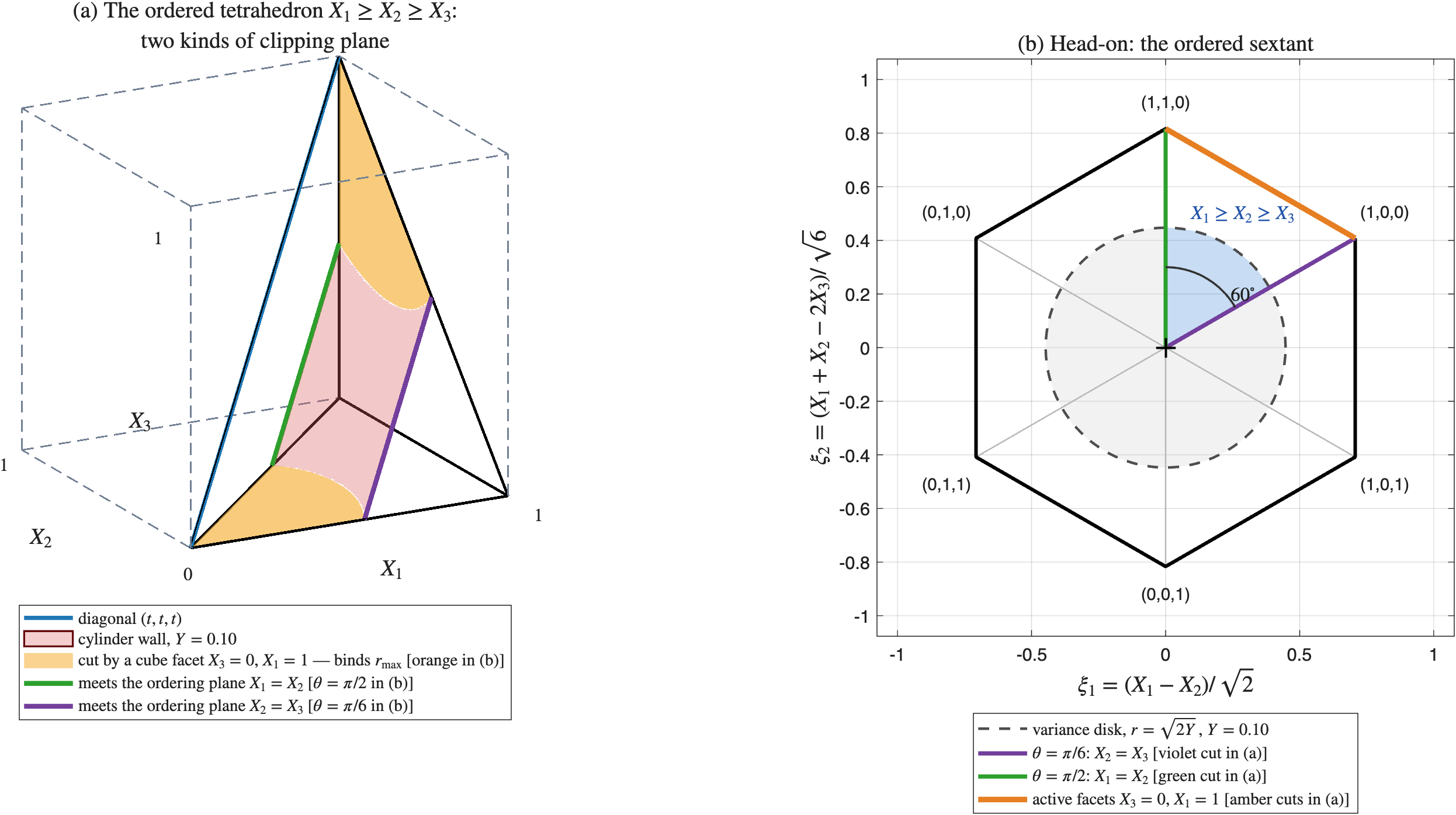}
\caption{The ordered region in three dimensions and seen down the diagonal.
\emph{(a)} $X_1 \ge X_2 \ge X_3$ is one of six congruent tetrahedra
tiling the cube, here
$\mathrm{conv}\{(0,0,0),(1,0,0),(1,1,0),(1,1,1)\}$.  The variance
cylinder restricted to it (transparent wall, $Y = 0.10$) is clipped by
four planes of two distinct kinds.  The cube facets $X_3 = 0$ and
$X_1 = 1$ are genuine boundaries of the support and the only ones
entering $r_{\max}$; the cylinder is cut open there, and those cut
faces are drawn filled (amber).  The ordering planes $X_1 = X_2$ and
$X_2 = X_3$ are mere symmetry cuts---the five other tetrahedra sit on
their far sides---so the cylinder merely meets them, along the straight
lines $|u - v| = r\sqrt{3/2}$ drawn in green and violet.
Distinguishing the two kinds is what makes the sextant reduction
legitimate.  \emph{The colors carry over to panel~(b):} green
$\leftrightarrow \theta = \pi/2$, violet $\leftrightarrow
\theta = \pi/6$, amber $\leftrightarrow$ the orange active-facet edge.
\emph{(b)} The same region seen down the diagonal: the $60^\circ$
wedge $\theta \in [\pi/6, \pi/2]$ of the hexagonal cross-section.  The
boundary $\theta = \pi/6$ (violet) is where $X_2 = X_3$;
$\theta = \pi/2$ (green) is where $X_1 = X_2$.  The single hexagon edge inside the
wedge (orange) is the common image of the two active facets $X_3 = 0$
and $X_1 = 1$, whose constraints define $r_{\rm lo}$ and $r_{\rm hi}$
respectively.  Dashed: the variance disk $r = \sqrt{2Y}$.  The factor
of $6$ in the CDF integral accounts for the five remaining sextants
(gray).}
\label{fig:sextant}
\end{figure}

\subsection{The CDF as a single integral}

The CDF for any parent density $f$ is
\begin{equation}
  F(Y) = 6 \int_0^{\sqrt{3}} dz
    \int_{\pi/6}^{\pi/2} d\theta
    \int_0^{r_{\max}} r\, f(X_1)\,f(X_2)\,f(X_3) \dd{r},
  \label{eq:CDF-sextant}
\end{equation}
where
\begin{equation}
  r_{\max}(z, \theta, Y) = \min\bigl(\sqrt{2Y},\;
    r_{\rm lo}(z, \theta),\; r_{\rm hi}(z, \theta)\bigr),
  \label{eq:rmax}
\end{equation}
and $X_i = X_i(\xi_1, \xi_2, \xi_3) = X_i(r, \theta, z)$ via the
inverse transformation~\eqref{eq:inverse} with $\xi_1 = r\cos\theta$,
$\xi_2 = r\sin\theta$.

The factor of $6$ accounts for the $S_3$ permutation symmetry.  For
parents with the additional reflection symmetry $f(x) = f(1-x)$
(e.g., uniform, Beta$(a,a)$), the cube's $Z_2$ symmetry
$X_i \to 1 - X_i$ (which maps $z \to \sqrt{3} - z$, i.e., $a \to 1-a$)
halves the $z$-range:
\begin{equation}
  F(Y) = 12 \int_0^{\sqrt{3}/2} dz
    \int_{\pi/6}^{\pi/2} d\theta
    \int_0^{r_{\max}} r\, f(X_1)\,f(X_2)\,f(X_3) \dd{r}.
  \label{eq:CDF-sextant-sym}
\end{equation}

\paragraph{From the variance to the standard deviation.}
Acceptance criteria are normally written in terms of the sample
standard deviation $S = \sqrt{s^2}$---for instance the USP
$\langle 905\rangle$ acceptance value
$\mathrm{AV} = |M - \bar{X}| + k\,S$ used in
\S\ref{sec:apply}---rather than in terms of $s^2$.  Passing from one to
the other costs no further integration, so every closed form obtained
below carries over.

\begin{proposition}[Standard-deviation law]\label{prop:variance-to-sd}
Let $F(Y) = \Pr(s^2 \le Y)$ be the CDF of~\eqref{eq:CDF-sextant}, and
$Y_{\max} = \sup\{Y : F(Y) < 1\}$.  Since $t \mapsto t^2$ is a strictly
increasing bijection of $[0,\infty)$, the sample standard deviation
$S = \sqrt{s^2}$ has
\begin{equation}
  G(t) \;:=\; \Pr(S \le t) \;=\; \Pr(s^2 \le t^2) \;=\; F(t^2),
  \label{eq:sd-cdf}
\end{equation}
and, wherever $F$ is differentiable, density
\begin{equation}
  g(t) \;=\; 2t\,F'(t^2),
  \label{eq:sd-pdf}
\end{equation}
on the support $t \in \bigl(0, \sqrt{Y_{\max}}\,\bigr]$.  In particular
$G$ and $g$ lie in the same function class as $F$, since $t \mapsto t^2$
is algebraic.
\end{proposition}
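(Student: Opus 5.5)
The plan is to proceed in three short steps: identify the events, differentiate by the chain rule, and check the support and the function class.

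\emph{Step 1 (equality of events).} Since $s^2 \ge 0$ almost surely and $S = \sqrt{s^2}$ is its nonnegative root, for every $t \ge 0$ the events $\{S \le t\}$ and $\{s^2 \le t^2\}$ coincide. This is because $u \mapsto u^2$ is strictly increasing on $[0,\infty)$, so $\sqrt{s^2} \le t \iff s^2 \le t^2$. Taking probabilities gives $G(t) = F(t^2)$, which is~\eqref{eq:sd-cdf}. For $t < 0$ both sides vanish, $G(t)=0$ trivially and $F(t^2)$ being irrelevant there, so we restrict attention to $t \ge 0$.

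\emph{Step 2 (density).} At any $t>0$ with $t^2$ a point of differentiability of $F$, the chain rule gives $g(t) = G'(t) = 2t\,F'(t^2)$, which is~\eqref{eq:sd-pdf}. The support statement follows from the definition of $Y_{\max}$. We have $F(Y) = 1$ for $Y \ge Y_{\max}$, so $G(t) = 1$ for $t \ge \sqrt{Y_{\max}}$. Also $F(0) = \Pr(s^2 = 0) = \Pr(X_1=X_2=X_3) = 0$ for a continuous parent, since the diagonal has zero volume. Hence $g$ is carried by $(0,\sqrt{Y_{\max}}]$.

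\emph{Step 3 (function class).} Composing a piecewise-elementary, polylogarithmic or otherwise closed-form $F$ with the algebraic map $t\mapsto t^2$ stays in the same class. Multiplying $F'$ by the polynomial $2t$ likewise stays in the same class, and breakpoints $Y_k$ of $F$ move to $\sqrt{Y_k}$.

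The only point requiring care, and the closest thing to an obstacle, is the precise meaning of ``same function class''. The classes used later (elementary, dilogarithmic, and so on) are closed under algebraic substitution and multiplication by polynomials, and the argument needs exactly these two closure properties. I would state that closure property explicitly rather than prove anything deeper.
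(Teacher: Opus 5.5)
Your argument is correct and matches the paper's own inline justification: the strictly increasing bijection $t\mapsto t^2$ of $[0,\infty)$ identifies the events $\{S\le t\}$ and $\{s^2\le t^2\}$, the chain rule gives $g(t)=2t\,F'(t^2)$, and composition with an algebraic map preserves the function class. One small slip: for $t<0$ only $G(t)$ vanishes, while $F(t^2)$ generally does not, so the identity $G(t)=F(t^2)$ should simply be stated for $t\ge 0$, as the paper does.
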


\noindent
For the $n = 3$ uniform parent $Y_{\max} = 1/3$, so $S$ ranges over
$\bigl(0, 1/\sqrt{3}\,\bigr]$.  Every closed-form variance CDF obtained
in this paper therefore yields the corresponding standard-deviation CDF
and density in closed form, exactly and without truncation.

\subsection{Three constraint regions}\label{sec:three-regions}

At each $(z, \theta)$, exactly one of three constraints in
\eqref{eq:rmax} is binding, in terms of the facet
coefficients~\eqref{eq:AB}:
\begin{enumerate}[topsep=2pt]
  \item \textbf{Lower-facet region} ($X_3 = 0$ binds):
    $z < A(\theta)\sqrt{2Y}$ and $z < z_{\rm cross}(\theta)$.
    Here $r_{\max} = z/A(\theta)$.
  \item \textbf{Variance-disk region} (variance binds):
    $z \ge A(\theta)\sqrt{2Y}$ and
    $(1 - z/\sqrt{3}) \ge B(\theta)\sqrt{2Y}$.
    Here $r_{\max} = \sqrt{2Y}$.
  \item \textbf{Upper-facet region} ($X_1 = 1$ binds):
    $(1 - z/\sqrt{3}) < B(\theta)\sqrt{2Y}$ and $z > z_{\rm cross}$.
    Here $r_{\max} = (1 - z/\sqrt{3})/B(\theta)$.
\end{enumerate}
The boundary between the lower-facet and variance-disk regions is
$z = A(\theta)\sqrt{2Y}$; the boundary between the variance-disk and
upper-facet regions is $z = \sqrt{3}(1 - B(\theta)\sqrt{2Y})$.

Geometrically (Figure~\ref{fig:xsection}), slide a plane perpendicular to the
cube diagonal along it.  The cube's cross-section is a small \emph{triangle}
near each corner and a \emph{hexagon} in the middle, while the variance
constraint is a \emph{disk} of fixed radius $\sqrt{2Y}$ centered on the axis.
The integration radius $r$ runs out to whichever boundary is closer: near the
bottom corner (small $z$) the triangle lies inside the disk, so the cube facet
$X_3=0$ binds; in the middle the disk lies inside the hexagon, so the variance
binds; near the top corner (large $z$) the upper triangle lies inside the
disk, so $X_1=1$ binds.  The disk size is fixed by $Y$; the cube slice grows
then shrinks, so the binding boundary switches cube\,$\to$\,disk\,$\to$\,cube.

For $Y \le 1/4$, the variance-disk region spans a nonempty interval
of $z$ at every $\theta \in [\pi/6, \pi/2]$.  Two distinct thresholds must be kept apart here.
Full containment of the disk in the central hexagon holds only for $Y \le 3/16$, where the disk
radius $\sqrt{2Y}$ first reaches the apothem $\sqrt{3/8}$; beyond it the disk protrudes through
the edges and is shaved.  That transition is absorbed smoothly by the shaving integral of
\S\ref{sec:R12} and introduces no new analytic branch, which is why $Y=\tfrac14$ remains the only
\emph{true} bifurcation.  What persists up to $Y=\tfrac14$ is the weaker statement above, that the
variance-binding band in $z$ is nonempty at every angle, not containment.

For $Y > 1/4$, the variance-disk region may vanish at some $\theta$
values (the disk protrudes past both active facets near the vertex
direction $\theta = \pi/3$), creating the
geometric bifurcation.  Figure~\ref{fig:three-regions} maps the three
regions over the $(z,\theta)$ sextant; Figure~\ref{fig:cyl-minus-shaved}
illustrates the disk--hexagon intersection at three cross-sections.

%=====================================================================
\section{Derivation for the uniform parent}\label{sec:cdf-uniform}

\noindent\emph{Integrand-level visualizations of every zone of this derivation are
collected in Appendix~\ref{app:atlas}; the main text keeps only the two structural
figures.}
%=====================================================================

With $f \equiv 1$, the inner integral in~\eqref{eq:CDF-sextant} is
$\int_0^{r_{\max}} r \dd{r} = \tfrac{1}{2}r_{\max}^2$, so
\begin{equation}
  F(Y) = 3 \int_0^{\sqrt{3}} dz
    \int_{\pi/6}^{\pi/2} r_{\max}^2(z, \theta, Y) \dd{\theta}.
  \label{eq:CDF-unif}
\end{equation}

\subsection{Cylinder minus shaving}\label{sec:cyl-minus-shave}

The CDF equals the volume of the intersection of the variance
cylinder with the unit cube.  We decompose this as
\begin{equation}
  F(Y) \;=\; V_{\rm cyl}(Y) \;-\; V_{\rm shaved}(Y),
  \label{eq:cyl-minus-shave}
\end{equation}
where $V_{\rm cyl}$ is the volume of the full cylinder (radius
$\sqrt{2Y}$, height $\sqrt{3}$ along the diagonal), and
$V_{\rm shaved}$ is the volume where the cylinder protrudes past the
cube faces.

\begin{figure}[H]
\centering
\includegraphics[width=\linewidth]{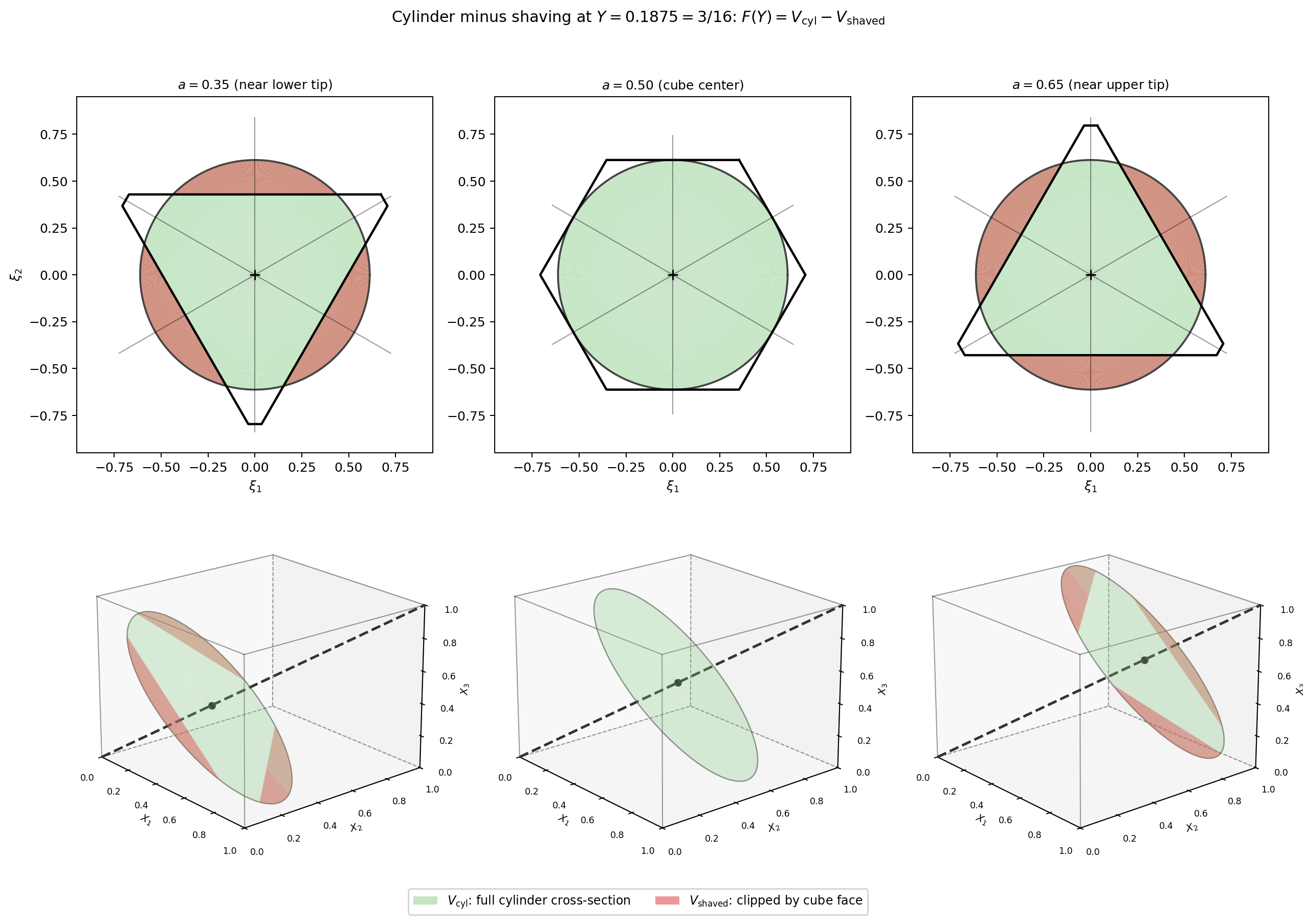}
\caption{The cylinder-minus-shaving decomposition at $Y = 3/16$.
Green = full cylinder cross-section ($V_{\rm cyl}$).
Red = volume shaved off by cube faces ($V_{\rm shaved}$).
The CDF is the green area minus the red area at each height,
integrated along the diagonal.
At $a = 0.35$ and $a = 0.65$, the red segments show where
the cube clips the cylinder; at $a = 0.50$ the disk just
touches the hexagon (no shaving).}
\label{fig:cyl-minus-shaved}
\end{figure}

The cylinder volume is immediate:
\begin{equation}
  V_{\rm cyl} = \int_0^{\sqrt{3}} \pi R^2 \dd{z}
    = \sqrt{3}\cdot \pi \cdot 2Y = 2\sqrt{3}\,\pi\,Y.
  \label{eq:Vcyl}
\end{equation}

The shaved volume is the integral of the annular excess at each
$(z, \theta)$ where the disk extends past the hexagon:
\begin{equation}
  V_{\rm shaved} = 3\int_0^{\sqrt{3}} dz
    \int_{\pi/6}^{\pi/2}
      \bigl(2Y - r_{\max}^2(z,\theta,Y)\bigr) \dd{\theta}.
  \label{eq:Vshaved}
\end{equation}
In the sextant, only two cube faces are active: $X_3 = 0$ (lower)
and $X_1 = 1$ (upper).  Each contributes independently to the
shaving.  For the uniform parent, the $a \leftrightarrow 1-a$
symmetry gives $V_{\rm shaved}^{\rm hi} = V_{\rm shaved}^{\rm lo}$,
so
\begin{equation}
  V_{\rm shaved} = 2\,V_{\rm shaved}^{\rm lo}.
  \label{eq:shave-sym}
\end{equation}

\subsection{$Y \le 1/4$: thin cylinder inside the cube}\label{sec:R12}

The lower face $X_3 = 0$ clips the disk whenever
$z < A(\theta)\sqrt{2Y}$, replacing $r_{\max} = \sqrt{2Y}$ with
$r_{\max} = z/A(\theta)$, where $A(\theta) = \sqrt{2}\sin\theta$.
At a fixed height~$z$, the clipping condition
$\sin\theta > z/(2\sqrt{Y})$ restricts the angular range to
$\theta \in [\theta_{\min}(z),\, \pi/2]$, where
\begin{equation}
  \theta_{\min}(z) = \max\!\bigl(\tfrac{\pi}{6},\;
    \arcsin\bigl(z\,/\,2\sqrt{Y}\bigr)\bigr).
  \label{eq:theta-min}
\end{equation}
For $z \le \sqrt{Y}$: $\theta_{\min} = \pi/6$ and the shaving
fills the full sextant (Figure~\ref{fig:shaved-lo}a).
For $z > \sqrt{Y}$: the angular range narrows toward $\pi/2$
(Figure~\ref{fig:shaved-lo}b).  The shaved volume from this face
is the integral of these red annular areas over all heights~$z$:
\begin{equation}
  V_{\rm shaved}^{\rm lo}
  = 3\int_0^{2\sqrt{Y}} dz
    \int_{\theta_{\min}(z)}^{\pi/2}
      \Bigl(2Y - \frac{z^2}{A(\theta)^2}\Bigr) d\theta.
  \label{eq:Vshaved-lo-zform}
\end{equation}

Swapping the integration order in~\eqref{eq:Vshaved-lo-zform}
(each fixed~$\theta$ aggregates contributions from $z = 0$ to
$z = A(\theta)\sqrt{2Y}$) puts the $\theta$-limits at $\pi/6$ and
$\pi/2$, and with $A(\theta) = \sqrt{2}\sin\theta$ both integrals are
elementary:
\begin{equation}
  V_{\rm shaved}^{\rm lo}
  = 3\int_{\pi/6}^{\pi/2} \!d\theta
     \int_0^{A(\theta)\sqrt{2Y}}\!\!
       \Bigl(2Y - \frac{z^2}{A(\theta)^2}\Bigr) dz
  = 4\sqrt{3}\,Y^{3/2}.
  \label{eq:Vshaved-lo}
\end{equation}

Assembling~\eqref{eq:cyl-minus-shave}--\eqref{eq:Vshaved-lo}:
\begin{equation}
  F(Y)
    = \underbrace{2\sqrt{3}\,\pi\,Y}_{V_{\rm cyl}}
    - \underbrace{8\sqrt{3}\,Y^{3/2}}_{V_{\rm shaved}}
    = 2\sqrt{3}\,Y\bigl(\pi - 4\sqrt{Y}\bigr),
    \qquad 0 \le Y \le \tfrac{1}{4}.
  \label{eq:FR12}
\end{equation}

\paragraph{Verification.}
Two properties of~\eqref{eq:FR12} are immediate: $F(0) = 0$, as any
distribution function must satisfy, and
$F'(Y) = 2\sqrt{3}\,(\pi - 6\sqrt{Y})$, which is positive for
$Y < \pi^{2}/36 \approx 0.274$ and therefore makes $F$ strictly
increasing across the whole of $[0, \tfrac14]$.

The substantive test is external: passing~\eqref{eq:FR12} to the
standard deviation gives the density $g(t) = 4\sqrt{3}\,t(\pi - 6t)$,
precisely the parabola Rietz~\cite{rietz1931} obtained in 1931 once his
divisor-$n$ normalization is rescaled to the unbiased one used here
(Remark~\ref{rem:rietz} and \S\ref{sec:variance-to-sd}); the lower
regime \emph{reproduces} the classical result rather than merely
agreeing with it.  The endpoint value
$F(\tfrac14) = \tfrac{\sqrt{3}}{2}(\pi - 2) \approx 0.9886$ is by
contrast no test---it is~\eqref{eq:FR12} evaluated at $Y = \tfrac14$,
and nothing derived so far constrains it---but a prediction: the
$Y > \tfrac14$ branch of \S\ref{sec:R3} follows from a different
decomposition of the shaving integral, and continuity forces the two to
agree at the bifurcation.  That they do checks both, as does the
normalization anchor $F(\tfrac13) = 1$ on that upper branch.

\paragraph{Geometric interpretation.}
The cylinder volume $2\sqrt{3}\pi Y$ grows linearly in $Y$.
The shaving correction $8\sqrt{3}Y^{3/2}$ grows as $Y^{3/2}$: at
small~$Y$ the cylinder is thin and the cube barely clips it; the
shaving is a higher-order correction to the cylinder volume.
At $Y = 3/16$, the disk just touches the hexagon apothem at the
widest cross-section ($a = 1/2$), but this does not create a new
regime---the shaving integral~\eqref{eq:Vshaved-lo} absorbs
it smoothly.

\subsection{$Y > 1/4$: wide cylinder with cube corners protruding}\label{sec:R3}

At $Y = 1/4$ the disk radius $\sqrt{2Y} = 1/\sqrt{2}$ equals the
circumradius of the regular hexagon at $a = 1/2$: the disk fully
circumscribes the hexagon at the cube center.  For $Y > 1/4$, the
shaving correction acquires new structure---a band of $z$-values near
$\sqrt{3}/2$ where the disk fully contains the hexagon, so the
\emph{entire} hexagonal cross-section contributes and the shaving
saturates.

\paragraph{Shaving structure.}
Define $\Delta = \sqrt{4Y - 1}$ (real for $Y > 1/4$).  Moving along
the diagonal from a cube corner ($a = 0$) toward the cube center
($a = 1/2$), the cross-section grows and its relationship to the
disk passes through three zones (Figure~\ref{fig:regime3-zones}):
\begin{enumerate}[topsep=2pt]
  \item \textbf{Tip zone} ($a < a_v := \sqrt{Y/3}$): the cross-section
    is a small triangle near a cube corner, its vertices at distance
    $a\sqrt{6}$ from the axis.  For $a < a_v$ the disk fully contains
    it; shaving $=$ full annulus (disk area $-$ triangle area).
  \item \textbf{Partial-clip zone} ($a_v < a < a_*$, where
    $a_* = \tfrac{1}{2}(1 - \Delta/\sqrt{3})$): the vertices of the
    cross-section protrude past the disk and are clipped by the disk
    boundary into circular arcs.  Two things happen inside this zone
    without ending it: at $a = 1/3$ the triangle--hexagon transition
    introduces the three $X_i = 1$ edges at the protruding corners,
    and at $a_u := 1 - 2\sqrt{Y/3}$ those upper edges begin to be
    clipped as well.  As $a$ increases further, the circumradius
    $\sqrt{2(1 - 3a + 3a^2)}$ decreases; the arcs shrink until at
    $a = a_*$ the vertices land on the disk circle and the arcs
    vanish.
  \item \textbf{Full-saturation zone} ($a_* < a < 1 - a_*$):
    the disk fully contains the hexagon; all edges are straight.
    Shaving $=$ full annulus (disk area $-$ hexagon area),
    independent of $Y$.
\end{enumerate}
The upper half of the diagonal ($a > 1/2$) is the mirror image
by $a \leftrightarrow 1-a$.

\paragraph{Zone-by-zone shaving.}
Using the $a \leftrightarrow 1-a$ symmetry, we integrate over the
lower half of the diagonal ($0 \le z \le \sqrt{3}/2$) and double.
In the height variable the zone boundaries are
$z_v = \sqrt{3}\,a_v = \sqrt{Y}$,
$z_u = \sqrt{3}\,a_u = \sqrt{3} - 2\sqrt{Y}$, and
$z_* = \sqrt{3}\,a_*$ (where the circumradius equals $R$), ordered
$0 < z_v < z_u < z_* < \sqrt{3}/2$ for $1/4 < Y < 1/3$.
The total shaving is
\begin{equation}
  V_{\rm shaved} = 2\bigl(S_{\rm tip} + S_{\rm clip} + S_{\rm sat}\bigr),
  \label{eq:Vshaved-split}
\end{equation}
with one integral per zone:

\paragraph{1.\ Tip-zone shaving ($0 < z < z_v$).}
The cross-section is a triangle fully inside the disk
(Figure~\ref{fig:zone-tip}).  In the
sextant, $r_{\rm tri}(z,\theta) = d_{\rm lo}/\sin\theta$ with
$d_{\rm lo} = z/\sqrt{2}$, so $r_{\rm tri}^2 = z^2/(2\sin^2\theta)$.
The shaving equals the full annulus:
\begin{equation}
  S_{\rm tip} = 3\int_0^{z_v}\!\! dz \int_{\pi/6}^{\pi/2}\!
      \Bigl(2Y - \frac{z^2}{2\sin^2\theta}\Bigr) \dd{\theta}
    = \Bigl(2\pi - \frac{\sqrt{3}}{2}\Bigr) Y^{3/2},
  \label{eq:S-tip}
\end{equation}
using $\int_{\pi/6}^{\pi/2}\!\csc^2\theta\dd{\theta}
= \sqrt{3}$ and $z_v = \sqrt{Y}$.

\paragraph{2.\ Partial-clip shaving ($z_v < z < z_*$).}
The cross-section boundary in the sextant is
$r_{\rm bd} = \min(d_{\rm lo}/\sin\theta,\,
d_{\rm hi}/\cos(\theta - \pi/6))$ with
$d_{\rm hi} = (\sqrt{3}-z)/\sqrt{2}$, and the shaving integrand at each
height is $2Y - \min(R, r_{\rm bd})^2$ (Figure~\ref{fig:zone-clip}).
Each facet contributes only on the angular window where it lies inside
the disk: the lower facet on
$\theta \ge \theta_{\rm lo}(z) = \arcsin\bigl(z/(2\sqrt{Y})\bigr)$; the
upper facet---once $z > z_u$, before which its window is empty---on
$\theta \le \pi/6 + \arccos\bigl((\sqrt{3}-z)/(2\sqrt{Y})\bigr)$.
Between the windows the disk itself is the boundary and contributes no
shaving.  Integrating each facet integrand over its window gives, per
height,
\begin{equation}
  2Y\arccos\frac{z}{2\sqrt{Y}} - \frac{z}{2}\sqrt{4Y - z^2}
  \quad\text{(lower)}, \qquad
  2Y\arccos\frac{\sqrt{3}-z}{2\sqrt{Y}}
  - \frac{\sqrt{3}-z}{2}\sqrt{4Y - (\sqrt{3}-z)^2}
  \quad\text{(upper)}.
  \label{eq:clip-heights}
\end{equation}
Both are primitives of the single function
\begin{equation}
  J(x) \;=\; 2Yx\arccos\frac{x}{2\sqrt{Y}} \;-\; 2Y\sqrt{4Y - x^2}
       \;+\; \frac{(4Y - x^2)^{3/2}}{6},
  \label{eq:Jdef}
\end{equation}
in the sense that $J'(z)$ is the lower expression and
$-\,\frac{d}{dz}J(\sqrt{3}-z)$ the upper one.  Since
$J(2\sqrt{Y}) = 0$, the opening of the upper window at $z_u$
contributes no boundary term, and
\begin{equation}
  S_{\rm clip}
  = 3\bigl[J(z_*) - J(z_v)\bigr] \;-\; 3\,J(\sqrt{3} - z_*),
  \qquad
  J(z_v) = \Bigl(\frac{2\pi}{3} - \frac{3\sqrt{3}}{2}\Bigr) Y^{3/2}.
  \label{eq:S-clip}
\end{equation}
The triangle--hexagon transition at $z = \sqrt{3}/3$ changes which
vertex protrudes---the triangle corner on the wedge boundary hands over
to the hexagon vertex inside it---but not the integrand, so it is not a
breakpoint of the formula.

\paragraph{The partial-clip $\to$ saturation transition.}
The circular arcs in the partial-clip zone do not transform into
straight edges---they \emph{shrink to zero} while the straight
hexagonal edges between them grow longer
(Figure~\ref{fig:arc-shrinking}).  At each hexagon vertex, two
edges (one from an $X_i = 0$ face, one from an $X_i = 1$ face) meet
at an angle.  When the vertex protrudes past the disk, the disk
clips the corner into a circular arc whose length depends on how
far the vertex extends past the disk.  As $a$ increases toward $a_*$,
the circumradius $\sqrt{2(1 - 3a + 3a^2)}$ decreases: the vertices
approach the disk boundary and the arcs contract.  At $a = a_*$ the
vertices land exactly on the disk circle (red dots become black in
Figure~\ref{fig:arc-shrinking}d), the arcs vanish, and the straight
edges connect directly---recovering the full hexagon.

\paragraph{3.\ Full-saturation shaving ($z_* < z < \sqrt{3}/2$).}
The disk fully contains the hexagon (Figure~\ref{fig:zone-sat}).
Splitting the sextant integral of $r_{\rm hex}^2$ at the crossover
angle where the two facets meet ($\csc^2$ and $\sec^2$ pieces) leaves a
quadratic in $z$, so the $z$-integration is elementary:
\begin{equation}
  S_{\rm sat} = 3\int_{z_*}^{\sqrt{3}/2}\!\! dz \int_{\pi/6}^{\pi/2}\!
      \bigl(2Y - r_{\rm hex}^2\bigr) \dd{\theta}
    = \Bigl(\pi Y - \frac{3\sqrt{3}}{8}\Bigr)\Delta
    + \frac{\sqrt{3}}{8}\,\Delta^{3},
    \qquad \Delta = \sqrt{4Y - 1}.
  \label{eq:S-sat}
\end{equation}

\noindent
Since $1/2 - a_* = \Delta/(2\sqrt{3})$, every term is proportional
to $\Delta$ and vanishes at $Y = 1/4$.

\paragraph{Assembly.}
Since $S_{\rm tip} - 3J(z_v) = 4\sqrt{3}\,Y^{3/2}$, the lower-regime
shaving reappears intact:
\begin{equation}
  F_{>} = 2\sqrt{3}\,\pi Y - 2\bigl(S_{\rm tip} + S_{\rm clip}
        + S_{\rm sat}\bigr)
        = 2\sqrt{3}\,\pi Y - 8\sqrt{3}\,Y^{3/2}
          - 6\bigl[J(z_*) - J(\sqrt{3}-z_*)\bigr] - 2\,S_{\rm sat}.
  \label{eq:assembly}
\end{equation}
The evaluation at $z_*$ becomes transparent in the angle
$\varphi = \arctan\Delta \in [0, \pi/6]$: in this variable every
radical in~\eqref{eq:Jdef} collapses to a polynomial in $\Delta$ and
every $\arccos$ to a linear function of $\varphi$, and since
$\arctan\bigl(\Delta/(1-2Y)\bigr) = 2\varphi$ the sum reduces to a
single $\arctan$ term (we additionally checked the assembled formula
against direct numerical integration of~\eqref{eq:CDF-unif} to
$10^{-8}$):
\begin{equation}
  F_{>}(Y) = 2\sqrt{3}\,\pi\,Y
    - 8\sqrt{3}\,Y^{3/2}
    + 2\sqrt{3}\,\Delta^{3}
    + 3\sqrt{3}\,\Delta
    - 6\sqrt{3}\,Y\arctan\!\Bigl(\frac{\Delta}{1-2Y}\Bigr),
  \label{eq:FR3}
\end{equation}
valid for $Y \in (1/4,\, 1/3]$, with $\Delta = \sqrt{4Y-1}$.

Combining with~\eqref{eq:FR12}, the complete CDF for the uniform
parent at $n = 3$ is:
\begin{equation}
  \;
  F^{\rm Unif}(Y) = \begin{cases}
    2\sqrt{3}\,\pi\,Y - 8\sqrt{3}\,Y^{3/2},
      & 0 \le Y \le \tfrac{1}{4},\\[8pt]
    2\sqrt{3}\,\pi\,Y - 8\sqrt{3}\,Y^{3/2}
    + 2\sqrt{3}\,\Delta^3
    + 3\sqrt{3}\,\Delta
    - 6\sqrt{3}\,Y\arctan\!\bigl(\tfrac{\Delta}{1-2Y}\bigr),
      & \tfrac{1}{4} < Y \le \tfrac{1}{3}.
  \end{cases}
  \;
  \label{eq:F-uniform-full}
\end{equation}

\paragraph{Term-by-term reading.}
\begin{itemize}[topsep=2pt]
  \item $2\sqrt{3}\pi Y$: cylinder volume $V_{\rm cyl}$ (both regimes).
  \item $-8\sqrt{3}Y^{3/2}$: face-shaving correction (both regimes).
  \item $+2\sqrt{3}\Delta^{3}$ and $+3\sqrt{3}\Delta$: algebraic
    corrections from the partial-clip and saturation zones jointly;
    both vanish at the bifurcation, where $\Delta = 0$.
  \item $-6\sqrt{3}Y\arctan(\Delta/(1{-}2Y))$: angular correction from
    the moving clip windows, encoding the disk--facet tangencies.
\end{itemize}

\paragraph{Boundary verification.}
At $Y = 1/4$: $\Delta = 0$ and $\arctan(0) = 0$; the last three
terms vanish, recovering $F(1/4)
= \tfrac{\sqrt{3}}{2}(\pi - 2) \approx 0.9886$.

At $Y = 1/3$: $\Delta = 1/\sqrt{3}$, and
$\arctan(\Delta/(1{-}2Y)) = \arctan(\sqrt{3}) = \pi/3$.
Substituting confirms $F(1/3) = 1$
(the cylinder swallows the entire cube).

\paragraph{Smoothness at the bifurcation.}
With $Y = (1+\Delta^2)/4$ the $\arctan$ argument satisfies
$\Delta/(1-2Y) = 2\Delta/(1-\Delta^2)$, so
$\arctan\bigl(\Delta/(1-2Y)\bigr) = 2\arctan\Delta$---which is also
$2\arcsin(\Delta/2\sqrt{Y})$, the form in which this term arises from
the geometry, by Remark~\ref{rem:arctan-is-arcsin}---and the upper
branch of~\eqref{eq:F-uniform-full} exceeds the continuation of the
lower one by
\[
  2\sqrt{3}\,\Delta^3 + 3\sqrt{3}\,\Delta
  - 3\sqrt{3}\,(1+\Delta^2)\arctan\Delta
  \;=\; \frac{2\sqrt{3}}{5}\,\Delta^5 + O(\Delta^7).
\]
The two branches therefore agree to \emph{second} order: $F$ is $C^2$
at $Y = \tfrac14$, and the regime change appears first in the third
derivative, through the half power $(4Y-1)^{5/2}$.

\begin{remark}[Source of the transcendental terms]\label{rem:arctan-source}
The $\arctan$ in~\eqref{eq:FR3} has a precise geometric origin:
it arises from the partial-clip zone $S_{\rm clip}$, where a
straight hexagon edge crosses the disk boundary.  In the tip and
saturation zones, the polygon lies entirely inside the disk, so
the shaving integral runs over the \emph{full} angular range at
every height~$z$; the integrand ($\csc^2\theta$ or $\sec^2$
from the face distances) integrates to elementary functions, and
the $z$-integration produces only polynomials and radicals.

In the partial-clip zone, by contrast, the angular limits of the
shaving integral are \emph{parametric}: at each~$z$, they are set
by the angles where a straight edge meets the disk circle.
Integrating over~$z$ while these limits move produces the inverse
trigonometric terms.  The complexity comes not from the integrand
but from the \emph{limits}---a simple integrand over
$z$-dependent angular boundaries yields an inverse trigonometric
term.  That term is primitively an $\arcsin$, the crossing angle
being defined by $\sin\theta_{\mathrm{cl}} = z/(2\sqrt{Y})$; the
arctangent of~\eqref{eq:FR3} is the same quantity rewritten, by
the identity~\eqref{eq:arctan-arcsin} of
Remark~\ref{rem:arctan-is-arcsin}.

Near the tangency height (where an edge just touches the disk),
the angular extent of the clipping transitions rapidly from zero
to nonzero.  The $\arctan$ function captures this steep onset:
its argument passes through zero at the tangency, and the
resulting ``angular singularity'' is the geometric mechanism
behind the transcendental correction.

This explains why the closure hierarchy
(\S\ref{sec:polynomial}--\ref{sec:hierarchy}) is structured the way
it is: for polynomial parents the integrand stays polynomial,
so the edge--circle crossing integrals remain in the
$\{\text{poly}, \sqrt{\cdot}, \arcsin\}$ class; for rational or
transcendental parents the integrand
itself introduces new special functions, lifting the closure
to polylogarithms or Bessel series.
\end{remark}

\subsection{The standard deviation for the uniform parent}
\label{sec:variance-to-sd}

Proposition~\ref{prop:variance-to-sd} turns the uniform-parent CDF just
obtained into the law of $S$ with no further work.

\begin{remark}[Relation to Rietz]\label{rem:rietz}
Rietz~\cite{rietz1931} states the $n = 3$ uniform result as the \emph{density
of the standard deviation} under the divisor-$n$ normalization
$s_{\mathrm{R}}^2 = \tfrac{1}{n}\sum_i (X_i - \bar{X})^2$, whereas we use the
unbiased divisor-$(n-1)$ variance $s^2$ of~\eqref{eq:s2-def}.  The two
standard deviations differ only by a constant scale,
$S = \sqrt{n/(n-1)}\,s_{\mathrm{R}}$ (for $n = 3$,
$S = \sqrt{3/2}\,s_{\mathrm{R}}$), so Rietz's two-piece standard-deviation
density and the variance CDF derived here carry the same information,
related by~\eqref{eq:sd-cdf}--\eqref{eq:sd-pdf} together with this rescaling.
\end{remark}

\paragraph{The uniform parent, explicitly.}
For $f \equiv 1$ the lower-regime variance CDF~\eqref{eq:FR12} and its density
are
\begin{equation}
  F(Y) = 2\sqrt{3}\,Y\bigl(\pi - 4\sqrt{Y}\bigr), \qquad
  F'(Y) = 2\sqrt{3}\,\bigl(\pi - 6\sqrt{Y}\bigr),
  \qquad 0 \le Y \le \tfrac{1}{4},
  \label{eq:unif-var-r1}
\end{equation}
so by~\eqref{eq:sd-cdf}--\eqref{eq:sd-pdf} the sample standard deviation $S$
has
\begin{equation}
  G(t) = 2\sqrt{3}\,t^{2}\bigl(\pi - 4t\bigr), \qquad
  g(t) = 4\sqrt{3}\,t\bigl(\pi - 6t\bigr),
  \qquad 0 \le t \le \tfrac{1}{2}.
  \label{eq:unif-sd-r1}
\end{equation}
The density $g$ in~\eqref{eq:unif-sd-r1} is exactly the parabola of
Rietz~\cite{rietz1931} (his standard-deviation density on
$0 \le s \le \tfrac{1}{6}\sqrt{6}$), recovered after the rescaling of the
Remark above.  On the upper regime $\tfrac{1}{4} < Y \le \tfrac{1}{3}$
(equivalently $\tfrac{1}{2} < t \le 1/\sqrt{3}$) the standard-deviation CDF is
$F(t^{2})$ of the $\arctan$/nested-radical expression of
Section~\ref{sec:R3}; no new special functions arise, since $t \mapsto t^{2}$
preserves the closure class.

\paragraph{Shape.}
The Jacobian factor $2t$ in~\eqref{eq:sd-pdf}---the derivative
$\dd{}(t^2)/\dd{t}$ of the substitution $Y = t^2$---changes the behavior at the
origin.  The variance density is finite and nonzero there,
$F'(0) = 2\sqrt{3}\,\pi$, and decreases across the lower regime; the
standard-deviation density instead vanishes linearly,
$g(t) \sim 4\sqrt{3}\,\pi\,t$ as $t \to 0$, rises to a single peak at
$t = \pi/12$, and then falls---the unimodal curve of Rietz's Fig.~3.  The
bifurcation carries the mass split
$F(\tfrac{1}{4}) = G(\tfrac{1}{2}) = \tfrac{\sqrt{3}}{2}(\pi - 2)
\approx 0.9886$: just under $99\%$ of the probability lies below $Y = 1/4$
($t = 1/2$).

%=====================================================================
\section{Polynomial closure theorem}\label{sec:polynomial}
%=====================================================================

\begin{theorem}[Polynomial closure]\label{thm:poly-closure}
For any polynomial parent density $f(x) = \sum_{k=0}^{d} c_k x^k$ on
$[0,1]$, the CDF $F(Y)$ is piecewise elementary, involving only
polynomials, integer powers, the square roots $\sqrt{c_0^2 - z^2}$ and
$\sqrt{c_0^2 - (\sqrt{3}-z)^2}$ evaluated at algebraic zone endpoints,
and $\arcsin$.  No logarithm, no arctangent and no higher radical
occurs.
\end{theorem}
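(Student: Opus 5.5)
The plan is to rerun the uniform-parent derivation of Section~\ref{sec:cdf-uniform} with a polynomial weight inserted, and to check that no new kind of function can enter. In the sextant integral~\eqref{eq:CDF-sextant}, substitute the inverse map~\eqref{eq:inverse} with $\xi_1=r\cos\theta$, $\xi_2=r\sin\theta$. Each $X_i$ is then affine in $z$ and in $r\cos\theta$, $r\sin\theta$. Hence $f(X_1)f(X_2)f(X_3)$ is a polynomial $P(z,r\cos\theta,r\sin\theta)$ of total degree at most $3d$, and the radial integral $\int_0^{r_{\max}} rP\dd{r}$ is a polynomial in $r_{\max}$, $z$, $\cos\theta$ and $\sin\theta$. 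This removes the only step where a non-polynomial integrand could appear; everything after it is integration of trigonometric polynomials against the piecewise choices of $r_{\max}$ from~\eqref{eq:rmax}.

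Next, substitute the three values of $r_{\max}$ in each constraint region of \S\ref{sec:three-regions}. In the variance-disk region $r_{\max}=\sqrt{2Y}$, and the integrand is a polynomial in $\sqrt{Y}$, $z$, $\cos\theta$ and $\sin\theta$. In the facet regions $r_{\max}=z/A(\theta)$ or $(1-z/\sqrt3)/B(\theta)$. Here the integrand becomes a polynomial in $z$ times a Laurent polynomial in $\sin\theta$, respectively in $\sin(\theta+\pi/3)$, of bounded negative order.

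The order of integration is chosen as in \S\ref{sec:R12}--\ref{sec:R3}: for fixed $z$ the angular limits are $\pi/6$, $\pi/2$, the crossover angle $z_{\rm cross}$, or the clip angles $\arcsin(z/2\sqrt Y)$ and $\pi/6+\arccos((\sqrt3-z)/2\sqrt Y)$.

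\begin{itemize}
\item On a full-range window, $\csc^{m}$ or $\sec^m$ with fixed endpoints integrates to a constant, possibly involving $\log\tan$ at fixed angles. I must check this does not produce a logarithm. The exponent is always even with $m\le 2$ after the radial integral, since $r_{\max}^{k+2}$ is paired with $\sin^j\theta$ from $P$. Each negative power of $\sin$ is therefore accompanied by enough positive powers that the total negative order is even, and even powers of $\csc$ integrate to cotangent polynomials.
\item On a moving window, the trigonometric polynomial evaluated at $\theta=\arcsin(z/2\sqrt Y)$ gives polynomials in $z$ and $\sqrt{4Y-z^2}$, together with a term linear in $\theta$ itself, i.e.\ $\arcsin(z/2\sqrt Y)$. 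This reproduces the structure of $J$ in~\eqref{eq:Jdef}.
\end{itemize}

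The $z$-integration is then of expressions $p(z)\sqrt{4Y-z^2}$ and $p(z)\arcsin(z/2\sqrt Y)$. The first is handled by repeated reduction: $\int z^k\sqrt{c^2-z^2}$ yields powers times $(c^2-z^2)^{1/2}$ or $(c^2-z^2)^{3/2}$, plus $\arcsin$. The second is integrated by parts, returning to the first type. Here $c_0$ in the statement plays the role of $2\sqrt Y$. Evaluating at the algebraic zone endpoints $z_v$, $z_u$, $z_*$, $\sqrt3/2$ leaves polynomials, these square roots, and arcsines.

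The main obstacle is the \emph{negative} claim that no logarithm and no arctangent survive. For the logarithm, I would prove the parity lemma above: every $\csc^m$ arising carries $m$ even. This follows because $r_{\max}^{k+2}\sin^j\theta$ with $k$ counting powers of $r$ in $P$, and $r^k$ comes with a homogeneous degree-$k$ form in $(\cos\theta,\sin\theta)$; dividing by $\sin^{k+2}$ then gives $\cot$-polynomials times $\csc^2$. This integrates to polynomials in $\cot$, whose values at the endpoints are algebraic, so no $\log$ appears. For the arctangent, I would invoke Remark~\ref{rem:arctan-is-arcsin}: any $\arctan$ produced, as in~\eqref{eq:FR3}, is a rewriting of $\arcsin$ and may be stated in that form.
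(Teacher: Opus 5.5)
\paragraph{Where the proposal falls short.} Your radial and angular stages agree with the paper's proof. The gap is in the $z$-stage, at exactly the point you flag as the crux. Your moving-window bullet evaluates only the \emph{disk} antiderivative at the clip angle. The \emph{facet} antiderivatives are also evaluated at $z$-dependent angles: $-\cot^{a+1}\theta/(a+1)$, and its upper analogue, a polynomial in $\tan\varphi$ with $\varphi=\theta-\pi/6$. At those angles they are not polynomial in $z$:
\begin{itemize}
\item at the lower clip angle, $\cot\theta_{\mathrm{cl}}=R/z$, with $R=\sqrt{c_0^2-z^2}$;
\item at the upper clip angle, $\tan\varphi_{\mathrm{cu}}=\widetilde R/u$, with $u=\sqrt3-z$ and $\widetilde R=\sqrt{c_0^2-u^2}$;
\item at the angle where the two facets meet, $\cot\theta_{\mathrm{LU}}=(2-\sqrt3 z)/z$ and $\tan\varphi_{\mathrm{LU}}=(\sqrt3 z-1)/u$.
\end{itemize}
The last angle is the split point of the saturation zone. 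What you call $z_{\rm cross}$ is a height; the angle is $\theta_{\mathrm{LU}}(z)$, and you list it but never evaluate there.

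If a factor such as $R/z$ or $1/z$ survived, the $z$-integral would produce $\ln\bigl((c_0+R)/z\bigr)$ or $\ln z$ at the zone endpoints. A rational denominator in general leads, through partial fractions, to arctangents. So your claim that the $z$-integrand is $p(z)\sqrt{4Y-z^2}$ or $p(z)\arcsin(z/2\sqrt Y)$ with $p$ polynomial is precisely what needs proof. Your angular lemma acts only at the $\theta$-stage and does not supply it.

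\paragraph{The missing step.} Use your homogeneity tie a second time. Each facet term carries a factor $z^{m+2}$ (respectively $u^{m+2}$) from the radial integral. Its antiderivative has degree at most $m+1$ in $\cot\theta$ (respectively $\tan\varphi$). Hence $z^{m+2}\cot^{a+1}\theta_{\mathrm{cl}}=z^{b+1}R^{a+1}$, and likewise at the other moving endpoints, so every evaluated term has positive valuation in both $z$ and $u$. This is how the paper concludes that no rational denominator in $z$ ever reaches the last integration. There is therefore no partial-fraction stage, and neither a logarithm nor an arctangent can be produced; the arctangents in the worked formulas are then arcsines at algebraic endpoints.

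Citing Remark~\ref{rem:arctan-is-arcsin} cannot replace this argument. It rewrites the particular arctangent of \eqref{eq:FR3}, but it says nothing about arctangents a general polynomial parent might generate.

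A smaller point: state your angular lemma as homogeneity, not parity. The term $\cos\theta\csc^{3}\theta$ does occur, with an odd exponent, and it is log-free only because it equals $\cot\theta\csc^{2}\theta$.
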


\begin{remark}[The interval is a normalization, not a hypothesis]
\label{rem:affine}
The restriction to $[0,1]$ carries no content.  If $X = A + (B-A)U$ with
$B > A$, then $s^2_X = (B-A)^2 s^2_U$, so
$F_X(Y) = F_U\bigl(Y/(B-A)^2\bigr)$, while a polynomial density on
$[A,B]$ pulls back to a polynomial density of the same degree on
$[0,1]$.  Every polynomial parent supported on a bounded interval
therefore closes in elementary terms, in the same function list, with
only the argument rescaled.  The same affine reduction applies to the
other bounded rows of Table~\ref{tab:hierarchy}, which is why they too
are stated on $[0,1]$: what distinguishes the rows is the function
class of the density, never the interval.
\end{remark}

\begin{proof}
In the sextant chart~\eqref{eq:CDF-sextant}, the inverse
transformation~\eqref{eq:inverse} with $\xi_1 = r\cos\theta$,
$\xi_2 = r\sin\theta$, $\xi_3 = z/\sqrt{3}$ gives each $X_i$ as
a linear function of $(r\cos\theta, r\sin\theta, z)$; explicitly,
\[
  X_i = \frac{z}{\sqrt3} + c_i(\theta)\,r, \qquad
  c_1 = B(\theta), \quad
  c_2 = B(\theta) - \sqrt{2}\cos\theta, \quad
  c_3 = -\tfrac{2}{\sqrt6}\sin\theta ,
\]
so each $X_i$ is affine in $r$ at fixed $(\theta, z)$, with a common
intercept $z/\sqrt3 = \bar{X}$ and a slope fixed by $\theta$ alone.  The
two active-facet conditions $X_1 = 1$ and $X_3 = 0$ therefore return
$r_{\rm hi} = (1 - z/\sqrt3)/B(\theta)$ and $r_{\rm lo} = z/A(\theta)$
at once: the facet coefficients~\eqref{eq:AB} are the $r$-slopes.  For a
polynomial parent $f(x) = \sum c_k x^k$, the product
$f(X_1)\,f(X_2)\,f(X_3)$ is therefore a polynomial in
$(r\cos\theta, r\sin\theta, z)$, and the CDF
integrand~\eqref{eq:CDF-sextant} is a finite sum of monomials
\[
  r^{m+1}\,\cos^a\!\theta\;\sin^b\!\theta\; z^c,
  \qquad m = a + b,\quad a, b, c \ge 0
\]
(the powers of $\cos\theta$ and $\sin\theta$ are tied to the power of
$r$ because both come from the same factors $X_i$, affine in
$r\cos\theta$ and $r\sin\theta$).
We integrate in the order $r \to \theta \to z$ (or equivalently
$r \to z \to \theta$) and track closure at each stage.

\paragraph{Stage 1: $r$-integral.}
$\int_0^{r_{\max}} r^{m+1}\dd{r} = r_{\max}^{m+2}/(m{+}2)$.
In each constraint region (\S\ref{sec:three-regions}),
$r_{\max}$ equals either $\sqrt{2Y}$, $z/A(\theta)$, or
$(1{-}z/\sqrt{3})/B(\theta)$---each a monomial in $z$ and
a trigonometric function of $\theta$.  So $r_{\max}^{m+2}$ is
a polynomial in $z$ times a power of $\sin\theta$ or
$\cos(\theta{-}\pi/6)$.

\paragraph{Stage 2: $\theta$-integral.}
In the tip and saturation zones, the $\theta$-integral runs
over the full sextant $[\pi/6, \pi/2]$.  Because $m = a + b$, the
lower-facet integrand reduces to
$\cot^{a}\!\theta\,\csc^{2}\theta$ (up to constants), with
antiderivative $-\cot^{a+1}\!\theta/(a{+}1)$; after the rotation
$\varphi = \theta - \pi/6$, the upper-facet integrand likewise reduces
to a sum of $\tan^{k}\!\varphi\,\sec^{2}\varphi$ terms.  In particular
no logarithmic terms can arise.

In the partial-clip zone, the angular limits depend on $z$
(the edge--circle crossing angles from
Remark~\ref{rem:arctan-source}), but the antiderivatives are
still elementary---they are evaluated at $z$-dependent
endpoints rather than fixed ones.

\paragraph{Stage 3: $z$-integral.}
After Stage~2 each term is an endpoint evaluation of one of the three
antiderivatives---lower facet, cylinder, upper facet---at an angle that
is either fixed ($\pi/6$ or $\pi/2$) or one of the three crossing
angles.  Write
\[
  c_0 = 2\sqrt{Y}, \qquad u = \sqrt{3} - z, \qquad
  R = \sqrt{c_0^2 - z^2}, \qquad \widetilde{R} = \sqrt{c_0^2 - u^2}.
\]
The crossing angles satisfy $\sin\theta_{\mathrm{cl}} = z/c_0$,
$\cos\varphi_{\mathrm{cu}} = u/c_0$ and
$\tan\theta_{\mathrm{LU}} = z/(2 - \sqrt{3}z)$.  The third is
\emph{rational} in $z$; the first two generate the single quadratic
extension $R$, respectively $\widetilde{R}$, of $\mathbb{Q}(z)$.  No
other algebraic function of $z$ occurs anywhere in the computation.

Substituting these into the Stage-2 antiderivatives exhibits the
mechanism that governs this stage.  The lower-facet antiderivative
$-\cot^{a+1}\!\theta/(a{+}1)$ evaluated at $\theta_{\mathrm{cl}}$
carries $\cot^{a+1}\!\theta_{\mathrm{cl}} = (R/z)^{a+1}$, a pole in $z$
of order $a{+}1 \le m{+}1$; the Stage-1 prefactor contributes $z^{m+2}$
\emph{because} $m = a+b$ ties the radial exponent to the angular ones,
leaving $z^{\,b+1}R^{\,a+1}$ with $b{+}1 \ge 1$.  The same count applies
to the upper facet under $z \mapsto u$.  Every evaluated term therefore
has valuation at least $1$: the tie kills the $z$-stage poles exactly
as it killed the Stage-2 logarithms.  Consequently, with
$\mathbb{F} = \mathbb{Q}(\sqrt{2}, \sqrt{3}, \sqrt{Y})$, the
$z$-integrand lies on every zone in
\[
  \mathbb{F}[z] \;\oplus\; \mathbb{F}[z]\,R \;\oplus\;
  \mathbb{F}[z]\,\widetilde{R} \;\oplus\;
  \mathbb{F}[z]\arcsin\frac{z}{c_0} \;\oplus\;
  \mathbb{F}[z]\arcsin\frac{u}{c_0},
\]
the two $\arcsin$ summands arising only when $a$ and $b$ are both even
and only from a cylinder arc with a moving endpoint.  The $z$-integrals
that occur are therefore exactly
\begin{enumerate}[topsep=2pt]
  \item $\int z^n \dd{z}$: polynomial.
  \item $\int z^n (c_0^2 - z^2)^{j + 1/2}\dd{z}$, $j \ge 0$: produces
    $\arcsin(z/c_0)$ and polynomial multiples of $R$ (by $z = c_0\sin\phi$).
  \item $\int z^n \arcsin(z/c_0)\dd{z}$: by one integration by parts,
    $\frac{z^{n+1}}{n+1}\arcsin\frac{z}{c_0}
     - \frac{1}{n+1}\int z^{n+1}(c_0^2 - z^2)^{-1/2}\dd{z}$,
    and the remaining integral reduces to a polynomial multiple of $R$
    plus a constant multiple of $\arcsin(z/c_0)$,
\end{enumerate}
together with their images under $z \mapsto \sqrt{3} - z$.  All three
close in the elementary class
$\{\text{poly}, \sqrt{\cdot}, \arcsin\}$.
\emph{No rational denominator in $z$ arises at any point}; there is
accordingly no partial-fraction stage, and no logarithm and no
arctangent can be produced.  Summing over zones and applying the
sextant factor of 6 gives $F(Y)$ in the claimed form.
\end{proof}

\begin{remark}[Every arctangent here is a rewritten arcsine]
\label{rem:arctan-is-arcsin}
The class above omits $\arctan$, yet arctangents do appear in the
worked formulas---\eqref{eq:FR3} for instance.  There is no conflict:
they are arcsines in disguise.  For $Y \in (\tfrac14, \tfrac12)$ put
$\Delta = \sqrt{4Y-1}$, so that $1 + \Delta^2 = 4Y$ and
$c_0 = \sqrt{1+\Delta^2}$; then
\begin{equation}\label{eq:arctan-arcsin}
  \arctan\frac{\Delta}{1-2Y} \;=\; 2\arctan\Delta
  \;=\; 2\arcsin\frac{\Delta}{c_0}.
\end{equation}
The first equality is the tangent double-angle formula, since
$\tan(2\arctan\Delta) = 2\Delta/(1-\Delta^2) = \Delta/(1-2Y)$ and
$2\arctan\Delta < \pi/2$ exactly when $\Delta < 1$, i.e.\ $Y < 1/2$, so
no branch correction is needed on the regime; the second is
$\arctan t = \arcsin(t/\sqrt{1+t^2})$ at $t = \Delta$.  The argument
$\Delta/c_0$ is an evaluation of the $\arcsin(z/c_0)$ terms above at an
algebraic zone endpoint.  This is what one should expect from
Remark~\ref{rem:arctan-source}: the crossing angle is defined by
$\sin\theta_{\mathrm{cl}} = z/c_0$, so the \emph{sine} of that angle is
the algebraic quantity and the arcsine is the primitive form.
\end{remark}

The proof used nothing about the integrand beyond its being a
polynomial in $(r\cos\theta, r\sin\theta, z)$, which yields a stronger
statement at no cost, and one that settles mixtures without any appeal
to linearity.

\begin{proposition}[Arbitrary polynomial integrands; mixtures]
\label{prop:mixtures}
Let $P(x_1, x_2, x_3)$ be any real polynomial.  Then the integral of
$P(X_1, X_2, X_3)$ against~\eqref{eq:CDF-sextant} is piecewise
elementary, in the class of Theorem~\ref{thm:poly-closure}.  In
particular, for a finite mixture $f = \sum_i w_i f_i$ of polynomial
densities the integrand
$f(X_1)f(X_2)f(X_3)
 = \sum_{i,j,k} w_i w_j w_k\, f_i(X_1) f_j(X_2) f_k(X_3)$
is covered term by term, cross terms between distinct components
included.
\end{proposition}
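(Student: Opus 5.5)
The plan is to rerun the proof of Theorem~\ref{thm:poly-closure} with the product $f(X_1)f(X_2)f(X_3)$ replaced by an arbitrary polynomial $P(X_1,X_2,X_3)$, and to check that no step used the product structure. The first step is the substitution. By the explicit form $X_i = z/\sqrt3 + c_i(\theta)\,r$ recorded in that proof, each $X_i$ is affine in $(r\cos\theta, r\sin\theta, z)$. Therefore $P(X_1,X_2,X_3)$ is a polynomial in $(r\cos\theta, r\sin\theta, z)$, and after multiplying by the Jacobian factor $r$ the integrand of~\eqref{eq:CDF-sextant} is a finite linear combination of monomials $r^{m+1}\cos^a\theta\,\sin^b\theta\,z^c$ with $m=a+b$. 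The tie $m=a+b$ holds for every monomial of $P$, not only for products of univariate factors, because it comes from each $r$ carrying exactly one $\cos\theta$ or $\sin\theta$.

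The second step is the integration, which I will not redo: I will cite Stages 1--3 of the proof of Theorem~\ref{thm:poly-closure}. Those stages take a single monomial of this shape, integrated over a single constraint region of \S\ref{sec:three-regions}, to the class $\{\text{poly},\sqrt{\cdot},\arcsin\}$ with the stated radicals. Integration is linear and there are finitely many monomials and zones, so the integral of $P$ lies in the same class, piecewise in $Y$ with the same breakpoints.

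The third step concerns the factor $6$ in~\eqref{eq:CDF-sextant}. That factor encodes $S_3$-invariance of the integrand, and a general $P$ need not be symmetric. To handle this, I will either replace $P$ by its symmetrization $\frac16\sum_{\sigma\in S_3}P\circ\sigma$, which is again a polynomial, or integrate each of the six sextants separately. In the second option, each sextant is the ordered one after a permutation of coordinates, and that permutation acts linearly on $(X_1,X_2,X_3)$ and so preserves polynomiality. Either way, the reduction to the ordered sextant is legitimate.

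The last step is mixtures, which is the stated corollary. Expanding $f(X_1)f(X_2)f(X_3)$ for $f=\sum_i w_i f_i$ with polynomial $f_i$ gives a polynomial in $(X_1,X_2,X_3)$. This includes the cross terms $f_i(X_1)f_j(X_2)f_k(X_3)$, which are not symmetric individually but are covered by the previous step. The main point to watch is that the valuation count in Stage~3 of the earlier proof, which kills poles in $z$ via $z^{b+1}R^{a+1}$, is stated monomialwise and uses only $m=a+b$. I will verify this explicitly, since it is the one place where a careless generalization could admit a rational denominator in $z$ and hence a logarithm.
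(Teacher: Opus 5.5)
Your proposal is correct and follows the paper's proof. Both expand $P$ into monomials $r^{m+1}\cos^a\theta\,\sin^b\theta\,z^c$, with the tie $m=a+b$ inherited from each $X_i$ being affine in $(r\cos\theta, r\sin\theta, z)$, and then apply Stages~1--3 of Theorem~\ref{thm:poly-closure} termwise. Your symmetrization step is not needed for the statement as phrased, which concerns the sextant integral~\eqref{eq:CDF-sextant} itself and whose mixture integrand is symmetric as a whole, but it is correct and is the right fix if one wants the full-cube integral of a non-symmetric $P$.
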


\begin{proof}
Each $X_i$ is affine in $(r\cos\theta, r\sin\theta, z)$, so
\[
  P(X_1, X_2, X_3)
  = \sum c_{\alpha\beta\gamma}\,
    r^{\alpha+\beta}\cos^\alpha\!\theta\,\sin^\beta\!\theta\,z^\gamma,
\]
in which the tie $m = \alpha + \beta$ holds automatically: the powers
of $r$ and of the angular functions come from the same factors, for any
polynomial whatever, not only for products $f(X_1)f(X_2)f(X_3)$.
Apply the three stages above termwise.
\end{proof}

Note that the mixture integrand is \emph{cubic} in the weights $w_i$,
not linear, so no linearity-in-the-weights argument is available; none
is needed, because the argument never sees the density, only the
polynomial.  The one caveat is geometric rather than algebraic: this
certifies mixtures over the fixed region of~\eqref{eq:CDF-sextant}, and
a component that alters the support is outside its scope.

\begin{corollary}[Integer-parameter Beta parents]
For $f(x) = x^{a-1}(1-x)^{b-1}/\mathrm{B}(a,b)$ with positive integers
$a, b$, $F(Y)$ is piecewise elementary.  Here
$\mathrm{B}(a,b) = (a-1)!\,(b-1)!/(a+b-1)!$, so $f$ is a polynomial of
degree $a+b-2$ with rational coefficients and
Theorem~\ref{thm:poly-closure} applies directly.
\end{corollary}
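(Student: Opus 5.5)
The corollary should reduce to Theorem~\ref{thm:poly-closure}, so the plan is to check its hypothesis and then invoke it. For positive integers $a,b$, both exponents $a-1$ and $b-1$ are nonnegative integers. Hence $x^{a-1}(1-x)^{b-1}$ is a polynomial, and expanding $(1-x)^{b-1}$ by the binomial theorem gives
\[
  x^{a-1}(1-x)^{b-1} = \sum_{j=0}^{b-1}\binom{b-1}{j}(-1)^j x^{a-1+j},
\]
a polynomial of exact degree $a+b-2$ with integer coefficients.

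The second step is the normalizing constant. Using the Gamma form $\mathrm{B}(a,b)=\Gamma(a)\Gamma(b)/\Gamma(a+b)$ recalled in the introduction, together with $\Gamma(k)=(k-1)!$ for positive integers $k$, we get
\[
  \mathrm{B}(a,b)=\frac{(a-1)!\,(b-1)!}{(a+b-1)!},
\]
a positive rational number. Dividing the integer-coefficient polynomial by it gives $f(x)=\sum_k c_k x^k$ with $c_k\in\mathbb{Q}$. This $f$ is a polynomial density on $[0,1]$, exactly the hypothesis of Theorem~\ref{thm:poly-closure}.

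Applying the theorem then shows $F(Y)$ is piecewise elementary, in the stated class (polynomials, the radicals $\sqrt{c_0^2-z^2}$ and $\sqrt{c_0^2-(\sqrt3-z)^2}$ at algebraic zone endpoints, and $\arcsin$). Rationality of the coefficients is not needed for the conclusion. It only keeps the constants in the field $\mathbb{F}$ used in the theorem's proof.

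The only real check is integrality of the exponents. If $a$ or $b$ were non-integer, $f$ would fall outside the polynomial row, which is why the corollary restricts to integer parameters. No step here is a genuine obstacle; the substantive work was done in Theorem~\ref{thm:poly-closure}.
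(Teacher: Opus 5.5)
Your proposal is correct and is the same argument the paper gives. For positive integers $a,b$ the kernel $x^{a-1}(1-x)^{b-1}$ is a polynomial of degree $a+b-2$, and $\Gamma(k)=(k-1)!$ makes $\mathrm{B}(a,b)$ a positive rational, so Theorem~\ref{thm:poly-closure} applies directly; your binomial expansion and exact-degree check only make explicit what the paper states in one line.
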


\subsection{Worked examples: Beta parents}\label{sec:beta-examples}

The following closed-form CDFs, obtained by evaluating the sextant
integral~\eqref{eq:CDF-sextant}, illustrate the closure theorem.  Each
fully displayed formula was verified against Monte Carlo simulation
($2\times10^{7}$ samples per parent, agreement within the
$\approx 2.2\times10^{-4}$ sampling noise floor) and against direct numerical integration
of~\eqref{eq:CDF-sextant}, with agreement to $10^{-8}$ or better at
every tested point.  Each exhibits the same structural pattern: the
$Y \le 1/4$
formula is a polynomial in $Y$ and $Y^{1/2}$ (no $\arctan$), while
$Y > 1/4$ acquires $\arcsin$ or $\arctan$ terms from the partial-clip
zone.

\paragraph{Beta$(2,1)$: $f(x) = 2x$.}
\begin{equation}
  F^{\rm Beta(2,1)}\!(Y) = \begin{cases}
    4\sqrt{3}\pi Y - 32\sqrt{3}\, Y^{3/2} + (18 + 4\sqrt{3}\pi)\, Y^2,
      & Y \le \tfrac{1}{4},\\[6pt]
    \begin{aligned}
      &18 Y^2 - 32\sqrt{3}\, Y^{3/2} + (1 + 26Y)\sqrt{3(4Y{-}1)} \\
      &+ 8\sqrt{3} Y(1{+}Y)\bigl[\arcsin\tfrac{1}{2\sqrt{Y}}
        - \arcsin\tfrac{\sqrt{4Y-1}}{2Y}\bigr],
    \end{aligned}
      & Y \in (\tfrac{1}{4}, \tfrac{1}{3}].
  \end{cases}
  \label{eq:beta21}
\end{equation}
The leading coefficient $4\sqrt{3}\pi = 2\sqrt{3}\pi \cdot 2$
reflects $\int_0^1 f^3 = 2$.

\paragraph{Beta$(2,2)$: $f(x) = 6x(1{-}x)$.}
\begin{equation}
  F^{\rm Beta(2,2)}_{Y \le 1/4}\!(Y) =
    \tfrac{108\sqrt{3}\pi}{35}\, Y
    - \tfrac{108\sqrt{3}\pi}{5}\, Y^2
    + (162 + 72\sqrt{3}\pi)\, Y^3
    - \tfrac{62208\sqrt{3}}{175}\, Y^{7/2}
    + \tfrac{110592\sqrt{3}}{1225}\, Y^{9/2}.
  \label{eq:beta22-R12}
\end{equation}
The five-term form (compared to two terms for Uniform, three for
Beta$(2,1)$) demonstrates how the parent degree controls the number
of $Y$-powers.

For $Y > 1/4$, define
$\alpha_{\rm cap} = \arcsin(1/(2\sqrt{Y}))$,
$\beta_{\rm diff} = \arcsin(\sqrt{4Y{-}1}/(2Y))$:
\begin{align}
  F^{\rm Beta(2,2)}_{Y > 1/4}\!(Y) ={}& P_0(Y)
    + Q(Y)\,(\alpha_{\rm cap} - \beta_{\rm diff})
    + \sqrt{3}\, S(Y)\, \sqrt{4Y{-}1}
  \notag\\
  & + \sqrt{3} \sum_{k=1,3,5,7,9} R_k(Y)\,(12Y{-}1)^{k/2},
  \label{eq:beta22-R3}
\end{align}
where
\begin{align*}
  P_0 &= 162\, Y^3
    - \tfrac{62208\sqrt{3}}{175}\, Y^{7/2}
    + \tfrac{110592\sqrt{3}}{1225}\, Y^{9/2}, \\
  Q &= \tfrac{72\sqrt{3}}{35}\, Y(70 Y^2 - 21 Y + 3), \\
  S &= \tfrac{1}{1225}\bigl(1575 + 5256 Y - 80460 Y^2
    + 421632 Y^3 - 110592 Y^4\bigr),
\end{align*}
and $R_1, R_3, R_5, R_7, R_9$ are explicit polynomials in $Y$
(degrees 4, 3, 2, 1, 0 respectively):
\begin{align*}
  R_1 &= \tfrac{701}{25515}
    - \tfrac{269953}{297675}\, Y
    + \tfrac{181292}{33075}\, Y^2
    + \tfrac{602096}{33075}\, Y^3
    - \tfrac{114368}{11025}\, Y^4, \\
  R_3 &= \tfrac{1}{5}\, Y - \tfrac{36}{5}\, Y^2
    + \tfrac{192}{35}\, Y^3, \quad
  R_5 = -\tfrac{1}{15} + \tfrac{24}{25}\, Y
    - \tfrac{156}{175}\, Y^2, \\
  R_7 &= -\tfrac{38}{945} + \tfrac{8}{147}\, Y, \quad
  R_9 = -\tfrac{26}{25515}.
\end{align*}

\paragraph{Beta$(3,2)$: $f(x) = 12x^2(1{-}x)$.}
\begin{align}
  F^{\rm Beta(3,2)}_{Y \le 1/4}\!(Y) ={}&
    \tfrac{144\sqrt{3}\pi}{35}\, Y
    - \tfrac{216\sqrt{3}\pi}{5}\, Y^2
    + (648 + 288\sqrt{3}\pi)\, Y^3
  \notag\\
  &+ (1458 + 504\sqrt{3}\pi)\, Y^4
    - \tfrac{497664\sqrt{3}}{175}\, Y^{7/2}
    - \tfrac{884736\sqrt{3}}{1225}\, Y^{9/2}.
  \label{eq:beta32-R12}
\end{align}
Six terms, with an explicit $Y^4$ contribution---the signature
of the parent degree increasing from 2 to 3.

For $Y > 1/4$, define
$\alpha_{\rm cap}^\pm
= \arcsin\bigl((\sqrt{3(4Y{-}1)}\pm 1)/(4\sqrt{Y})\bigr)$:
\begin{align}
  F^{\rm Beta(3,2)}_{Y > 1/4}\!(Y) ={}& P_0(Y)
    + Q_+(Y)\,\alpha_{\rm cap}^+
    + Q_-(Y)\,\alpha_{\rm cap}^-
    + Q_0(Y)\,\alpha_{\rm cap}
  \notag\\
  & + \sqrt{3}\,S(Y)\,\sqrt{4Y{-}1}
    + \sqrt{3}\sum_{k=1,3,5,7,9} R_k(Y)\,(12Y{-}1)^{k/2}
  \notag\\
  & + \sqrt{6}\sum_{\pm} T^\pm_{1/2}(Y)\,
    \bigl(2Y \pm \sqrt{3(4Y{-}1)} + 1\bigr)^{1/2}
  \notag\\
  & + \sqrt{6}\sum_{\pm} T^\pm_{3/2}(Y)\,
    \bigl(2Y \pm \sqrt{3(4Y{-}1)} + 1\bigr)^{3/2}
  \notag\\
  & + \sqrt{2}\,\sqrt{4Y{-}1}\sum_{\pm}\Bigl[
    U^\pm_{1/2}(Y)\,
    \bigl(2Y \pm \sqrt{3(4Y{-}1)} + 1\bigr)^{1/2}
  \notag\\
  &\qquad\qquad + U^\pm_{3/2}(Y)\,
    \bigl(2Y \pm \sqrt{3(4Y{-}1)} + 1\bigr)^{3/2}
  \Bigr],
  \label{eq:beta32-R3}
\end{align}
with $\alpha_{\rm cap}$ as above and all coefficient functions
explicit polynomials in $Y$ with rational coefficients:
\begin{align*}
  P_0 &= 648\,Y^3 + 1458\,Y^4
    - \tfrac{497664\sqrt{3}}{175}\,Y^{7/2}
    - \tfrac{884736\sqrt{3}}{1225}\,Y^{9/2}, \\
  Q_+ &= Q_- = -Q_0
    = -\tfrac{144\sqrt{3}}{35}\,Y\,(245Y^3 + 140Y^2 - 21Y + 2), \\
  S &= \tfrac{3}{245}\bigl(54192Y^4 + 153648Y^3 - 18952Y^2
    + 7898Y - 1981\bigr), \\
  R_1 &= \tfrac{28}{729} - \tfrac{730564}{297675}\,Y
    - \tfrac{118768}{33075}\,Y^2 + \tfrac{10690304}{33075}\,Y^3
    + \tfrac{914944}{11025}\,Y^4, \\
  R_3 &= \tfrac{4}{5}\,Y - \tfrac{5504}{105}\,Y^2
    - \tfrac{1536}{35}\,Y^3, \quad
  R_5 = -\tfrac{4}{15} + \tfrac{7456}{1575}\,Y
    + \tfrac{1248}{175}\,Y^2, \\
  R_7 &= -\tfrac{208}{945} - \tfrac{64}{147}\,Y, \quad
  R_9 = \tfrac{208}{25515}, \\
  T^+_{1/2} &= \tfrac{72}{35} - \tfrac{87}{14}\,Y
    - \tfrac{39}{5}\,Y^2 + \tfrac{9333}{70}\,Y^3, \quad
  T^-_{1/2} = -\tfrac{12}{7} - \tfrac{159}{98}\,Y
    + \tfrac{64563}{1225}\,Y^2 - \tfrac{72489}{350}\,Y^3
    + \tfrac{100824}{1225}\,Y^4, \\
  T^+_{3/2} &= \tfrac{6}{35} - \tfrac{3564}{1225}\,Y
    + \tfrac{58374}{1225}\,Y^2 + \tfrac{17964}{1225}\,Y^3, \quad
  T^-_{3/2} = -\tfrac{18}{35} + \tfrac{1152}{175}\,Y
    - \tfrac{252}{5}\,Y^2 - \tfrac{9768}{175}\,Y^3, \\
  U^+_{1/2} &= \tfrac{351}{35} - \tfrac{171}{10}\,Y
    - \tfrac{3456}{35}\,Y^2 - \tfrac{10449}{70}\,Y^3, \quad
  U^-_{1/2} = 9 - \tfrac{153}{14}\,Y
    - \tfrac{106056}{1225}\,Y^2 - \tfrac{520659}{2450}\,Y^3, \\
  U^+_{3/2} &= -\tfrac{54}{35} - \tfrac{11232}{1225}\,Y
    + \tfrac{2106}{1225}\,Y^2, \quad
  U^-_{3/2} = -\tfrac{54}{35} - \tfrac{108}{25}\,Y
    - \tfrac{4968}{175}\,Y^2.
\end{align*}
The identity $Q_+ = Q_- = -Q_0$ means the three $\arcsin$ terms enter
only through the combination
$\alpha_{\rm cap} - \alpha_{\rm cap}^+ - \alpha_{\rm cap}^-$.  The
reference implementation codes this formula verbatim and validates it
against Monte Carlo and direct numerical integration.

\paragraph{Pattern.}
In each case, the leading coefficient is
$2\sqrt{3}\pi \int_0^1 f(x)^3 \dd{x}$, matching the small-$Y$
cylinder-volume scaling.  The $Y > 1/4$ formulas share the same
family of radicals---$\sqrt{4Y{-}1}$, $\sqrt{12Y{-}1}$, the nested
radicals $\sqrt{2Y \pm \sqrt{3(4Y{-}1)} + 1}$, and $\arcsin$
pairs---with polynomial coefficients whose degree grows with the
parent degree, precisely as predicted by the closure theorem.
The structural complexity increases (from 5~terms for Uniform to
$\sim 20$ for Beta$(3,2)$), but every term remains elementary.

%=====================================================================
\section{The closure hierarchy}\label{sec:hierarchy}
%=====================================================================

The diagonal-orthogonal derivation makes the \emph{closure obstruction}
visible: the parent enters the integral~\eqref{eq:CDF-sextant}
multiplicatively, so when $f$ is not polynomial the $r$-integral
produces non-polynomial primitives, and the $\theta$- and $z$-integrals
inherit these.

Two conventions keep the hierarchy honest.  First, polynomial, rational
and algebraic densities are \emph{nested} classes, not alternatives; a
table indexed by ``the class of $f$'' would not partition the parents.
We therefore classify each parent by the \emph{minimal} class
containing its density: polynomial; rational but not polynomial (the
density $p/q$ in lowest terms with $\deg q \ge 1$); algebraic but not
rational (minimal polynomial of degree $\ge 2$ over $\mathbb{R}(x)$,
with the relevant degree stated per row); transcendental.  Parents with
unbounded support sit outside the $[0,1]$ cube geometry altogether and
are listed as a separate block.  Second, each row carries \emph{two}
status judgments, kept separate throughout: what is \emph{proved}, with
the proved object and its scope (the whole class, or an explicit
parent) stated exactly; and what is \emph{expected}---the conjectured
special-function class of the assembled CDF, which in no row other than
the polynomial one is established.

\subsection{Hierarchy table}

\begin{table}[h]
\centering\footnotesize
\setlength{\tabcolsep}{4pt}
\renewcommand{\arraystretch}{1.35}
% Ragged-right paragraph columns: justified p-columns stretched the
% inter-word glue in the short cells ("$f\equiv1$, Beta(2,1)") until the
% words drifted apart.  Widths rebalanced to use the full text block.
\newcolumntype{L}[1]{>{\raggedright\arraybackslash}p{#1}}
\begin{tabular}{L{2.7cm}L{2.6cm}L{3.0cm}L{4.2cm}L{2.3cm}}
\toprule
Parent class (minimal) & Example & Obstruction integral & Proved & Expected class of $F$ \\
\midrule
Polynomial & $f \equiv 1$, Beta$(2,1)$ & none &
  $F$ elementary, \emph{entire class}, mixtures included
  (Thm~\ref{thm:poly-closure}) & ---\,(settled) \\
Rational, non-polynomial & $\dfrac{1}{(1+x)\ln 2}$ &
  $\int \ln(1+\cdot)\dd{\theta}$ &
  $F$ \emph{not} elementary, both branches, \emph{this parent}
  (Thm~\ref{thm:rational}) & Li$_2$, Li$_3$ (polylog) \\
Algebraic, non-rational (degree $2$; sextic square-free) &
  Beta$(\tfrac{1}{2}, \tfrac{1}{2})$ &
  $\int r\,P_6(r)^{-1/2}\dd{r}$ &
  radial integral \emph{not} elementary whenever $P_6$ is square-free
  (first kind, genus $2$) & hyperelliptic\newline (genus $2$) \\
Transcendental, bounded & $\dfrac{\lambda e^{-\lambda x}}{1-e^{-\lambda}}$ &
  $\int e^{-cz}\,\mathrm{alg}(z)\dd{z}$ &
  ---\,(identified only) & Ei, Bessel--Struve \\
\midrule
\multicolumn{5}{l}{\emph{Unbounded support (outside the $[0,1]$ cube geometry):}}\\
Exponential on $[0,\infty)$ & $e^{-x}$ &
  $\int e^{-cz}\sqrt{a^2 - z^2}\dd{z}$ &
  ---\,(identified only) & Bessel series \\
Half-Gaussian on $[0,\infty)$ & $\sqrt{2/\pi}\,e^{-x^2/2}$ &
  $\int \operatorname{erf}\bigl(c(\theta)\,\cdot\bigr)\dd{\theta}$ &
  ---\,(identified only) & Owen's $T$ \\
\bottomrule
\end{tabular}
\caption{The closure hierarchy.  Rows are indexed by the \emph{minimal}
function class containing the parent density, so they partition the
parents; unbounded-support parents form a separate block.  The
\emph{Proved} column states the proved object and its scope exactly:
only the polynomial row is a class-wide theorem
(Section~\ref{sec:polynomial}, extended to arbitrary polynomial
integrands and hence to mixtures); the rational row proves
non-elementarity of the assembled CDF for the stated parent, the
certificate (a nonvanishing $\Lambda^2$ invariant,
Appendix~\ref{app:rational}) being computed per parent while the method
applies to any rational parent; the algebraic row proves
non-elementarity of the \emph{radial obstruction integral} under the
stated genericity (square-free $P_6$)---not of the assembled $F$.  The
\emph{Expected} column is conjectural in every row where it is
non-trivial: it names the special-function class the obstruction
suggests, and no minimality or irreducibility is claimed there.}
\label{tab:hierarchy}
\end{table}

\subsection{Reading the obstructions}

Each row of the hierarchy table corresponds to a specific integral that
arises in the sextant chart~\eqref{eq:CDF-sextant} and fails to close in
elementary functions.  The scope discipline of the table is worth
restating in prose, because the three proved rows prove three
\emph{different kinds} of statement.  The polynomial row is class-wide:
Theorem~\ref{thm:poly-closure} covers every polynomial density, and via
the arbitrary-integrand proposition every finite mixture, with no
genericity assumption.  The rational row proves the strongest possible
conclusion---non-elementarity of the assembled CDF on both
branches---but for one explicit parent: the proof reduces to the
nonvanishing of a finite, exactly computable invariant
(Appendix~\ref{app:rational}), so the \emph{method} applies verbatim to
any rational parent, while the \emph{certificate} must be recomputed
per parent; one example cannot classify the class, and we do not claim
it does.  The algebraic row proves a strictly weaker statement about a
strictly larger family: for any parent whose weight makes the radial
integrand $r/\sqrt{P_6(r)}$ with $P_6$ square-free of degree six---the
generic configuration for Beta$(\tfrac12,\tfrac12)$-type parents---the
radial integral is a first-kind differential on a genus-$2$ curve and
hence non-elementary.  That does \emph{not} prove the assembled $F$
non-elementary, nor that $F$ requires genus-$2$ functions; those remain
open, and the table's Expected column records them as such.  For the
remaining (transcendental, unbounded-support) rows we exhibit the
obstruction integral and its function class but do not prove that no
reduction to a simpler class exists---irreducibility is the natural
conjecture in each case, and nothing more.

\paragraph{Rational parent.}
For $f(x) = 1/((1+x)\ln 2)$ on $[0,1]$, the product
$f(X_1)\,f(X_2)\,f(X_3)$ in the sextant chart is
\[
  \frac{1}{(\ln 2)^3} \cdot
  \frac{1}{(1 + X_1)(1 + X_2)(1 + X_3)},
\]
where $X_i = X_i(r, \theta, z)$ are affine in $r$.  The $r$-integrand
$r / \prod(1 + X_i)$ is rational, and its primitive contains
logarithms $\ln(1 + X_i)$; the subsequent $\theta$-integral of such a
logarithm produces a dilogarithm, and the $z$-integral a trilogarithm.  We
verified the first step by computer algebra: the radial integral returns
logarithms, and integrating one of them over the sextant
$\theta\in[\tfrac\pi6,\tfrac\pi2]$ returns a $\operatorname{Li}_2$ (the angular
integral of $\ln$ of an affine form in $\cos\theta,\sin\theta$).  In fact
the dilogarithm provably \emph{survives} the full assembly---no identity
cancels it:

\begin{theorem}[Non-elementarity for the rational parent]\label{thm:rational}
For the parent density $f(x) = 1/((1+x)\ln 2)$ on $[0,1]$, neither
branch of the $n=3$ sample-variance CDF---$F|_{(0,1/4)}$ or
$F|_{(1/4,1/3)}$---is an elementary function of $Y$.  In particular
$F$ admits no piecewise-elementary closed form of the kind obtained for
polynomial parents in Sections~\ref{sec:cdf-uniform}--\ref{sec:polynomial}.
\end{theorem}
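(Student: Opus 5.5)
The plan is to show that the density $F'(Y)$ on each branch contains a dilogarithmic part that cannot be removed, and then to transfer non-elementarity from $F'$ to $F$.

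\textbf{Reduction to $F'$.} If $F$ were elementary on an interval, then so would be $F'$, because elementary functions are closed under differentiation. So it suffices to show that $F'$ is not elementary on $(0,\tfrac14)$, and likewise on $(\tfrac14,\tfrac13)$. Differentiating the sextant integral~\eqref{eq:CDF-sextant} with respect to $Y$ removes the radial integration on the variance-binding region. Only the boundary term survives:
\[
F'(Y)=6\int\!\!\int_{\{r_{\max}=\sqrt{2Y}\}} f(X_1)f(X_2)f(X_3)\big|_{r=\sqrt{2Y}}\;d\theta\,dz .
\]
This is a double integral of a rational function of $(\cos\theta,\sin\theta,z)$ over a region with algebraic or trigonometric boundaries. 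Doing the $z$-integral first, partial fractions in $z$ give logarithms of affine forms in $(\cos\theta,\sin\theta)$ evaluated at the region boundaries $z=A\sqrt{2Y}$ and $z=\sqrt3(1-B\sqrt{2Y})$. The remaining $\theta$-integral of $\ln(\text{linear trig form})$ times a rational factor is a sum of $\operatorname{Li}_2$ values at algebraic arguments depending on $\sqrt Y$, plus elementary terms. I would carry out this reduction explicitly, substituting $t=\tan(\theta/2)$ so that everything becomes $\int R(t)\ln Q(t)\,dt$, and collect the output as $F'(Y)=E(Y)+\sum_j \kappa_j(Y)\operatorname{Li}_2(w_j(Y))$. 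Here $E$ is elementary, and $\kappa_j,w_j$ are algebraic in $s=\sqrt Y$.

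\textbf{Irreducibility certificate.} To show that the dilogarithmic part cannot cancel, I would use the standard symbol and Bloch-group test. An elementary function has a symbol lying in the image of products of logarithms that are symmetric in a suitable sense. The genuinely weight-two content is measured by the antisymmetric invariant $\sum_j \kappa_j\,(w_j\wedge(1-w_j))$, taken in $\Lambda^2$ of the multiplicative group of the algebraic function field $\mathbb{Q}(s,\ldots)^\times\otimes\mathbb{Q}$, up to the coefficient functions. More robustly, I would apply a Liouville--Rosenlicht-type theorem for dilogarithms (in the style of Baddoura's extension of Liouville's theorem). Under it, $\sum\kappa_j\operatorname{Li}_2(w_j)$ with nonconstant $\kappa_j$ is elementary only if a computable combination of the $w_j\wedge(1-w_j)$ vanishes modulo decomposable terms. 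The certificate is then the exact computation that this $\Lambda^2$ element is nonzero on each branch. I would check this by evaluating valuations at a few places, for instance the branch points $Y\to0$ or $Y\to\tfrac14$, where $w_j$ or $1-w_j$ have distinct orders. The monodromy of $F'$ around such a point is nonzero and not of logarithmic type.

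\textbf{Main obstacle.} I expect the hardest step to be the bookkeeping on the upper branch. There the region $\{r_{\max}=\sqrt{2Y}\}$ pinches off, the $\theta$-limits become $Y$-dependent crossing angles, and extra $\operatorname{Li}_2$ terms appear that could a priori cancel the lower-branch ones through the five-term relation. My first attempt would be to compute the $\Lambda^2$ invariant mechanically with exact algebra over $\mathbb{Q}(\sqrt2,\sqrt3,s,\Delta)$, reduce modulo $2$-torsion and constants, and exhibit a single place (for example $Y\to\tfrac13^-$) with nonvanishing residue pairing. This mirrors the per-parent certificate advertised for Appendix~\ref{app:rational}.
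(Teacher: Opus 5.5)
Your outline follows the same skeleton as the paper's proof. You pass to $F'$, keep only the boundary term on the variance-binding region, integrate out $z$ by partial fractions, and certify the dilogarithmic part with a nonvanishing antisymmetric invariant backed by a Liouville-type lemma. Carried out, your $z$-integration lands exactly on the paper's one-variable integral, $F'(Y)=\frac{3\sqrt3}{(\ln 2)^3 Y}\,I(\rho)$.

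The routes part at the next step. You would evaluate $I$ as $\sum_j c_j\operatorname{Li}_2(w_j)$ plus elementary terms and read off $\sum_j c_j\,(1-w_j)\wedge w_j$. The paper instead differentiates once more. After $t=\tan(\theta/2)$, $I'$ collapses to rational integrals evaluated by residues, giving $dI=\sum R_j\ln h_j\,dp_j$, and one integration by parts leaves only constant-coefficient pairings $\int\ln f\,d\ln h$. Since $\operatorname{Li}_2(w)=-\int\ln(1-w)\,d\ln w$, and two representations of $I$ differ by an elementary function whose invariant vanishes by Lemma~\ref{lem:liouville-dilog}, the two invariants coincide.

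The paper's route never has to handle a dilogarithm. The angular coefficients $1/(\cos\theta\,s_\pm)$ and $1/(s_+s_-)$ have poles exactly at the endpoints $t=1$ and $t=2-\sqrt3$. So your individual $\operatorname{Li}_2$ pieces diverge and must be regrouped or regularized. After one differentiation these factors cancel algebraically, and in the wide regime the integrand vanishes at the pinch, so the moving endpoints add no boundary terms. In exchange, your route would give an explicit dilogarithmic closed form for the density, which the paper never writes down.

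Several statements need fixing before the plan is sound.

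\begin{itemize}
\item \textbf{Constant coefficients.} The criterion of Lemma~\ref{lem:liouville-dilog} is for constant coefficients and does not extend as stated to nonconstant $\kappa_j$. Also, ``modulo decomposable terms'' is vacuous in $\Lambda^2$. Here the problem is harmless: the partial fractions in $z$ produce a common factor $1/\rho^2$ times $\rho$-independent angular coefficients. So $\kappa_j=3\sqrt3\,c_j/((\ln 2)^3Y)$ with algebraic residues $c_j$. Apply the criterion to $YF'(Y)$, i.e.\ to $I$, testing vanishing in $\overline{\mathbb{Q}}\otimes\Lambda^2(K^\times/\overline{\mathbb{Q}}^\times)$ through divisors.
\item \textbf{Monodromy.} Drop the monodromy remark. $\operatorname{Li}_2$ has logarithmic monodromy ($-2\pi i\ln w$ around $w=1$), as do elementary functions such as $\ln w\,\ln(1-w)$. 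So the claim is neither true nor a certificate.
\item \textbf{The field.} $\mathbb{Q}(\sqrt2,\sqrt3,s,\Delta)$ is too small. The upper-boundary logarithms $\ln(2-\sqrt2\rho\cos\theta)$ and $\ln(2-\sqrt2\rho\,s_+)$ have $t$-roots involving $i$ and $\sqrt{4-\kappa^2}=2\sqrt{1-Y}$, with $\kappa=\sqrt2\rho=2\sqrt Y$ the paper's variable (not your $\kappa_j$). There are therefore two conic sectors. Cancellation between them must be excluded, which the paper's Step~5 does by showing the second sector's divisors avoid the fibres over $\kappa=\pm2/\sqrt3$.
\item \textbf{Where to probe.} Those fibres, lying over $Y=\tfrac13$, together with $\kappa=\infty$, carry the nonzero entries of the paper's low-branch invariant. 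None sit over $Y=0$ or $Y=\tfrac14$, so probing those branch points would show nothing. Your instinct to look at $Y=\tfrac13$ is the right one. Because the invariant lives on the complex curve rather than on a branch's real interval, it serves the low branch too. For the wide branch you then only check that the extra terms from the moving endpoints $\tau_{1,2}$ leave two of those entries untouched.
\end{itemize}
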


The proof (Appendix~\ref{app:rational}) goes through the density: the
derivative $F'(Y)$ reduces exactly to a one-variable
integral whose dilogarithmic content is measured by an invariant in
$\Lambda^2$ of the multiplicative group of a function field; a Liouville-type
lemma shows the invariant must vanish if $F$ is elementary, and an exact
computation over $\mathbb{Q}(i,\sqrt{3})$ shows it does not.
$\operatorname{Li}_3$ remains the expected class for the assembled $F$
itself, from the third ($z$) integration.  This is
unrelated to the polylogarithm-\emph{ladder}
growth in $n$ once conjectured for the \emph{uniform} parent, which does not
hold; here the polylogarithm is generated by the rational \emph{parent}, at
fixed $n=3$.

\paragraph{Arcsine parent (hyperelliptic obstruction).}
For $f(x) = 1/(\pi\sqrt{x(1-x)})$ (Beta$(\tfrac{1}{2}, \tfrac{1}{2})$),
the product $f(X_1)\,f(X_2)\,f(X_3)$ contains
$[X_1 X_2 X_3 (1-X_1)(1-X_2)(1-X_3)]^{-1/2}$.  In the sextant chart,
$X_i$ are affine in $r$, so the $r$-integrand has the form
$r / \sqrt{P_6(r)}$ where $P_6$ is a degree-6 polynomial in $r$ with
six generically distinct roots.  The radical defines a hyperelliptic
curve $w^2 = P_6(r)$ of \emph{genus $2$}, and $r\dd{r}/w$ is precisely
one of its two basis differentials \emph{of the first kind} (numerator
degree $1 \le g - 1$), hence holomorphic on the curve.  Abelian
integrals of the first kind are never elementary---a classical fact of
Liouville theory---so this obstruction is \emph{proven}, not merely
exhibited: no change of variables reduces the $r$-integral to
elementary form.  The hypothesis doing the work is that $P_6$ is
\emph{square-free}; at exceptional parameter values where roots of
$P_6$ collide, the genus drops and the argument must be revisited, and
a general algebraic parent of higher minimal degree produces a
different curve entirely---which is why the row is stated for this
degree and this genericity, not for ``algebraic parents'' at large.
Two statements are deliberately \emph{not} made: that the assembled
$F$ is non-elementary for the arcsine parent (the radial obstruction
could in principle cancel across the assembly, as logarithms do in the
polynomial proof), and that $F$ requires genus-$2$ functions.
Reduction to \emph{elliptic} integrals would require
the Jacobian of the curve to split; we have no evidence for a splitting
and leave that possibility open.  The obstruction is structural and
parent-driven, yet $F(Y)$ remains efficiently computable, as
Section~\ref{sec:singular} shows.

\paragraph{Exponential parent on $[0, \infty)$.}
For $f(x) = e^{-x}$ on $[0, \infty)$ the weight collapses onto the
diagonal coordinate:
$f(X_1)f(X_2)f(X_3) = e^{-(X_1+X_2+X_3)} = e^{-\sqrt{3}\,z}$,
independent of $(r, \theta)$.  This collapse is the structural reason
Laplace-transform methods succeed for exponential and gamma
parents~\cite{royen2007c, royen2007b}.  The support $[0,\infty)^3$ has
no upper facets, so every cross-section is a bounded triangle growing
linearly with $z$---the geometry of the tip and partial-clip zones,
continued over an infinite $z$-range.  The $r$- and $\theta$-integrals
close exactly as for the uniform parent, and the obstruction falls
entirely on the final $z$-integration, where terms such as
$\int e^{-cz}\sqrt{4Y - z^2}\dd{z}$ are of Bessel--Struve type; the
exact CDF is an infinite Bessel series with no finite closed form.
Integrating in the order $z \to \theta$ instead produces the wedge
integrals $\int e^{-c r\sin\theta}\dd{\theta}$ that generate modified
Bessel functions $I_0$ directly.  The truncated exponential on $[0,1]$
behaves the same way except that the mirrored upper-facet terms
reappear; its $z$-integrals mix exponential-integral ($\mathrm{Ei}$)
and Bessel--Struve terms.

\paragraph{Half-Gaussian parent on $[0, \infty)$.}
For $f(x) = \sqrt{2/\pi}\,e^{-x^2/2}$ on $[0, \infty)$, the weight is
isotropic in the contrast plane,
$f(X_1)f(X_2)f(X_3) \propto e^{-(r^2 + z^2)/2}$, so the $r$-integral
is elementary; the $z$-integration then generates error functions with
$\theta$-dependent arguments, and the remaining angular integration
requires Owen's $T$ function (a bivariate-normal probability integral).

%=====================================================================
\section{Singular parents: small-deviation law and fast computation by
singularity isolation}\label{sec:singular}
%=====================================================================

The polynomial-closure theorem (Section~\ref{sec:polynomial}) settles every parent whose
density is a polynomial; the remaining bounded parents of practical interest are the
\emph{singular} ones---Beta$(a,b)$ with $a<1$ or $b<1$, whose density blows up at an endpoint,
the symmetric arcsine Beta$(\tfrac12,\tfrac12)$ being the canonical case.  These are not
covered by the closure theorem, and their radial obstruction integral is provably
non-elementary---a genus-$2$ hyperelliptic differential of the first kind
(Section~\ref{sec:hierarchy}); whether the assembled $F$ closes in some larger special-function
class is open.  A referee might therefore
conclude that one must fall back on Royen's universal Fourier series~\eqref{eq:royen-series}.
We show this is unnecessary: the singular parents are computable to high accuracy in a
\emph{handful} of terms, once the singularities are isolated analytically.

\paragraph{The small-deviation law.}
For a \emph{bounded} parent, $F(Y)\sim c\,Y$ as $Y\to0$; the disk of radius $\sqrt{2Y}$ sweeps a
density bounded near the diagonal.  A singular parent piles mass at the two \emph{minimum}\/-variance
corners $(0,0,0)$ and $(1,1,1)$, where all three observations sit near the same endpoint, and the
exponent changes.  The mechanism is a failure of integrability rather than a change of geometry:
the tube integral carries $f(a)^3=\pi^{-3}[a(1-a)]^{-3/2}$ along the diagonal, which is not
integrable at either end, and the divergence is cut off only when the tube radius $\sqrt{2Y}$
becomes comparable to the distance $a$ to the corner.  Balancing
$\int_{\sqrt Y}^{1}a^{-3/2}\,da\sim Y^{-1/4}$ against the $O(Y)$ tube area gives, for the arcsine,
\begin{equation}
  F(Y)\;\sim\;C_3\,Y^{3/4},\qquad C_3 = 2.694\pm0.002,\qquad Y\to 0 ,
  \label{eq:singular-smallY}
\end{equation}
the exponent $\tfrac34=\tfrac{n}{4}$ reflecting the two $x^{-1/2}$ endpoint singularities of the
parent.  The constant $C_3$ is quoted as a numerical limit, obtained by Richardson extrapolation
in $Y^{1/4}$ of the reference quadrature of Appendix~\ref{app:arcsine-ref}; a closed form for it
is not known to us.  This $Y^{3/4}$ cusp at the origin is the dominant obstruction to any smooth
global expansion---and is exactly what a Fourier series resolves slowly.

\paragraph{The ceiling law, and why it is not a square root.}
The opposite end is governed by the \emph{maximum}\/-variance configurations, the six $2$--$1$
vertices of the cube such as $(0,0,1)$, at which $s^2$ attains $Y_{\max}=\tfrac13$.  These are
corner maxima, not interior ones: writing a nearby sample as $(u,v,1-w)$ with $u,v,w$ small,
\begin{equation}
  \tfrac13-Y \;=\; \frac{u+v+2w}{3}
    \;-\;\tfrac13\bigl(u^2+v^2+w^2-uv+uw+vw\bigr),
  \label{eq:corner-linear}
\end{equation}
exactly.  The leading part is \emph{linear}, since $s^2$ has a nonvanishing gradient at the
vertex; the level set is therefore a simplex and not the ellipsoidal shell that an interior
maximum would produce.  The square-root scaling associated with a smooth interior maximum does
not arise here, and neither does the square-root endpoint behavior of the arcsine parent itself
transfer to the statistic.

Near such a vertex the arcsine density factorizes as $\pi^{-3}(uvw)^{-1/2}$, and the event
$\{\tfrac13-Y<\delta\}$ becomes the simplex $\{u+v+2w<3\delta\}$ up to the quadratic remainder in
\eqref{eq:corner-linear}.  Scaling $\delta$ out and evaluating the resulting Dirichlet integral
$\int_{p+q+2r<1}(pqr)^{-1/2}=4\pi/(3\sqrt2)$, then summing the six equivalent vertices, gives
\begin{equation}
  1-F(\tfrac13-\delta)\;\sim\;\frac{24\sqrt{3/2}}{\pi^{2}}\;\delta^{3/2}
    \;=\;2.978222400702685\;\delta^{3/2},\qquad \delta\to0 .
  \label{eq:ceiling-law}
\end{equation}
The exponent is $\tfrac32$, and the coefficient is exact.  Both are confirmed independently in
Appendix~\ref{app:arcsine-ref}: a $2\times10^{8}$-draw simulation gives a log--log slope of
$1.5031$ over the three smallest $\delta$ and a ratio to \eqref{eq:ceiling-law} of $0.9993$ at
$\delta=3\times10^{-4}$.

The exponent is \emph{parent-dependent}, which is why no universal square root could have been
correct.  For a Beta$(a,b)$ parent the density near the vertex $(0,0,1)$ behaves as
$u^{a-1}v^{a-1}w^{b-1}$, and integrating over the simplex raises $\delta$ to the sum of the three
exponents.  The two vertex classes therefore contribute $\delta^{2a+b}$ and $\delta^{a+2b}$, and
\begin{equation}
  1-F(\tfrac13-\delta)\;=\;\Theta\bigl(\delta^{\min(2a+b,\;a+2b)}\bigr).
  \label{eq:ceiling-beta}
\end{equation}
The arcsine $a=b=\tfrac12$ gives $\tfrac32$; the uniform $a=b=1$ gives $3$, with the coefficient
$6\cdot27/12=13.5$ available in closed form by the same argument.

The mechanism is not particular to the sample variance.  Any statistic that attains its maximum
at these vertices with a nonvanishing gradient inherits the same simplex geometry, and only the
parent's local density exponents set the power.  The wedge statistic
$|\bar X-\tfrac12|+s$ of the companion acceptance study is the case in point: its saturation
value $\tfrac16+\tfrac1{\sqrt3}$ is attained at the same six vertices, and for the uniform parent
the identical argument gives $\bigl(162\sqrt3-270\bigr)\delta^{3}$.  A ceiling law of this family
is therefore available for the whole class of mean--spread statistics, not only for $s^2$.

\paragraph{Singularity-isolated representation.}
$F$ carries three analytically known singular features: the origin cusp
$Y^{3/4}$~\eqref{eq:singular-smallY}; the half-power regime boundary at $Y=\tfrac14$, the
geometric bifurcation of Section~\ref{sec:R3}, present for \emph{every} parent, across which $F$
is smooth on each side but nonanalytic, entering at $(4Y-1)^{5/2}$ for the uniform parent;
and the $\delta^{3/2}$ approach to the ceiling at $Y_{\max}=\tfrac13$ established
in~\eqref{eq:ceiling-law}.  Isolating all three gives the two-piece form
\begin{equation}
  F(Y)=
  \begin{cases}
    \displaystyle Y^{3/4}\sum_{k=0}^{K} a_k\,Y^{k}, & 0\le Y< \tfrac14,\\[10pt]
    \displaystyle 1-(\tfrac13-Y)^{3/2}\sum_{k=0}^{K} b_k\,(\tfrac13-Y)^{k}, & \tfrac14\le Y\le \tfrac13 ,
  \end{cases}
  \label{eq:singular-isolated}
\end{equation}
with $b_0$ fixed by \eqref{eq:ceiling-law} rather than fitted.  The two cases are non-overlapping,
and continuity at $Y=\tfrac14$ is imposed as an exact linear constraint during fitting.
The singular prefactors are fixed by the geometry and the parent; the remaining smooth
coefficients are then determined by collocation at a few points, each a one-dimensional
quadrature, or by moment matching.  This is the singular-parent counterpart of polynomial closure: where the
polynomial case \emph{is} a finite sum, the singular case is a finite sum times an explicit
algebraic prefactor.

\paragraph{Accuracy in a handful of terms, versus Royen.}
Each piece of~\eqref{eq:singular-isolated} is a smooth function times its isolated singularity,
and the truncation error falls quickly with the number of terms.  Measured against the reference
quadrature on a validation grid \emph{disjoint} from the fitting nodes, the maximum error over that grid, which spans $[0.0055,\tfrac13]$, is $2.99\times10^{-2}$ at $2$ terms, $4.1\times10^{-3}$ at $4$,
$1.05\times10^{-3}$ at $6$, $6.5\times10^{-4}$ at $8$, $3.8\times10^{-4}$ at $10$ and
$2.4\times10^{-4}$ at $12$ (Figure~\ref{fig:arcsine-conv}).  Beyond six terms the error falls by
a roughly constant factor of $0.78$ per added term; the range tested is too short to separate
that from a steep algebraic rate near $M^{-2.1}$, and we claim only the measured behavior.
Written out, the six-term approximation, three coefficients per regime, is
\begin{equation}
  \widehat F(Y)=
  \begin{cases}
    2.5957\,Y^{3/4}-0.9254\,Y^{7/4}+2.9043\,Y^{11/4}, & 0\le Y<\tfrac14,\\[6pt]
    1-\delta^{3/2}\bigl(2.978222+4.3457\,\delta+117.03\,\delta^{2}\bigr),\ \ \delta=\tfrac13-Y,
      & \tfrac14\le Y\le\tfrac13 .
  \end{cases}
  \label{eq:singular-6term}
\end{equation}
The leading upper coefficient is the exact constant of~\eqref{eq:ceiling-law} and is not fitted;
the remaining five are obtained by constrained least squares against the reference quadrature of
Appendix~\ref{app:arcsine-ref}, subject to exact continuity at $Y=\tfrac14$.  Measured on a
validation grid disjoint from the fitting nodes, the maximum error is $1.05\times10^{-3}$ on $[0.0055,\tfrac13]$, comprising $1.05\times10^{-3}$ on the lower branch and $4.4\times10^{-4}$ on the
upper; the residual jump at $Y=\tfrac14$ is at the level of arithmetic rounding.  The lower leading coefficient $2.5957$ is fitted and is \emph{not} the small-deviation constant $C_3$
of~\eqref{eq:singular-smallY}; a three-term fit over the whole regime trades leading accuracy at
$Y\to0$ for a uniformly small error, and pinning it to $C_3$ instead degrades the sup-norm error
by roughly a factor of five.  Every coefficient here is reproduced by a script in the
reference implementation, which asserts each of them
rather than printing it.  Royen's universal series
\eqref{eq:royen-series} for the same arcsine CDF converges \emph{algebraically}.  We evaluate it
as Royen does, with the terms free of $x$ summed in closed form~\cite[eq.~(9)]{royen2008}: they
add up to one minus the ratio of the mean of $Q$ to $\sup Q$, here
$1-\tfrac{1/4}{2/3}=\tfrac58$, so that only the cosine terms are truncated.  The coefficients are
computed by deterministic quadrature from the same reference, so what is measured is the
truncation error of the series alone.  On the same validation grid it reaches
$1.35\times10^{-3}$ at $80$ terms, $6.6\times10^{-4}$ at $160$ and $1.5\times10^{-4}$ at $320$, an
observed decay of $M^{-1.73}$ fitted over $M \in [20, 320]$; we quote the rate as measured and do
not derive it.  The error is not monotone in $M$: $10^{-3}$ is first reached at $75$ terms and
holds from $109$ on.  Isolating the singularity thus replaces about a hundred
oscillatory-integral terms by about half a dozen.

\begin{figure}[t]
\centering
\includegraphics[width=0.74\linewidth]{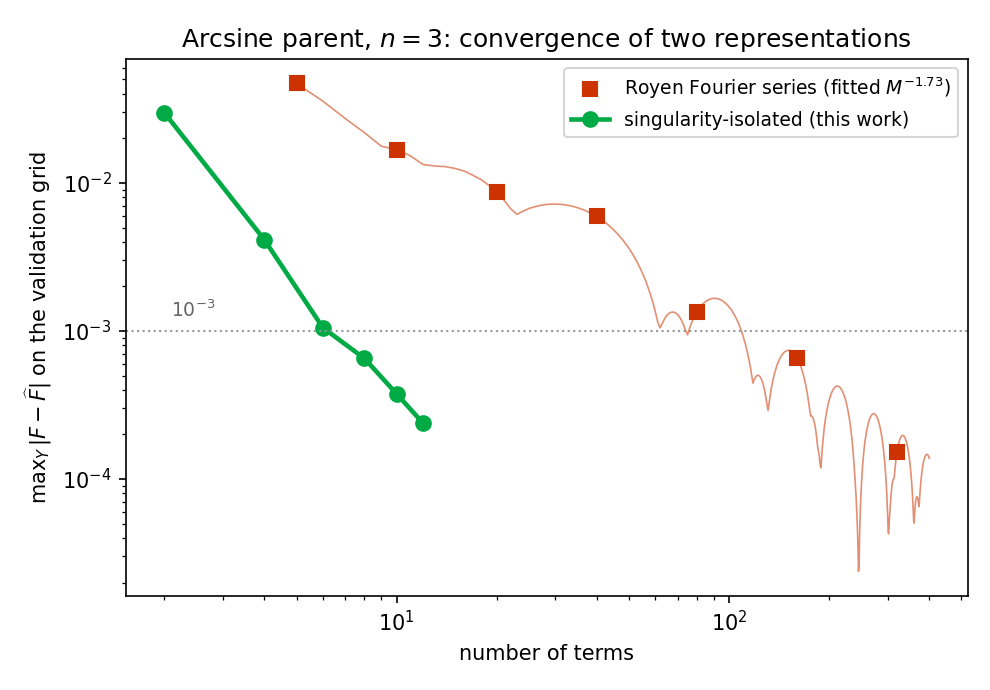}
\caption{Convergence of two representations of the arcsine $n=3$ CDF, both scored against the
deterministic quadrature reference of Appendix~\ref{app:arcsine-ref} on a grid disjoint from the
fitting nodes.  The singularity-isolated form~\eqref{eq:singular-isolated} (green) reaches
$10^{-3}$ in about six terms; Royen's Fourier series~\eqref{eq:royen-series} (red), evaluated
with its $x$-free constant in closed form, decays algebraically at an observed $M^{-1.73}$, fitted
over $M \in [20,320]$, reaching $10^{-3}$ at about a hundred terms and $1.5\times10^{-4}$ at
$320$.  The thin red line is the error at every $M$, which is not monotone; squares mark
$M=5,10,\dots,320$.  The Royen coefficients are computed by deterministic quadrature from the
same reference, so the red curve is the truncation error of the series alone, and nothing in this
figure is simulated.}
\label{fig:arcsine-conv}
\end{figure}

\paragraph{Completeness over parents.}
Together, Sections~\ref{sec:polynomial} and~\ref{sec:singular} make the $n=3$ CDF computable for
\emph{every} bounded parent: a finite elementary closed form when the density is polynomial, and
a six-term singularity-isolated approximation, accurate to about $10^{-3}$, when it is singular.  A
bounded density that is neither---the rational parent above, say---is reached by polynomial
approximation: if $\|f - g\|_\infty \le \varepsilon$ on $[0,1]$ with $g$ polynomial and
$M = \max(\|f\|_\infty, \|g\|_\infty)$, the integrands in~\eqref{eq:CDF-sextant} differ by at
most $3M^2\varepsilon$ pointwise, so $|F_f(Y) - F_g(Y)| \le 3M^2\varepsilon$ uniformly in $Y$,
and $F_g$ is elementary by Theorem~\ref{thm:poly-closure}; Weierstrass approximation therefore
extends exact computability, with a certified error bound, to every continuous bounded parent.
Taking $g$ to be the degree-$m$ Bernstein polynomial of $f$ makes the approximant a
\emph{bona fide} density---a convex combination of the Beta$(k{+}1, m{-}k{+}1)$ densities,
hence non-negative and normalized without any constraint being imposed---so $F_g$ is itself a
distribution function rather than merely an approximating integral, and
Proposition~\ref{prop:mixtures} covers it as a finite mixture.
This
is the completeness the normal-theory $\chi^2$ approximation (Section~\ref{sec:intro}) lacks, and
which Royen's series delivers only at the cost of about a hundred terms.

%=====================================================================
\section{Discussion}\label{sec:discussion}
%=====================================================================

\subsection{Why the diagonal-orthogonal coordinates}

Working throughout in this single coordinate system keeps the derivation
transparent.  Two features carry the paper:
\begin{enumerate}[topsep=2pt]
  \item \textbf{Fewer pieces.} The uniform CDF is two pieces (a single
    bifurcation at $Y = 1/4$), against the three-branch classical form.
  \item \textbf{Transparent obstructions.} The candidate obstruction is read
    directly off the sextant integrand: each parent produces a characteristic
    primitive ($\ln$, $\operatorname{Li}_2$, hyperelliptic integrals,
    $\operatorname{Ei}$, $I_0$), as cataloged---with its proved-versus-expected
    status per row---in the closure hierarchy (Section~\ref{sec:hierarchy}).
\end{enumerate}

\subsection{Beyond the variance marginal}\label{sec:beyond}

The same geometry yields more than the marginal treated here.  Conditioning
on both the mean and the spread reduces Springer's constrained integral to a
weighted arc of the deviation circle, and for the uniform parent this gives
the \emph{joint} law of $(\bar X, s)$ in closed form; the derivation and its
acceptance-value application are given in~\cite{enginecompanion}.

Whether the joint law admits the same classification established here for the
variance marginal is open.  Theorem~\ref{thm:poly-closure} closes every
polynomial parent for $F$, but its proof turns on the radial integral of the
sextant chart, and conditioning on $\bar X$ replaces that radial integral by
an arc-length measure whose weight depends on the parent along the arc.  We
have not derived the Beta cases, and we do not claim the polynomial row of
Table~\ref{tab:hierarchy} transfers.

\subsection{Extension to larger sample sizes}

The same diagonal-orthogonal reduction extends to $n \ge 4$: the variance
constraint becomes a ball in the $(n-1)$-dimensional contrast space and the
cube cross-section a polytope with $S_n$ symmetry.  The $n = 4$ case is
treated in~\cite{n4companion}.

What extends is the reduction, not the conclusion.  Theorem~\ref{thm:poly-closure}
is a statement about $n = 3$, and the closure it establishes should not be
assumed to persist as $n$ grows.  A single global frame is also not obviously
the right choice at larger $n$: coordinates adapted to slicing the polytope
recursively, one facet family at a time, may organize the cut structure better
than the one used here.

\subsection{Application to acceptance testing}\label{sec:apply}

The closed forms above have a direct use: given a within-batch density
$f$ estimated from process data, $F(Y) = \Pr(s^2 \le Y)$ is the exact
probability that the sample variance of $n$ dosage units falls below any
threshold, with no simulation and no series truncation.

Content uniformity is the motivating case.  The standard
treatments---Bergum's method~\cite{bergum1990, bergumLi2007} and the
ASTM E2709/E2810 standards~\cite{astmE2709, astmE2810}---assume a
Gaussian parent.  That assumption is convenient and physically wrong:
dosage-unit content is bounded below by zero and above by a physical
maximum, and potent low-dose tablets show positively skewed content
distributions~\cite{orr1978}, shapes that first-principles
granular-mixture models reproduce within the Beta family~\cite{rane2012}.
Beta$(a,b)$ on a bounded interval is therefore the natural parametric
class here, and Theorem~\ref{thm:poly-closure} guarantees that every
integer-parameter member has a finite elementary $s^2$ CDF\@.  The
Beta$(2,1)$, Beta$(2,2)$ and Beta$(3,2)$ examples of
\S\ref{sec:beta-examples} are not arbitrary: they are realistic
within-batch densities for moderate-dose tablets with slight to moderate
asymmetry.

Two practical remarks.  First, the parent is now observable: NIR
real-time release testing~\cite{peinado2014, karner2023} yields
individual-unit assays by the hundreds or thousands per batch, enough to
fit $f$ directly, and stochastic tablet-formation models predict it from drug-substance
bulk properties such as the API particle-size distribution~\cite{morrison2021,chu2025}.  Second, $n = 3$ sits
below the $n = 10$ of USP $\langle 905\rangle$~\cite{usp905}: the
formulas here are exactly solvable reference points, while the
general-$n$ evaluation engine of the companion
paper~\cite{enginecompanion} carries the same construction to the sample
sizes and to the acceptance-value statistic
$\mathrm{AV} = |M - \bar{X}| + k\,s$ that the standard actually uses.
Nor is $n = 3$ merely a mathematical warm-up: European concrete
conformity assessment judges \emph{initial production} on
non-overlapping groups of exactly three consecutive strength results
(the mean of three against $f_{ck} + 4$\,MPa, with an individual floor
at $f_{ck} - 4$)~\cite{en206}; that criterion is mean-based rather than
variance-based, but it makes $n = 3$ a sample size written into a
production standard in force.  The same architecture appears in
pharmaceutical work: a small-scale dissolution screen proposed for
identifying substandard and falsified medicines accepts on three units
when the mean dissolution is at least $Q + 6\%$ and the minimum at least
$Q + 2\%$~\cite{rahman2021}, the limits being calibrated by resampling
triples from batches of twenty-four.  Neither rule is a variance rule,
but both pair a mean against an individual floor at $n = 3$, so the
pass probability of either is a functional of the joint law of
$(\bar X, s)$ rather than of the variance marginal alone---the object of
\S\ref{sec:beyond} and of~\cite{enginecompanion}, for which the CDF derived
here is the $s$-margin.

The other settings named in \S\ref{sec:intro} inherit the same
machinery.  In capability approval the incumbent is normal-theory
distribution theory for $C_{pk}$~\cite{pearn1992}, extended to
non-normal parents by Clements' percentile
construction~\cite{clements1989}; an exact $s^2$ law under a bounded
parent replaces both.  In reference-scaled average
bioequivalence~\cite{haidar2008} the acceptance criterion
$(\bar{X}_T - \bar{X}_R)^2 - \theta\,s_{WR}^2 \le 0$ is again a wedge in
a (location, spread) plane, but oriented the other way---it \emph{widens}
with $s_{WR}$ where the content-uniformity wedge narrows with $s$.

For each of these the closure hierarchy is the roadmap: identify the
minimal class of the parent, read the proved status and the expected
special-function class off Table~\ref{tab:hierarchy}, and either
evaluate in closed form (the polynomial row) or choose the numerical
method that the conjectured obstruction dictates.

\subsection{Exact formulas and numerical evaluation}\label{sec:exact-vs-numeric}

That last choice deserves stating plainly, because closed forms and
certified numerics answer different questions and the results above make
the division concrete.

Three things here are not available from a numerical evaluator at any
cost.  The first is the closure hierarchy of \S\ref{sec:hierarchy}:
whether a CDF is elementary, and which special function obstructs it, is
a property of the integral rather than of its values, and no amount of
accurate evaluation decides it.  The second is the endpoint laws of
\S\ref{sec:singular}.  The origin cusp
in~\eqref{eq:singular-smallY} and the ceiling exponent
in~\eqref{eq:ceiling-law} are analytic facts about the two extreme
configurations of the sample, and once they are isolated the arcsine
parent reaches $10^{-3}$ in six terms where the same accuracy from
Royen's universal series takes of order $10^{2}$
(Figure~\ref{fig:arcsine-conv}).  Analytic structure is not a
convenience there; it is the difference between half a dozen terms and a
hundred.  The third is exact differentiation: $F'$ is available in
closed form, hence the density, and through
\eqref{eq:sd-cdf}--\eqref{eq:sd-pdf} the standard-deviation density as
well, with no differencing error propagating into quantiles, moments or
sensitivities.

Against that, numerical evaluation does what closed forms cannot, and
the gap is largest exactly where the regulations are.  Acceptance rules
in force are written at $n = 10$ and $n = 30$, not at $n = 3$, and a
parent fitted to real assay data is rarely exactly polynomial.  The
production tool at those sample sizes is the general-$n$
engine~\cite{enginecompanion}; what the closed forms supply is the
reference it is validated against, which is the role the exact column of
Table~\ref{tab:chi2-fail} plays here and the role the $n = 4$
case~\cite{n4companion} extends.  The Weierstrass argument closing
\S\ref{sec:singular} is the same division stated generally: exact
computability reaches every continuous bounded parent, but through an
approximation step whose error has to be certified rather than assumed.
The closed forms are the calibration standard, not the production
instrument, and a series of small-$n$ cases is worth having on that
ground even where a rule samples more units.  Not every rule does: the
concrete conformity standard discussed above~\cite{en206} judges groups
of exactly three.

\paragraph{What exactness does not buy.}
None of this makes the exact value the dominant term in an application.
An acceptance probability computed in practice is conditional on a
parent estimated from finite data, and propagating that estimation
uncertainty produces an interval on the probability itself that is wider
by orders of magnitude than any evaluation error considered
here~\cite{propagationcompanion}.  Exactness earns its place not by
being the largest correction available but by being the one that can be
removed \emph{completely}: once the evaluation error is gone, the
uncertainty in the parent is the only term left, and it can be reported
as what it is instead of being confounded with numerical error.  A
closed form is therefore most valuable not where it is most accurate but
where it makes the remaining uncertainty legible.

%=====================================================================
\section*{Code availability}

The closed forms of this paper are implemented in a small reference
package, to be released under the MIT license: the piecewise-elementary
CDFs, the Monte-Carlo cross-checks reported above, a notebook rendering
each closed form alongside its numerical verification, and the scripts
that regenerate every figure.  The implementation requires only NumPy; the
notebook additionally uses Matplotlib.  Every formula in this paper is
stated in full and can be implemented directly from the text; the
package accompanies the published version of this article and will be
added to a later version of this preprint.

\section*{Data availability}

No data were generated or analyzed.  Every parent distribution appearing
in this paper is an analytic law stated in the text, so the closed forms
above constitute the complete computational content of the article.

\ifblind\else
\section*{Acknowledgments}

We thank Nicole Tin, Somdatta Bhattacharya and Chiajen Lai for
extensive conversation about the pharmaceutical applications and
in-silico tools that motivated this work. We thank Mandeep Chauhan,
Mike Salem, Ash Baghai, Wei-Tin Liu, Shikhar Mohan, Adedayo Sanni, Chris Remmers, Caroline Branch, Kiara Cui,
Dana Caulder, Joanna Koziara, Victor Rucker, Jennifer Hedborn and
Henry Morrison for motivating discussions at Gilead Sciences.
\fi

\section*{Declarations}

\ifblind
The authors' affiliations and competing interests are withheld for
anonymous review.
\else
R.~Osan and R.~Yu are employees of Gilead Sciences; K.~T. Chu is an
employee of Velexi Corporation.  The authors have no other competing
interests to declare.
\fi
\medskip
\noindent\textbf{Funding.}  No external funding was received for this work.

\section*{Declaration of generative AI use}
Claude models (Anthropic) assisted with derivations, code implementation, and
manuscript drafting. All outputs were verified against independent symbolic and numerical
computations (Python, MATLAB).

%=====================================================================
% References
%=====================================================================

\clearpage
\appendix
% Appendix floats carry their own series, so the main-text figures are numbered in
% order of first citation instead of being interleaved with the atlas.
\setcounter{figure}{0}
\renewcommand{\thefigure}{A.\arabic{figure}}
\renewcommand{\theHfigure}{A.\arabic{figure}}
\section{Visual atlas of the uniform-parent derivation}\label{app:atlas}

The derivation of Section~\ref{sec:cdf-uniform} is fully specified by its formulas,
but each zone has a picture. This appendix collects the integrand-level
visualizations: the cross-section sweep, the shaving integrands of the thin-cylinder
regime, and the tip / partial-clip / saturation zones of the wide-cylinder regime
with the vertex arc-shrinking mechanism.

\begin{figure}[p]
\centering
\includegraphics[width=\linewidth]{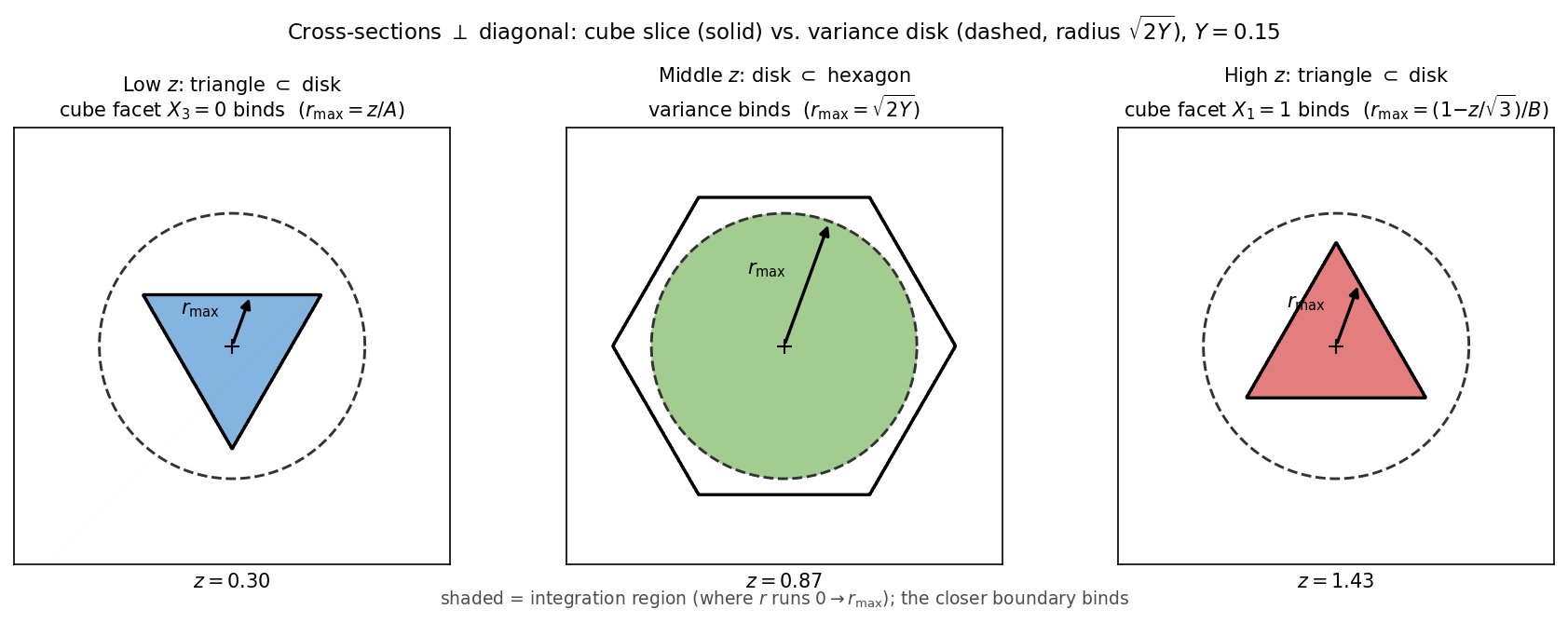}
\caption{Cross-sections perpendicular to the diagonal at three heights $z$
($Y=0.15$).  Solid: the cube slice (triangle near a corner, hexagon in the
middle); dashed: the variance disk of radius $\sqrt{2Y}$; shaded: the
integration region, bounded by whichever is closer.  \emph{Left} (small $z$):
triangle $\subset$ disk, so the lower facet $X_3=0$ binds ($r_{\max}=z/A$).
\emph{Middle}: disk $\subset$ hexagon, so the variance binds
($r_{\max}=\sqrt{2Y}$).  \emph{Right} (large $z$): upper triangle $\subset$
disk, so $X_1=1$ binds ($r_{\max}=(1-z/\sqrt3)/B$).  The bifurcation at
$Y=\tfrac14$ is exactly when the middle disk radius $\sqrt{2Y}$ equals the central hexagon's
\emph{circumradius} $1/\sqrt2$; its apothem is the smaller $\sqrt{3/8}$, reached earlier at
$Y=3/16$.}
\label{fig:xsection}
\end{figure}

\begin{figure}[p]
\centering
\includegraphics[width=\linewidth]{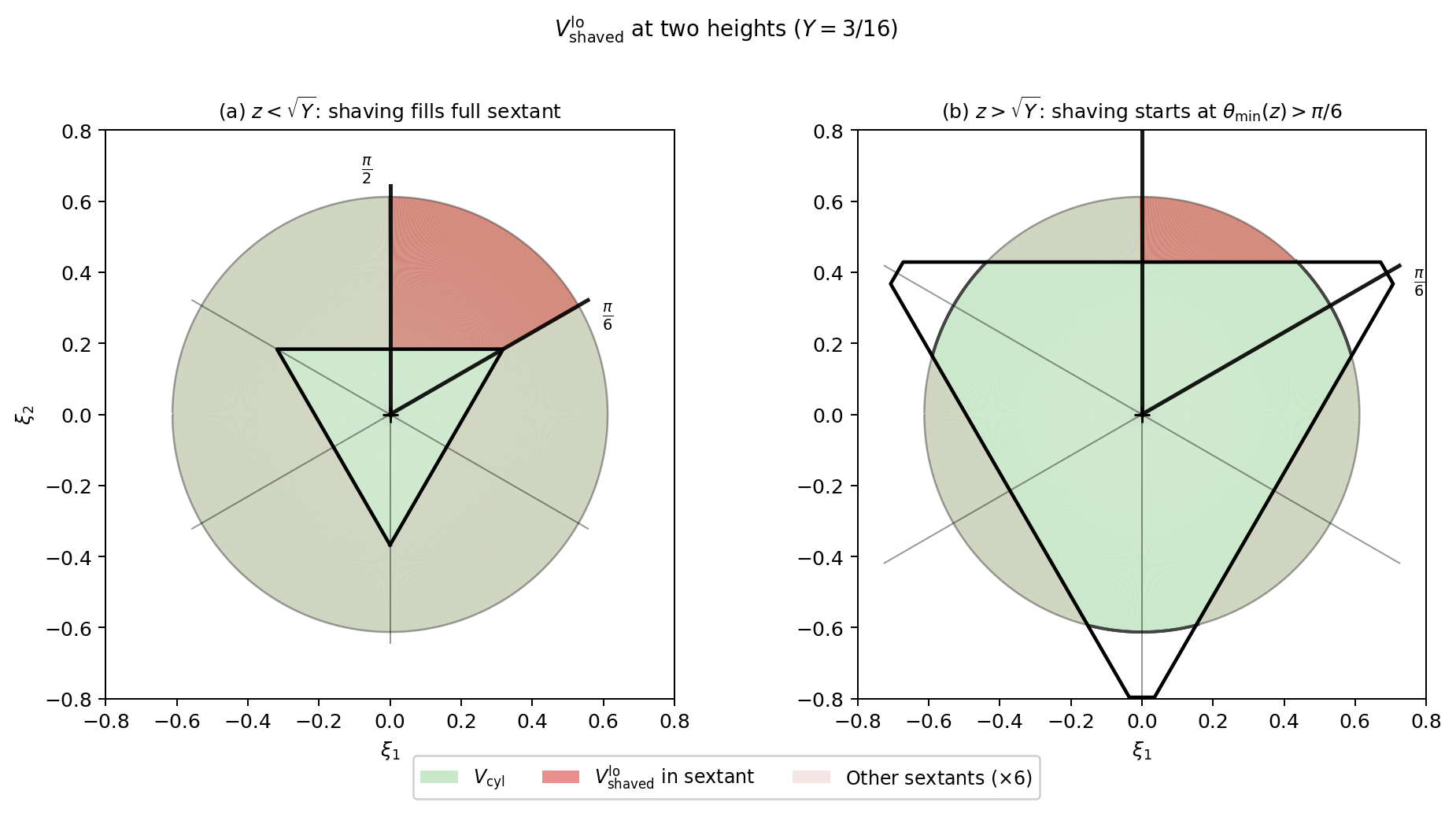}
\caption{The integrand of~\eqref{eq:Vshaved-lo-zform} at two
heights ($Y = 3/16$).  At each~$z$, the red annular area is the
shaving to be subtracted; integrating these areas over~$z$ gives
$V_{\rm shaved}^{\rm lo}$.
\textbf{(a)}~$z < \sqrt{Y}$ ($a = 0.15$):
$\theta_{\min} = \pi/6$, shaving fills the full sextant.
\textbf{(b)}~$z > \sqrt{Y}$ ($a = 0.35$):
$\theta_{\min} \approx 44^\circ > \pi/6$, shaving only near
$\theta = \pi/2$.  Pink = same shaving in the other five
sextants ($\times 6$ symmetry).
By $a \leftrightarrow 1-a$ symmetry,
$V_{\rm shaved}^{\rm hi} = V_{\rm shaved}^{\rm lo}$.}
\label{fig:shaved-lo}
\end{figure}

\begin{figure}[p]
\centering
\includegraphics[width=\linewidth]{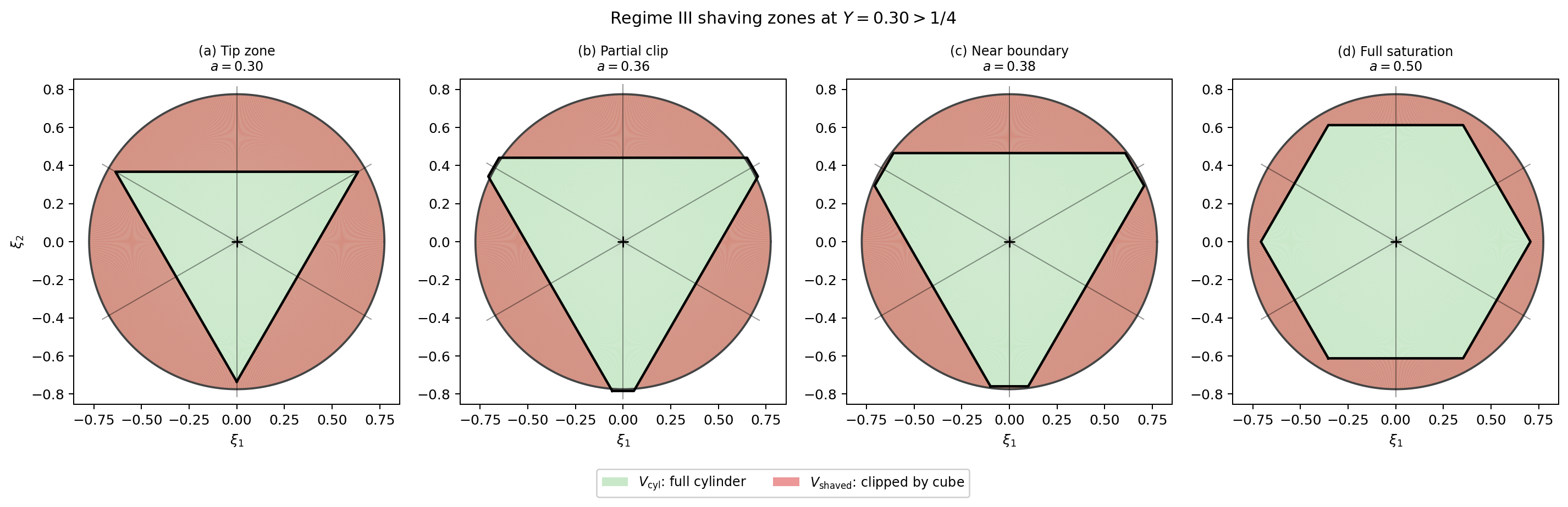}
\caption{Shaving zones at $Y = 0.30 > 1/4$, tracing the
cross-section from a cube corner toward the center.
\textbf{(a)}~Tip zone ($a = 0.30$): small triangle fully inside
the disk; shaving is the full annulus.
\textbf{(b)}~Partial clip ($a = 0.36$): the hexagon has grown past
the disk; its three protruding corners are clipped into circular
arcs.  Three short $X_i = 1$ edges (visible between the arcs) are
growing inward.
\textbf{(c)}~Near boundary ($a = 0.38 \approx a_*$): the $X_i = 1$
edges have lengthened and the circular arcs are nearly gone---the
hexagon almost fits inside the disk.
\textbf{(d)}~Full saturation ($a = 0.50$): the disk fully
circumscribes the hexagon; shaving is again a complete annulus.}
\label{fig:regime3-zones}
\end{figure}

\begin{figure}[p]
\centering
\includegraphics[width=\linewidth]{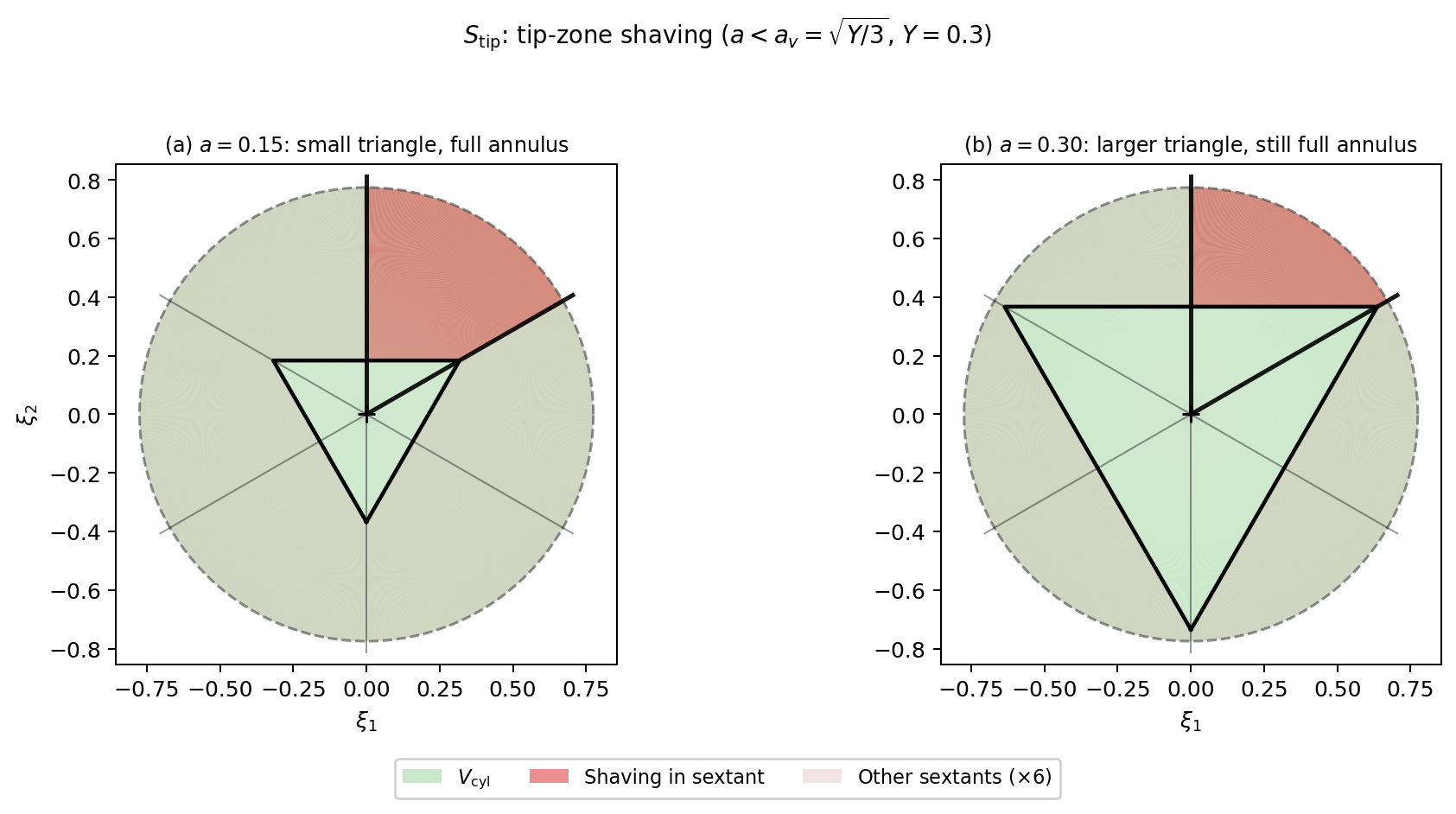}
\caption{Integrand of $S_{\rm tip}$~\eqref{eq:S-tip} at two
heights in the tip zone ($Y = 0.30$).  The triangle is fully
inside the disk; the red annulus (disk $-$ triangle) is
integrated over $z \in [0, z_v]$.}
\label{fig:zone-tip}
\end{figure}

\begin{figure}[p]
\centering
\includegraphics[width=\linewidth]{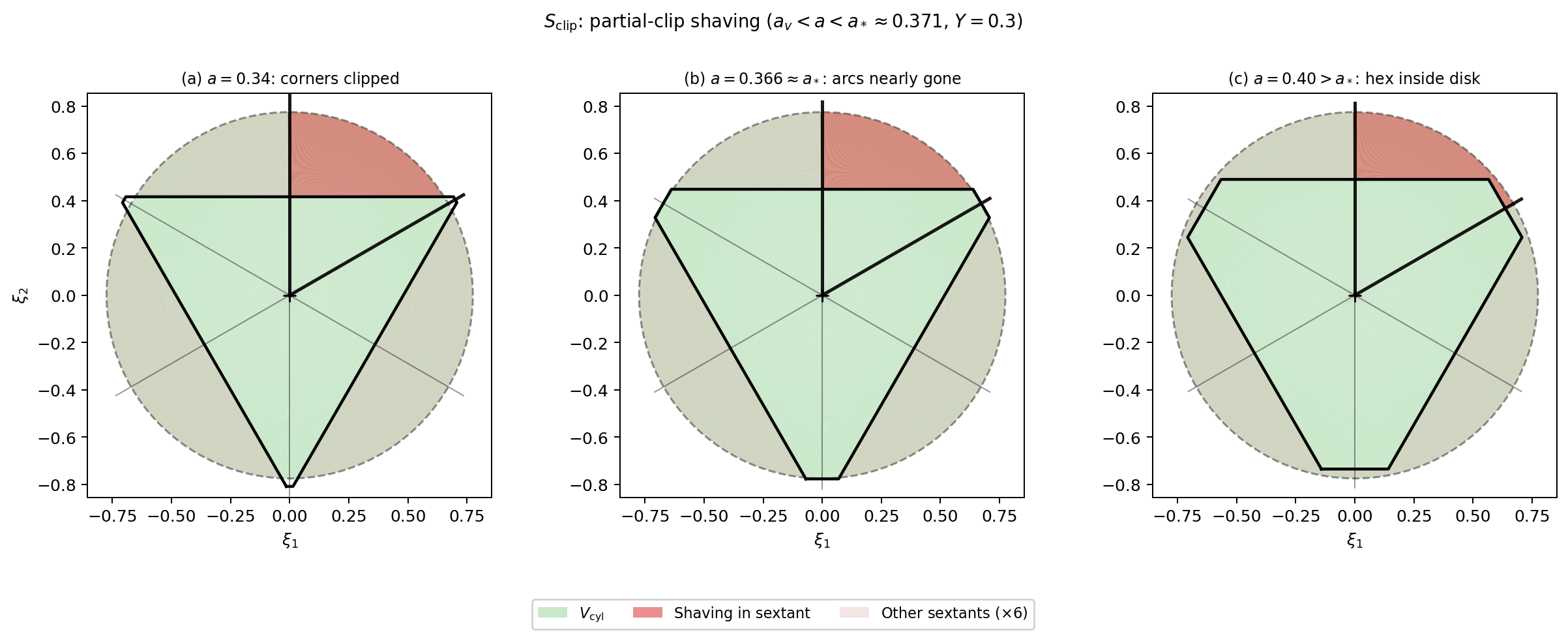}
\caption{Integrand of $S_{\rm clip}$~\eqref{eq:S-clip} at two
heights in the partial-clip zone ($Y = 0.30$).
The hexagon's corners protrude past the disk; the red segments
are the clipped portions.  As $a \to a_*$, the arcs shrink
to zero.}
\label{fig:zone-clip}
\end{figure}

\begin{figure}[p]
\centering
\includegraphics[width=\linewidth]{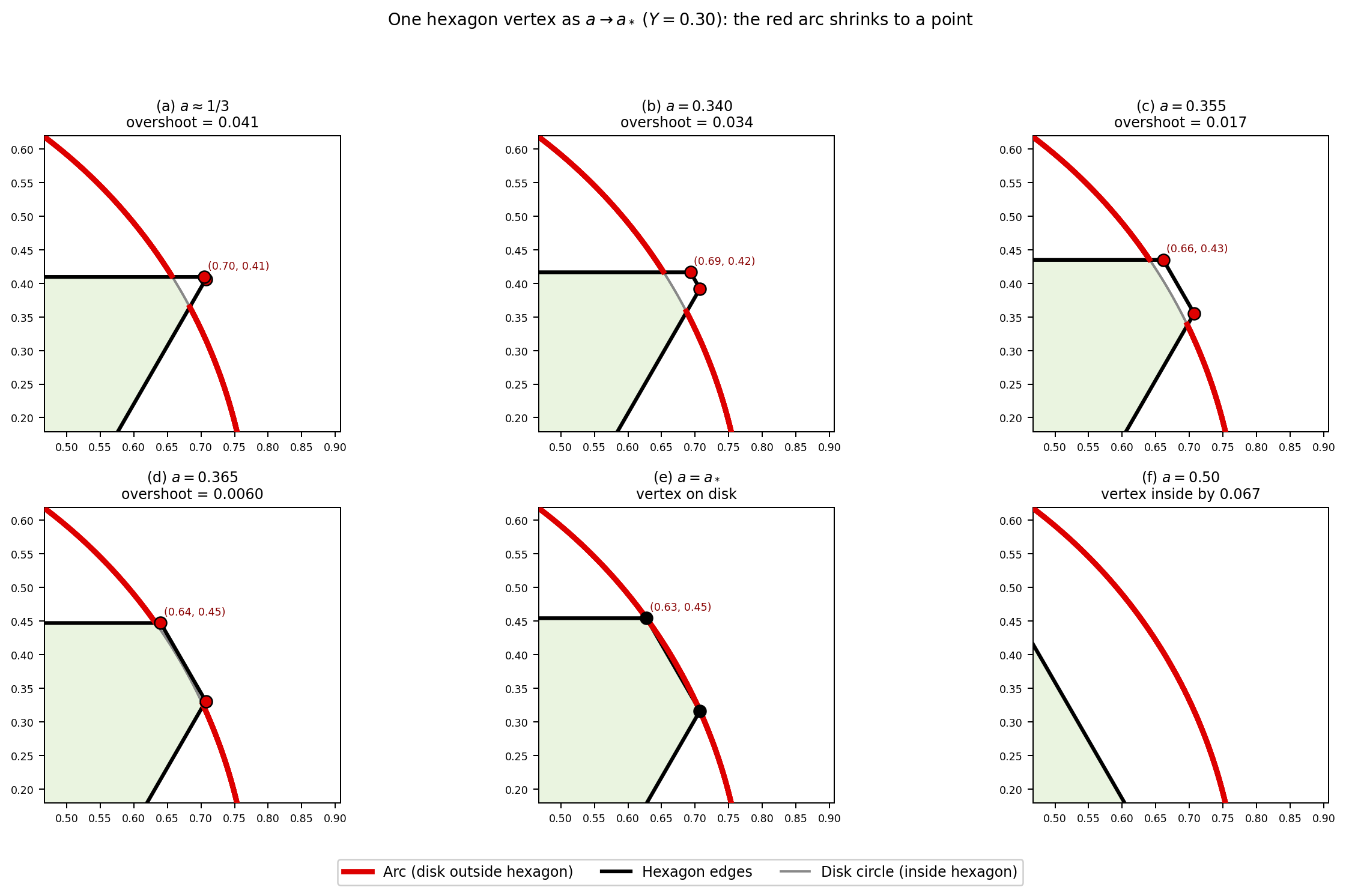}
\caption{The arc-shrinking mechanism as $a \to a_*$ ($Y = 0.30$,
$a_* \approx 0.371$).  Each hexagon has six vertices---points
where an $X_i = 0$ face edge meets an $X_j = 1$ face edge in the
cross-section plane---all at the circumradius
$\sqrt{2(1-3a+3a^2)}$ from the center.
Red dots = hexagon vertices outside the disk;
black dots = vertices on or inside the disk.  As $a$ increases,
the circumradius shrinks and the vertices approach the disk
boundary; the circular arcs at each protruding vertex contract
continuously.  In panel~(d) the vertices land exactly on the
disk circle and the arcs vanish.}
\label{fig:arc-shrinking}
\end{figure}

\begin{figure}[p]
\centering
\includegraphics[width=\linewidth]{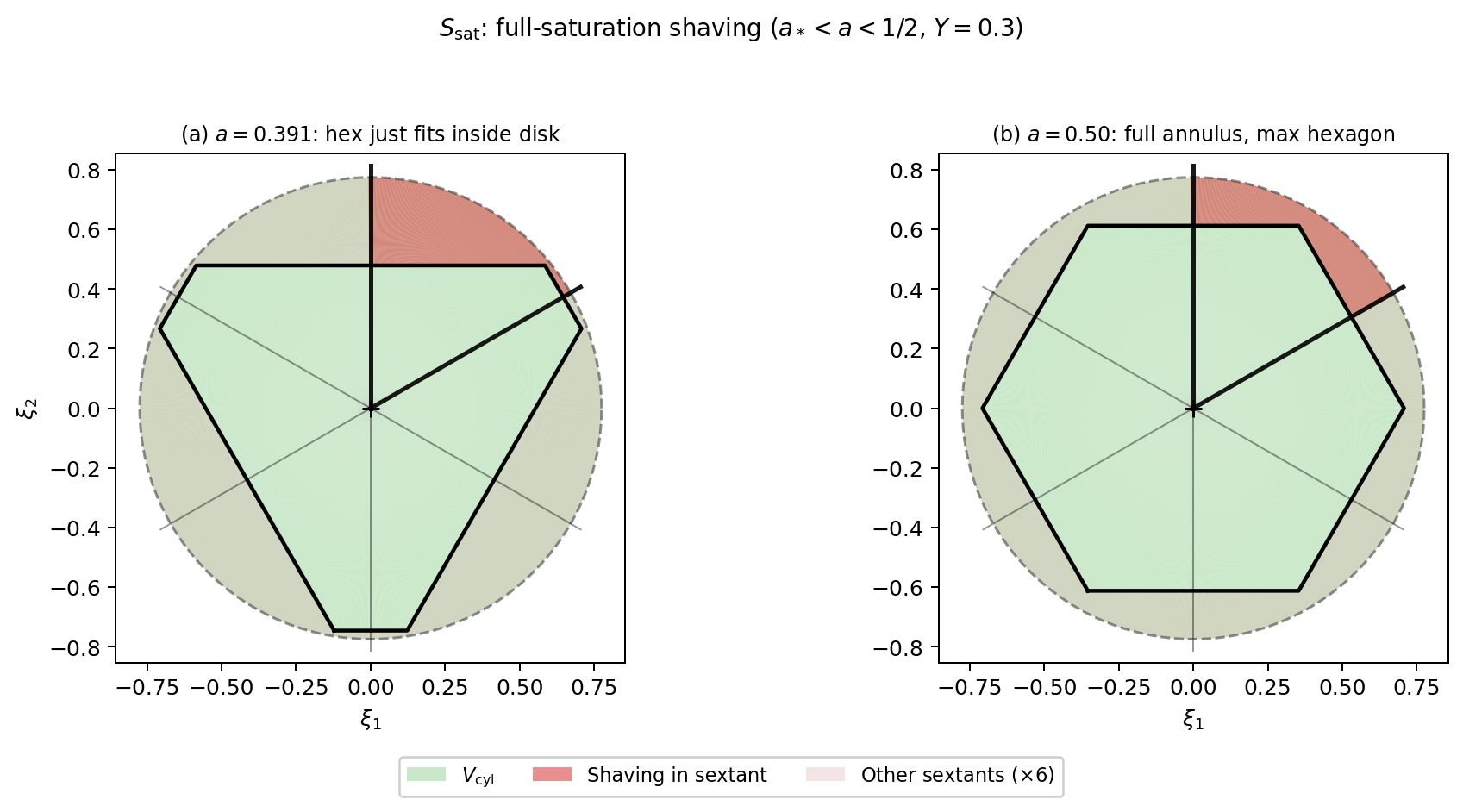}
\caption{Integrand of $S_{\rm sat}$~\eqref{eq:S-sat} at two
heights in the full-saturation zone ($Y = 0.30$).
The disk fully contains the hexagon; the red annulus
(disk $-$ hexagon) is integrated over $z \in [z_*, \sqrt{3}/2]$.}
\label{fig:zone-sat}
\end{figure}

\clearpage
\section{Proof of Theorem~\ref{thm:rational}: the rational-parent CDF is
not elementary}\label{app:rational}

\paragraph{The idea in brief.}
Elementary functions are closed under differentiation, so it suffices to show that the density
$F'$ is not elementary. Step~1 writes the density as a single integral of logarithms, and
Step~2 shows that one further derivative collapses it to sixteen terms of the form
(rational)$\,\times\,$log(rational). Integrated back up, such a sum is elementary except for
dilogarithm-type pieces $\int \ln f\, d\ln h$. Step~3 attaches to these pieces an algebraic
invariant $W$ and proves that $W$ must vanish if they are elementary. Step~4 computes $W$
exactly---every root involved is a twelfth root of unity---and finds it nonzero; Step~5 shows
that the two sectors of the computation cannot cancel each other; Step~6 carries the argument
to the wide regime $\tfrac14 < Y < \tfrac13$. The non-elementary part is therefore an
irreducible dilogarithm. Readers who want only the logic can stop here.

Throughout, \emph{elementary} is meant in the Liouville sense: the class
of functions obtained from the rational functions by finitely many
algebraic operations, exponentials and logarithms
\cite{rosenlicht1972}.  This class is closed under differentiation, so
it suffices to prove that $F'$ is not elementary on $(0, \tfrac14)$.

\paragraph{Step 1: the density as a one-variable integral.}
For $f(x) = 1/((1+x)\ln 2)$, differentiating the sextant
chart~\eqref{eq:CDF-sextant} in $Y$ leaves only the variance-disk region
(the facet regions do not depend on $Y$, and $r_{\max}$ is continuous
across region boundaries), so with $\rho = \sqrt{2Y}$,
\begin{equation}
  F'(Y) \;=\; \frac{3\sqrt{3}}{(\ln 2)^3\, Y}\; I(\rho),
  \label{eq:app-density}
\end{equation}
where the $z$-integration is elementary (the parent product depends on
$z$ only through the common factor $1 + z/\sqrt{3}$, linear in $z$) and
yields, with $s_\pm = \sin(\theta \pm \pi/6)$,
\begin{equation}
  I(\rho) = \int_{\pi/6}^{\pi/2}\!\Bigl[
    \frac{\ln 2 - \ln\bigl(1{+}\sqrt{2}\rho\, s_+\bigr)}{2\cos\theta\, s_+}
  - \frac{\ln\bigl(2{-}\sqrt{2}\rho\cos\theta\bigr)
        - \ln\bigl(1{+}\sqrt{2}\rho\, s_-\bigr)}{2\cos\theta\, s_-}
  + \frac{\ln\bigl(2{-}\sqrt{2}\rho\, s_+\bigr)}{2\, s_+ s_-}
  \Bigr]\, d\theta .
  \label{eq:app-I}
\end{equation}
Since $I(\rho) = \tfrac{(\ln 2)^3}{3\sqrt{3}}\cdot\tfrac{\rho^2}{2}\,
F'(\rho^2/2)$ and $\rho \mapsto \rho^2/2$ is algebraic, $F'$ is
elementary if and only if $I$ is.  We verified~\eqref{eq:app-density}
and \eqref{eq:app-I} against two-dimensional quadrature of the density
to $10^{-13}$ over the regime.

\paragraph{Step 2: the derivative of $I$ collapses.}
Differentiating under the integral and substituting
$t = \tan(\theta/2)$, both convergent groupings simplify drastically
(the endpoint-singular factors cancel algebraically), leaving, with
$\kappa = \sqrt{2}\rho \in (0,1)$,
\begin{equation}
  \sqrt{2}\, I'(\rho)
  = 8\kappa\!\int_{2-\sqrt{3}}^{1}\!\frac{(1+t^2)\,dt}{Q_+ Q_-}
  \;-\; 4\kappa\!\int_{2-\sqrt{3}}^{1}\!\frac{(1+t^2)\,dt}{P_1 P_2},
  \label{eq:app-Iprime}
\end{equation}
\[
  Q_\pm = (2 \mp \kappa)t^2 + 2\sqrt{3}\kappa t + (2 \pm \kappa),
  \qquad
  P_1 = (2{+}\kappa)t^2 + (2{-}\kappa), \qquad
  P_2 = (4{+}\kappa)t^2 - 2\sqrt{3}\kappa t + (4{-}\kappa).
\]
The roots of $Q_\pm$ involve $w = \sqrt{1-\kappa^2}$ and those of
$P_1, P_2$ involve $w_2 = \sqrt{4-\kappa^2}$.  The conic
substitutions $\kappa = 2u/(1+u^2)$ (so $w = (1-u^2)/(1+u^2)$) and
$\kappa = 4v/(1+v^2)$ (so $w_2 = 2(1-v^2)/(1+v^2)$) rationalize the two
sectors separately: evaluating~\eqref{eq:app-Iprime} by residues gives
\begin{equation}
  dI \;=\; \sum_{j=1}^{16} R_j\,\ln h_j \; d p_j,
  \label{eq:app-dI}
\end{equation}
a sum of sixteen terms in which $p_j \in \{u, v\}$ and
$R_j, h_j \in \mathbb{Q}(i, \sqrt{3})(p_j)$ are explicit rational
functions (each root $a$ of each quadratic contributes
$\ln(1-a) - \ln(2{-}\sqrt{3}{-}a)$ with a rational coefficient).  Every
step so far is elementary calculus; we verified~\eqref{eq:app-Iprime}
and~\eqref{eq:app-dI} numerically to $10^{-16}$.

\paragraph{Step 3: the obstruction invariant.}
Integrating each term of~\eqref{eq:app-dI} by parts,
$\int R \ln h\,dp = A \ln h - \int A\, d\ln h$ with $A = \int R\,dp$; on
a rational function field $A$ is again rational plus a
\emph{constant-coefficient} combination $\sum_l d_l \ln f_l$ (the $d_l$
are residues of $R$), and every piece of the by-parts expansion is
elementary except the pairings $\int \ln f\, d\ln h$.  The obstruction
is measured by the antisymmetric invariant
\begin{equation}
  W \;=\; \sum_{\text{terms}} \sum_l d_l\, \operatorname{div}(f_l)
  \wedge \operatorname{div}(h) \;\in\;
  \overline{\mathbb{Q}} \otimes \Lambda^2_{\mathbb{Q}}
  \bigl(\mathrm{Div}\bigr).
  \label{eq:app-W}
\end{equation}

\begin{lemma}\label{lem:liouville-dilog}
Let $K$ be the function field of a smooth curve over
$\overline{\mathbb{Q}}$, let $f_l, h_l \in K^\times$ and
$d_l \in \overline{\mathbb{Q}}$, and suppose
$S = \sum_l d_l \int \ln f_l \, d\ln h_l$ coincides on an interval with
an elementary function.  Then
$\sum_l d_l\, f_l \wedge h_l = 0$ in
$\overline{\mathbb{Q}} \otimes
\Lambda^2_{\mathbb{Q}}(K^\times\!/\overline{\mathbb{Q}}^\times)$.
\end{lemma}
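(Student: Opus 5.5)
The plan is to run Liouville's theorem in its Rosenlicht form and then read off the $\Lambda^2$ component. Let $L$ be the differential field obtained by adjoining to $K$ all the logarithms $\ln f_l$ and $\ln h_l$. Then $dS = \sum_l d_l \ln f_l\, d\ln h_l$ is an element of $L$. If $S$ is elementary over $L$, then Liouville's theorem (as in \cite{rosenlicht1972}) gives
\[
  \sum_l d_l \ln f_l\,\frac{h_l'}{h_l} \;=\; v_0' + \sum_k c_k \frac{g_k'}{g_k},
\]
with $v_0 \in L$, constants $c_k \in \overline{\mathbb{Q}}$, and $g_k \in L^\times$. This is the only place elementarity enters. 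First, though, I would reduce to the case where the $\ln f_l, \ln h_l$ that occur are algebraically independent over $K$. By Ostrowski--Kolchin, logarithms are dependent only through multiplicative relations modulo constants. So I would choose a $\mathbb{Q}$-basis $\ell_1,\dots,\ell_N$ of the $\mathbb{Q}$-span of the classes of all the $f_l, h_l$ in $K^\times/\overline{\mathbb{Q}}^\times$; if that group has torsion or is not free on these elements, I would tensor with $\mathbb{Q}$, which only rescales. Expanding, $dS = \sum_{i,j} a_{ij} \ln\ell_i\, d\ln\ell_j$. The claim then becomes that the antisymmetric part of the matrix $(a_{ij})$ vanishes.

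Next I would analyse the Liouville identity as a polynomial identity in the independent transcendentals $L_i = \ln\ell_i$ over $K$. The left side has degree $1$ in the $L_i$. The standard degree argument, which I would write out, runs as follows. Each $g_k$ can be taken in $K$ after factoring: the logarithmic derivatives of irreducible polynomials in the $L_i$ that are not in $K$ would contribute non-cancelling poles. Then $v_0$ must be a polynomial of degree at most $2$ in the $L_i$, say
\[
  v_0 = \tfrac12\sum b_{ij} L_iL_j + \sum e_i L_i + e_0,
\]
with $b_{ij}$ symmetric constants and $e_i \in K$. Comparing the coefficients of $L_i$ gives $\sum_j a_{ij}\, \ell_j'/\ell_j = \sum_j b_{ij}\, \ell_j'/\ell_j + e_i'$. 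The forms $d\ln\ell_j$ are $\mathbb{Q}$-independent modulo constants, so their classes are linearly independent modulo exact differentials of $K$. That linear independence needs justification: I would use residues, since $d\ln\ell_j$ has integer residues and exact forms have none, together with the fact that divisors determine functions up to constants. Granting it, $a_{ij} = b_{ij}$, so $(a_{ij})$ is symmetric and its $\wedge$ part vanishes. That is exactly the required statement $\sum_l d_l\, f_l\wedge h_l = 0$.

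The main obstacle is justifying the reduction of the $g_k$ to $K$ and the degree bound on $v_0$ in the multivariable case. I expect to handle it by induction on $N$, adjoining one logarithm at a time, as in Rosenlicht's proof of the Liouville theorem for $\int \ln x\,dx/(x-1)$-type integrals. A second subtlety is that the coefficients $d_l$ lie in $\overline{\mathbb{Q}}$ rather than $\mathbb{Q}$. I would handle this by fixing a $\mathbb{Q}$-basis of the span of the $d_l$ and running the argument componentwise, which is why the conclusion lives in $\overline{\mathbb{Q}}\otimes\Lambda^2_{\mathbb{Q}}$. Finally, the phrase ``coincides on an interval'' requires passing from a real-analytic identity to an identity in a differential field. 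I would do this by analytic continuation, which embeds everything into a field of germs of meromorphic functions.
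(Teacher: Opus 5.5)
Your proposal is correct and follows essentially the paper's route. Both arguments take a $\mathbb{Q}$-basis of the span in $K^\times/\overline{\mathbb{Q}}^\times\otimes\mathbb{Q}$ and apply Rosenlicht's Liouville theorem. Both then use a tower induction, driven by the residue lemma (which also rules out new constants), to put the log arguments in $K$ and make $v_0$ a polynomial of degree at most two with constant quadratic coefficients, and they finish with a residue comparison of the degree-one coefficients that forces $(a_{ij})$ to be symmetric.

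The one step to drop is the componentwise treatment of the $d_l$ over a $\mathbb{Q}$-basis. Read literally, it would need each component of $S$ to be elementary separately, which the hypothesis does not give. It is also unnecessary: the residue matrix $(\operatorname{ord}_P \ell_j)$ is integral, so $\mathbb{Q}$-independence of the $\ell_j$ already makes the $d\ln\ell_j$ independent modulo exact forms over the whole constant field. That is exactly how the paper's residue lemma handles $\overline{\mathbb{Q}}$ coefficients.
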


\begin{proof}
Choose a $\mathbb{Q}$-basis $g_1, \dots, g_N$ of the subgroup of
$K^\times\!/\overline{\mathbb{Q}}^\times \otimes \mathbb{Q}$ generated
by the $f_l, h_l$; bilinearity reduces the claim to showing that if
$S = \sum_{a,b} \lambda_{ab} \int \ln g_a\, d\ln g_b$ is elementary
then $\lambda_{ab} = \lambda_{ba}$ for all $a \neq b$.  The logarithms
$\ell_a = \ln g_a$ are algebraically independent over $K$ by
Ostrowski's theorem, so
$S' = \sum \lambda_{ab}\, \ell_a\, g_b'/g_b$ lies in the purely
transcendental extension $E = K(\ell_1, \dots, \ell_N)$
\cite{ostrowski1946} and has degree
one as a polynomial in the $\ell$'s.  By the strong Liouville theorem
\cite{rosenlicht1968, rosenlicht1972}, $S = w_0 + \sum_\nu \mu_\nu \ln
w_\nu$ with $w_0 \in E$, $\mu_\nu$ constants and $w_\nu \in E$.

We claim $w_0$ is a polynomial of degree at most two in the $\ell$'s,
say $w_0 = \sum_{a \le b} \alpha_{ab} \ell_a \ell_b + \sum_a \beta_a
\ell_a + \beta_0$ with coefficients in $K$, and that the $w_\nu$ may be
taken in $K^\times$.  All three assertions rest on the following
residue lemma, and through it on the hypothesis that the $g_a$ are
multiplicatively independent modulo constants.

\begin{quote}
\emph{Residue lemma.} If $e \in K$ and $r_a \in \overline{\mathbb{Q}}$
satisfy $e' = \sum_a r_a\, g_a'/g_a$, then all $r_a = 0$ and
$e \in \overline{\mathbb{Q}}$.  Indeed $\mathrm{d}e = \sum_a r_a\,
\mathrm{d}g_a/g_a$ as rational $1$-forms on the curve; an exact form
has vanishing residue at every closed point, while
$\operatorname{Res}_P(\mathrm{d}g/g) = \operatorname{ord}_P(g)$, so
$\sum_a r_a \operatorname{ord}_P(g_a) = 0$ for all $P$.  The rational
nullspace of the integer matrix $(\operatorname{ord}_P(g_a))$ spans the
$\overline{\mathbb{Q}}$-nullspace, so we may take $r_a \in \mathbb{Q}$;
clearing denominators, $\prod_a g_a^{M r_a}$ has divisor zero and hence
is constant, so all $M r_a = 0$ by multiplicative independence.
\end{quote}

The lemma yields, for each $i$, that no element of
$K(\ell_1, \dots, \ell_{i-1})$ has derivative in
$\overline{\mathbb{Q}}^\times\, g_i'/g_i$.  A simultaneous induction up
the tower $K \subset K(\ell_1) \subset \cdots \subset E$ gives from this
that each $\ell_i$ is transcendental over the previous stage (so the
$\ell_a$ are algebraically independent, which is Ostrowski's theorem
and need not be assumed), that every monic irreducible of the
polynomial ring is normal, and that $\operatorname{Con} E =
\overline{\mathbb{Q}}$---so the Liouville theorem applies with no
constant extension.  The three assertions follow: if $w_0$ had a
nontrivial denominator, a monic irreducible factor $p$ of it not
dividing the numerator would give $v_p(w_0') = -(n{+}1) \le -2$ by
normality, whereas $v_p(S') \ge 0$ and $v_p(w_\nu'/w_\nu) \ge -1$ force
$v_p(w_0') \ge -1$; comparing polar parts at each such $p$, after
replacing the $\mu_\nu$ by a $\mathbb{Q}$-basis, shows every $w_\nu$ has
trivial divisor outside $K$, hence lies in $K^\times$; and if $w_0$ had
total degree $D \ge 3$, the top-degree coefficients would be constants
and the residue lemma applied to the coefficients of weight $D-1$ would
kill all of weight $D$, a contradiction.  The case $|\alpha| = 2$ of the
same coefficient comparison gives $\alpha_{ab}' = 0$.

\begin{quote}
\emph{Two hypotheses worth isolating.}  Algebraic independence of the
$\ell_a$ alone would \emph{not} suffice: for $g_1 = x$, $g_2 = 2x$ over
$K = \overline{\mathbb{Q}}(x)$ the logarithms $\ln x, \ln 2x$ are
algebraically independent over $K$, yet $E$ acquires the new constant
$\ln 2$, and new constants are exactly what would break the degree and
$K$-rationality claims.  It is the $\mathbb{Q}$-basis choice---i.e.\
multiplicative independence modulo constants---that excludes this, and
it is therefore a hypothesis of the lemma rather than a convenience.
Separately, the passage from ``$S$ agrees on an interval with an
elementary function'' to the differential-algebra hypothesis is the
standard one: germs of meromorphic functions on the interval form a
differential field extending $E$ with the same constants on the
subfield generated, and an elementary function generates an elementary
extension of $E$ inside it.
\end{quote}

Differentiating and matching the coefficient of each $\ell_a$
gives
\[
  \sum_b \lambda_{ab}\, \frac{g_b'}{g_b}
  \;=\; \sum_b \alpha_{ab}\, \frac{g_b'}{g_b} \;+\; \beta_a',
  \qquad \alpha_{ab}' = 0 .
\]
The differential $\beta_a'\,dp$ is exact, so all its residues vanish;
taking residues along the divisor of $g_b$ (whose logarithmic
derivative has residues equal to the orders of $g_b$) yields
$\lambda_{ab} = \alpha_{ab}$ for $a \neq b$.  Since $\alpha$ enters
$w_0$ through the symmetric monomials $\ell_a\ell_b$, the matrix
$(\alpha_{ab})_{a\neq b}$ is symmetric; hence
$\lambda_{ab} = \lambda_{ba}$.
\end{proof}

\paragraph{Step 4: exact computation.}
The invariant~\eqref{eq:app-W} splits into the two sectors.  Computing
all residues and divisors exactly in the algebraic number field
$\mathbb{Q}(i, \sqrt{3})$---every polynomial that occurs splits into
linear factors there, with all roots twelfth roots of unity---gives,
with $\zeta = e^{i\pi/6}$ and $e_{\zeta^m}$ the place $u = \zeta^m$,
\begin{equation}
  W_A = \frac{1}{\sqrt{3}}\Bigl(
    e_{\zeta}\wedge e_{\zeta^{9}}
  + e_{\zeta^{7}}\wedge e_{\zeta^{11}}
  + e_{\zeta^{9}}\wedge e_{\zeta^{5}}
  - e_{\zeta}\wedge e_{\zeta^{5}}
  - e_{\zeta^{7}}\wedge e_{\zeta^{3}}
  - e_{\zeta^{3}}\wedge e_{\zeta^{11}}
  \Bigr) \;\neq\; 0,
  \label{eq:app-WA}
\end{equation}
and $W_B$ is the identical expression in the places $v = \zeta^m$.  An
independent $40$-digit floating-point computation of all residues and
places agrees entry for entry.

\paragraph{Step 5: no cross-sector cancellation.}
The two sectors live on different curves over
$\overline{\mathbb{Q}}(\kappa)$, joined in the compositum
$L = \overline{\mathbb{Q}}(\kappa, w, w_2)$.  Under
$\kappa = 2u/(1+u^2)$ the places $u = \zeta^{\pm 1}$ lie over $\kappa =
2/\sqrt{3}$, the places $u = \zeta^{5}, \zeta^{7}$ over $\kappa =
-2/\sqrt{3}$, and $u = \zeta^{3}, \zeta^{9}$ over $\kappa = \infty$;
every basis element of~\eqref{eq:app-WA} has at least one slot at a
place over $\kappa = \pm 2/\sqrt{3}$.  Sector-$B$ functions have zeros
and poles only over $\kappa \in \{\pm 4/\sqrt{3}, \infty, 0\}$, so
their divisors are supported away from every fibre $\kappa = \pm
2/\sqrt{3}$, and pulling both sectors back to $L$ (divisor maps are
injective on $K^\times\!/\overline{\mathbb{Q}}^\times$, and fibres over
distinct $\kappa$-values are disjoint) shows that no
$\Lambda^2$-combination from sector $B$ can meet the sector-$A$ basis
elements.  Hence $W \neq 0$ in
$\overline{\mathbb{Q}} \otimes \Lambda^2_{\mathbb{Q}}
(L^\times\!/\overline{\mathbb{Q}}^\times)$.

This proves the theorem for the low branch: if $F|_{(0,1/4)}$ were
elementary, then $F'$,
hence $I$, would be elementary; by Steps 2--3 the dilogarithmic part of
$I$ would be elementary, so by Lemma~\ref{lem:liouville-dilog} the
invariant $W$ would vanish---contradicting Steps 4--5.

\paragraph{Step 6: the wide regime.}
For $Y \in (\tfrac14, \tfrac13)$, i.e.\ $\kappa > 1$, the variance band
pinches off where $\sin(\theta + \pi/6) = 1/\kappa$, and
\eqref{eq:app-I} holds with the $\theta$-range restricted to the
complement of the pinch window.  In the variable $t = \tan(\theta/2)$
the pinch condition is the quadratic
$(2{+}\kappa)t^2 - 2\sqrt{3}\kappa t + (2{-}\kappa) = 0$, whose roots
\[
  \tau_{1,2} \;=\; \frac{\sqrt{3}\kappa \mp 2\sqrt{\kappa^2 - 1}}
  {2 + \kappa}
  \;=\; \frac{\sqrt{3}u \mp i(1 - u^2)}{u^2 + u + 1}
\]
are the mirrors $t \mapsto -t$ of the roots of $Q_-$ and remain
rational on the sector-$A$ curve, since
$\sqrt{\kappa^2 - 1} = i\sqrt{1 - \kappa^2}$ generates no new
extension.  The integrand of~\eqref{eq:app-I} vanishes at the pinch, so
no boundary terms arise, and~\eqref{eq:app-Iprime} holds with the
$t$-range $[2{-}\sqrt{3}, \tau_1] \cup [\tau_2, 1]$; we verified the
wide-regime density to $10^{-11}$ and the restricted collapse against a
finite difference.  The wide-regime
representation~\eqref{eq:app-dI} therefore consists of the low-regime
terms plus, for each root $a$, the extra logarithms
$\ln(\tau_1 - a) - \ln(\tau_2 - a)$.  These extra terms cannot disturb
the entries $e_{\zeta^{7}}\wedge e_{\zeta^{11}}$ and
$e_{\zeta}\wedge e_{\zeta^{5}}$ of~\eqref{eq:app-WA}: on sector $A$,
all eight numerators of $\tau_i - a$ factor over
$\mathbb{Q}(i, \sqrt{3})$ with zeros only in
$\{0, \pm i, \zeta^{\pm 2}, \zeta^{\pm 4}\}$ and poles at
$\zeta^{\pm 2}, \zeta^{\pm 4}$---never at
$\zeta^{\pm 1}, \zeta^{\pm 5}$---while on sector $B$ the divisors of
$\tau_i - b$ have no support over the fibres $\kappa = \pm 2/\sqrt{3}$,
because there the $\tau_i$ are real and the roots of $P_1 P_2$ are
strictly complex.  Both slots of the two entries above lie over
$\kappa = \pm 2/\sqrt{3}$, so they survive verbatim, the wide-regime
invariant is nonzero, and $F|_{(1/4,1/3)}$ is not elementary either.
\hfill$\square$

\paragraph{Why the two sectors carry the same pattern.}
The identity $P_2(t; \kappa) = 2\,Q_-(-t;\, \kappa/2)$, together with
$P_1(t;\kappa) = 2\bigl[(1{+}t^2) - \tfrac{\kappa}{2}(1{-}t^2)\bigr]$,
exhibits sector $B$ as the sector-$A$ template---a pair of shifted
sines at angular gap $\pi/3$---evaluated at half the modulus,
$\kappa \mapsto \kappa/2$, and reflected.  The invariant of the
template depends only on its discrete data (the $\pi/3$ gap, which
generates the twelfth-roots-of-unity geometry), not on the modulus
parametrization, which is why $W_A$ and $W_B$ exhibit the identical
cyclotomic pattern.

\medskip
\noindent
Baddoura's theory of integration in finite terms with dilogarithms
\cite{baddoura2006} provides the general context for
Lemma~\ref{lem:liouville-dilog}; the proof given here is
self-contained.  Scripts reproducing every step of the computation
(the reduction, the sixteen-term representation, the exact residues
over $\mathbb{Q}(i,\sqrt{3})$, and the floating-point cross-check) are
part of the reference implementation.

%=====================================================================
\section{Reference computation and endpoint laws for the arcsine
parent}\label{app:arcsine-ref}
%=====================================================================

Every arcsine number quoted in Section~\ref{sec:singular} rests on a reference CDF computed
without simulation, and on two endpoint laws derived in closed form.  This appendix states the
reference method, its validation, and the provenance of each constant.

\paragraph{Reference CDF by two-dimensional quadrature.}
For fixed $X_1=x_1$ and $X_2=x_2$ the map $x_3\mapsto s^2$ is a convex quadratic, so
$\{x_3: s^2\le Y\}$ is an interval with endpoints
\begin{equation}
  \tfrac12\Bigl[(x_1+x_2)\mp\sqrt{12Y-3(x_1-x_2)^2}\,\Bigr],
  \label{eq:x3-interval}
\end{equation}
empty when the radicand is negative.  Hence
$F(Y)=\mathbb E\bigl[G(\mathrm{hi})-G(\mathrm{lo})\bigr]$ with $G$ the parent CDF and the
endpoints clipped to $[0,1]$.  For the arcsine, substituting $x=\sin^2(\pi t/2)$ maps the
expectation to a flat integral over $t\in[0,1]^2$ and removes both endpoint singularities
exactly, leaving a bounded integrand suitable for tensor Gauss--Legendre quadrature.  Doubling
the node count from $1500$ to $3000$ per axis changes $F$ in the eighth decimal, and the values
used here are quoted to that precision.  This construction uses no random numbers and is
independent of the Monte Carlo checks below.

\paragraph{The ceiling law.}
Equation~\eqref{eq:ceiling-law} follows from the exact corner
expansion~\eqref{eq:corner-linear} together with the Dirichlet integral
\begin{equation}
  \int_{\{p+q+2r<1,\;p,q,r>0\}}(pqr)^{-1/2}\,dp\,dq\,dr
  \;=\;\frac{1}{\sqrt2}\cdot\frac{\Gamma(\tfrac12)^3}{\Gamma(\tfrac52)}
  \;=\;\frac{4\pi}{3\sqrt2}\;=\;2.9619219588\ldots,
  \label{eq:dirichlet}
\end{equation}
the factor $1/\sqrt2$ arising from the substitution that turns $2r$ into a simplex coordinate.
Scaling by $(3\delta)^{3/2}$, dividing by $\pi^3$ and summing the six $2$--$1$ vertices
gives the coefficient $24\sqrt{3/2}/\pi^2$.  Symbolic and numerical evaluations
of~\eqref{eq:dirichlet} agree to $5\times10^{-14}$.

Two independent checks confirm the law.  A simulation of $2\times10^{8}$ samples gives a ratio of
the empirical tail to~\eqref{eq:ceiling-law} of $1.0960$, $1.0299$, $1.0065$, $1.0033$ and
$0.9993$ at $\delta=3\times10^{-2}$, $10^{-2}$, $3\times10^{-3}$, $10^{-3}$ and
$3\times10^{-4}$, with a log--log slope of $1.5031$ over the three smallest values.  The
quadrature reference reproduces the same ratios to within $0.06\%$.  For the uniform parent the
same argument predicts exponent $3$ with coefficient $13.5$, which simulation confirms.

\paragraph{The origin constant.}
$C_3$ in~\eqref{eq:singular-smallY} is obtained from the quadrature reference by Richardson
extrapolation of $F(Y)/Y^{3/4}$ in powers of $Y^{1/4}$ over the five nodes
$Y\in[10^{-5},10^{-3}]$; quadratic and cubic extrapolants give $2.6958$ and $2.6935$, and we
quote $C_3=2.694\pm0.002$.  Unlike the ceiling coefficient, no closed form for $C_3$ is known to
us: the corner integral that produces it diverges along the diagonal direction and is cut off by
the support boundary, so the constant is not given by a single local computation.

\paragraph{Fitting protocol for~\eqref{eq:singular-6term}.}
Fitting nodes are $40$ points equispaced on $[0.004,\tfrac14)$ and $28$ on
$[\tfrac14,\tfrac13)$; the validation nodes are $37$ and $25$ points on offset grids sharing no
point with the fitting set.  The objective is unweighted least squares on the reference values,
subject to two exact linear constraints: $b_0=24\sqrt{3/2}/\pi^2$, and continuity at
$Y=\tfrac14$.  The constrained problem is solved in one linear system through its
Karush--Kuhn--Tucker conditions, so the constraints hold to arithmetic precision rather than
approximately.  All errors reported in Section~\ref{sec:singular} are maximum absolute errors on
the \emph{validation} nodes.

\end{document}